\title{Nonparametric Quantile Regression for Homogeneity Pursuit in Panel Data Models}
\author{Xiaoyu Zhang${}^a$, Di Wang${}^b$, Heng Lian${}^c$ and Guodong Li${}^a$
	\\ \textit{${}^a$Department of Statistics and Actuarial Science, University of Hong Kong}
		\\ \textit{${}^b$School of Mathematical Sciences, Shanghai Jiao Tong University}
	\\ \textit{and ${}^c$Department of Mathematics, City University of Hong Kong}}
\let\counterwithin\relax
\newtheorem{lemma}{Lemma}
\newtheorem{remark}{Remark}
\newtheorem{theorem}{Theorem}
\DeclareMathOperator*{\argmin}{arg\,min}
\newcommand{\bm}{\boldsymbol}
\newcommand*{\addFileDependency}[1]{
  \typeout{(#1)}
  \@addtofilelist{#1}
  \IfFileExists{#1}{}{\typeout{No file #1.}}
}
\begin{document}

\setlength{\parindent}{16pt}

\date{\today}

\maketitle

\begin{abstract}
Many panel data have the latent subgroup effect on individuals, and it is important to correctly identify these groups since the efficiency of resulting estimators can be improved significantly by pooling the information of individuals within each group.
However, the currently assumed parametric and semiparametric relationship between the response and predictors may be misspecified, which leads to a wrong grouping result, and the nonparametric approach hence can be considered to avoid such mistakes. 	
Moreover, the response may depend on predictors in different ways at various quantile levels, and the corresponding grouping structure may also vary.
To tackle these problems, this article proposes a nonparametric quantile regression method for homogeneity pursuit in panel data models with individual effects, and a pairwise fused penalty is used to automatically select the number of groups.
The asymptotic properties are established, and an ADMM algorithm is also developed.
The finite sample performance is evaluated by simulation experiments, and the usefulness of the proposed methodology is further illustrated by an empirical example.

\end{abstract}

\noindent\textbf{Keywords: }\textit{Homogeneity pursuit; Nonparametric approach; Oracle property; Panel data model; Quantile regression.}

\newpage
\section{Introduction}

Panel data analysis is an important topic in econometrics and has been well studied in economics, medicine, climatology, etc.; see \cite{hsiao2005panel} and \cite{baltagi2008econometric} for a comprehensive exploitation. 
In the classical settings, a fixed or random effect is assumed to take into account the heterogeneity across individuals, while the covariates contribute to the responses in the same way for all individuals.
This makes the estimation more precise and efficient, especially when the number of time points is relatively small.
However, more and more recent evidence has shown that the involvement of covariates may vary for different individuals 
\citep{hsiao1997panel,browning2007heterogeneity,su2013testing}, 
and it usually comes in terms of subgroups. 
For example, in financial markets, stocks from the same sector may share common characteristics 
\citep{ke2015homogeneity}, 
and a spatial geographic grouping is commonly observed in the data from economic geography 
\citep{fan2011sparse,bester2016grouped}. 
As a result, panel data models with latent subgroup effects
\citep{wang2018homogeneity,lian2019homogeneity} have become more and more popular in the literature, 
and this article concentrates on this type of appealing models.

There are two important issues for panel data models with subgroup effects: identifying latent groups and modeling homogeneous structures within each group; see Section \ref{sec:modmeth} for details. 
First, to identify the groups, the K-means approach was proposed for panel data models \citep{lin2012estimation,bonhomme2015grouped,ando2016panel}, and the mixture model can also be used \citep{xu1995alternative,viele2002modeling,shen2015inference}.
However, the number of subgroups needs to be predetermined for both methods and, for mixture models, we are even required to specify the form of distributions for each latent group.
The binary segmentation \citep{ke2016structure} is another method to identify the latent groups in the literature. It is motivated by the idea of structural change detection \citep{bai1998estimating}, and an initial estimation is conducted for each individual separately. 
Recently, the classifier-Lasso method \citep{su2016identifying} has been proposed in the literature for simultaneous group classification and parameter estimation \citep{su2018identifying,su2019sieve}, while the number of subgroups needs to be predetermined.
Along this line, the concave fused penalty \citep{ma2017concave,zhu2018cluster,zhang2019robust} can be used to automatically select the number of subgroups, and the penalty functions are concave on $[0,\infty)$ such that the oracle property can be obtained \citep{fan2001variable,zhang2010nearly}.
This article adopts the concave fused penalty for our homogeneity pursuit.

Secondly, for panel data models with subgroup effects, it is also important to consider a suitable homogeneous structure for each latent group, and the possible misspecification may also lead to a wrong homogeneity pursuit.
Parametric and even linear models have been commonly used in this literature, where parameters are identical within the group and heterogeneous across groups; see \citet{su2016identifying}, \citet{ke2016structure}, \citet{wang2020identifying}, and references therein.
On the other hand, due to the appearance of more and more large-scale datasets in real application, semiparametric and even nonparametric models recently have gained great popularity in panel data analysis since they can provide more flexible modeling in empirical studies \citep{wu2006nonparametric}.
\citet{su2019sieve} proposed a semiparametric time-varying panel data model, and the coefficients are allowed to vary over both individuals and time points. \citet{lian2019homogeneity} considered a single index model with both nonparametric functions and index parameters being homogeneous within groups while heterogeneous across groups.
This article assumes a fully nonparametric structure to each latent group, and it is supposed to enjoy greater flexibility; see Section \ref{sec:real} for more empirical evidence.

In the meanwhile, quantile regression has attracted more and more attention from statisticians and economists since its appearance in \citet{koenker1978regression}. 
Compared with conditional mean models, it enjoys much more flexibility since it can be used to investigate the structures of the response at any quantile level \citep{Koenker2005}.
Quantile regression for panel or longitudinal data can be dated back to \citet{koenker2001quantile} and has been extensively studied in the literature; see, e.g., \citet{koenker2004quantile}, \citet{kato2012asymptotics}, and references therein.
For homogeneity pursuit, \cite{ZHANG201954} considered a composite quantile regression estimation in linear panel data models, while the grouping structure is assumed to be the same for all quantile levels.
It is noteworthy that, for models with latent subgroups, quantile regression can be used to explore different grouping patterns, as well as different structures of responses within a group, for various quantile levels. 
Moreover, it also has a long history to use nonparametric quantile regression to flexibly accommodate nonlinear effects in conditional quantiles \citep{hendricks1992hierarchical,he1994convergence,he1997quantile,yu1998local}.
This motivates us to conduct a nonparametric quantile regression for homogeneity pursuit in panel data models with individual effects.

This article has three main contributions.
First, Section \ref{sec:modmeth} proposes a new homogeneity pursuit method based on nonparametric quantile regression, and the concave fused penalty is employed to identify the unobserved group structure.
It is supposed to be able to capture the relationship between covariates and responses more flexibly. 
Secondly, Section \ref{sec:thm} establishes the oracle property of the concave fused penalized estimator under mild mixing condition, and the proposed subgroup selection procedure is also theoretically verified to be consistent. 
Finally, a novel and efficient alternating direction method of multipliers (ADMM) algorithm is introduced in Section \ref{sec:algo} to overcome the optimization challenge due to the non-differentiability of quantile loss and the large number of penalty terms with non-convexity. 

In addition, Section \ref{sec:sim} conducts simulation experiments to evaluate the finite-sample performance of the proposed methodology, and its usefulness is further demonstrated by a real example in Section \ref{sec:real}. The proofs of theorems are given in a separated  supplementary file.

\section{Model Settings and Methodology} \label{sec:modmeth}

\subsection{Quantile regression models for panel data with subgroups} \label{sec:models}
Consider the panel data of $\{y_{it},x_{it}\}$ with $1\leq i\leq n$ and $1\leq t\leq T$, where $n$ is the number of individuals, $T$ is the number of time points, and $y_{it}$ is the response variable. For the ease of demonstration, without loss of generality, we assume that $x_{it}$ is a univariate covariate of interest, and the methodology for the multivariate case is similar; see Remark \ref{rmk:2} for details.

For any level $0<\tau<1$, the $\tau$-th quantile of $y_{it}$ conditional on the covariate $x_{it}$ can be denoted by
\begin{equation}\label{eq:ConditionalQuantile}
	Q_\tau(y_{it}|x_{it}) = \mu_{i,\tau} + m_{i,\tau}(x_{it}),
\end{equation}
where $\mu_{i,\tau}$ is the individual effect, $m_{i,\tau}(\cdot)$ is an unknown smooth function, and both of them may vary for different individuals and quantile levels.
Note that model \eqref{eq:ConditionalQuantile} is not identifiable since, for any nonzero constant $c_{i,\tau}$,
\begin{equation}
Q_\tau(y_{it}|x_{it}) = (\mu_{i,\tau}+c_{i,\tau}) + (m_{i,\tau}(x_{it})-c_{i,\tau})\equiv \widetilde{\mu}_{i,\tau} + \widetilde{m}_{i,\tau}(x_{it}),
\end{equation}
and hence the following constraint on $m_{i,\tau}(\cdot)$ is added,
\begin{equation}\label{eq:constraint}
\int_{\mathcal{X}}m_{i,\tau}(x)dx=0 \hspace{3mm}\text{for any $\tau$ and $1\leq i\leq n$},
\end{equation}
where $\mathcal{X}$ is the support of $x_{it}$.
Moreover, define the random error $e_{it}(\tau)=y_{it}-\mu_{i,\tau} - m_{i,\tau}(x_{it})$, and it holds that $\mathbb{P}(e_{it}(\tau)\leq 0|x_{it})=\tau$ for all $1\leq i\leq n$ and $1\leq t\leq T$.

Individuals of panel data are usually sampled from different backgrounds and are hence with different individual characteristics so that an abiding feature of the data is its heterogeneity. 
Thus assuming homogeneity, that is, all $m_{i,\tau}(\cdot)$'s are equal, may be too strong. 
To solve the problem, one can estimate the functions for each individual separately. It is actually under the assumption of total heterogeneity, and we may encounter difficulty in obtaining consistent estimation when the number of time points $T$ is fixed or relatively small.
On the other hand, for some panel data, there is a subgroup effect although it is usually unobserved, and the individuals within one group may share the same structure up to an individual effect. 
This makes it possible to achieve a more precise and efficient estimation for $m_{i,\tau}(\cdot)$'s by pooling the information of all individuals within one group.

As a result, we assume that, at a quantile level $\tau$, all individuals belong to $K$ different latent groups, and the nonparametric function, $m_{i,\tau}(\cdot)$'s,  are heterogeneous across groups but homogeneous within a group.
Specifically, there exists an unknown partition $\{G_{k,\tau}:k=1,2,\dots,K\}$ of the individual index set $\{1,2,\dots,n\}$, which can be used to denote the subgroup membership. We impose the following group structure on the model:
\begin{equation}
	m_{i,\tau}(\cdot)=\begin{cases}
		m_{(1),\tau}(\cdot), & \text{when }i\in G_{1,\tau}\\
		m_{(2),\tau}(\cdot), & \text{when }i\in G_{2,\tau}\\
		\vdots & \vdots\\
		m_{(K),\tau}(\cdot), & \text{when }i\in G_{K,\tau}
	\end{cases}
\end{equation}
where $m_{(k),\tau}(\cdot)$'s are unknown functions to be estimated, and $K$ is an unknown integer depending on the quantile level $\tau$.

\begin{remark}\label{rmk:1}
	For each individual $i$, consider the case that response $y_{it}$ has the conditional quantile at model \eqref{eq:ConditionalQuantile} for all $\tau\in[0,1]$.
	Suppose that the data generating process for $y_{it}$ exists, and it then holds that $\mu_{i,\tau} + m_{i,\tau}(x)$ is an increasing function with respect to $\tau$ for all $x\in\mathcal{X}$.
	Moreover, the corresponding data generating process has the form of $y_{it}=\mu_{i,U_{it}} + m_{i,U_{it}}(x_{it})$, where the innovations $\{U_{it},1\leq t\leq T\}$ are independent and follow uniform distribution on $[0,1]$; see \cite{Koenker2005} and \cite{Koenker_Xiao2006}.
	In addition, consider the function $m_{i,\tau}(x)$ with respect to $\tau$ and $x$ for a specific individual $i$. It may share the same form with some individuals at one quantile level $\tau$, that is, they belong to the same subgroup, while its group mates may change for another quantile level.
	Refer to the data generating process in Section 5.2 for an illustrating example.
	As a result, compared with panel data models with subgroup structures in conditional mean \citep[e.g.][]{su2016identifying,su2018identifying}, model \eqref{eq:ConditionalQuantile} allows a more flexible framework for homogeneity pursuit.
	
\end{remark}

\subsection{Nonparametric estimation with a concave fused penalty} \label{sec:method}

This article uses polynomial splines to flexibly approximate the unknown smooth functions.
Without loss of generality, assume that the support of $x_{it}$ is $[0,1]$.
Let $0=\nu_0<\nu_1<\dots<\nu_{H'}<\nu_{H'+1}=1$ be a partition of $[0,1]$, and denote the $H'+1$ subintervals by $I_{h'}=[\nu_{h'},\nu_{h'+1})$ with $0\leq h'\leq H'$, where $H'\equiv H'(nT)$ increases with the value of $nT$.
Throughout the numerical studies, we simply take equally spaced knots, and it works reasonably well; see simulation results in Section \ref{sec:sim}. 

Denote by $S_q$ the space of polynomial splines of order $q$.
Note that any function $f(\cdot)$ from $S_q$ satisfies:
(i) on each $I_{h'}$, $0\leq h'\leq H'$, $f(\cdot)$ is a polynomial of degree $q-1$;
(ii) $f(\cdot)$ is globally $q-2$ times continuously differentiable on $[0,1]$.
For the knots sequences $0=\nu_0<\nu_1<\dots<\nu_{H'}<\nu_{H'+1}=1$, the $q$-th order B-splines are defined recursively,
\begin{equation}
    \begin{split}
        B_h^1(x)&=\begin{cases}
            1, & \nu_h\leq x<\nu_{h+1}\\
            0, & \text{otherwise}
        \end{cases},\\
        \text{and}~B_h^q(x)&=\frac{x-\nu_h}{\nu_{h+q-1}-\nu_h}B_h^{q-1}(x)+\frac{\nu_{h+q}-x}{\nu_{h+q}-\nu_{h+1}}B_{h+1}^{q-1}(x).
    \end{split}
\end{equation}
Consider the B-spline basis vector, $\bm{B}(s):=(B_1^q(s),\dots,B_H^q(s))^\top$, where $H$ is determined by the number of knots $H'$ and B-spline order $q$, and we may approximate $m_{i,\tau}(x)$ by
\begin{equation}
	m_{i,\tau}(x)\approx \sum_{h=1}^H B_h^q(x)\beta_{ih}(\tau) = \bm{B}(x)^\top\bm{\beta}_{i,\tau},
\end{equation}
where $\bm{\beta}_{i,\tau}=(\beta_{i1}(\tau),\dots,\beta_{iH}(\tau))^\top\in\mathbb{R}^H$ is a coefficient vector.
In the meanwhile, to tackle the identification restriction at \eqref{eq:constraint}, we further impose that
\begin{equation}\label{eq:contraint2}
	\int_0^1\bm{B}(x)^\top\bm{\beta}_{i,\tau}dx=\sum_{h=1}^H\beta_{ih}(\tau)\int_0^1B_h^q(x)dx=0.
\end{equation}

For equally spaced knots on the interval $[0,1]$, by the definition of the B-spline basis functions, it can be verified that all $H$ integrals $\int_0^1B_h^q(x)dx$ with $1\leq h\leq H$ are identical, and hence the constraint at \eqref{eq:contraint2} can be simplified to  $\sum_{h=1}^H\beta_{ih}(\tau)=\bm{1}_H^\top\bm{\beta}_{i,\tau}=0$, where $\bm{1}_H=(1,1,\dots,1)^\top\in\mathbb{R}^H$.
Similar to the existing literature on the panel data models with fixed effects \citep{baltagi2008econometric}, we first construct an orthogonal rotation matrix $\bm{O}\in\mathbb{R}^{H\times H}$ such that  $\bm{O}^\top=\left[H^{-1/2}\bm{1}_H,\widetilde{\bm{O}}^\top\right]$, where $\widetilde{\bm{O}}\in\mathbb{R}^{(H-1)\times H}$.
Denote by $\bm{\Pi}(x)=H^{1/2}\widetilde{\bm{O}}\bm{B}(x)\in\mathbb{R}^{H-1}$ and $\bm{\theta}_{i,\tau}=H^{-1/2}\widetilde{\bm{O}}\bm{\beta}_{i,\tau}\in\mathbb{R}^{H-1}$ the transformed basis function and transformed coefficient vector, respectively.
Then the smooth function $m_{i,\tau}(x)$ with the constraint at \eqref{eq:constraint} can be approximated by
\begin{equation}\label{eq:approximation}
	m_{i,\tau}(x)\approx\bm{B}(x)^\top\bm{\beta}_{i,\tau} = \bm{\Pi}(x)^\top\bm{\theta}_{i,\tau}.
\end{equation}
The normalization of $H^{1/2}$ in $\bm{\Pi}(x)$ is used to guarantee that the eigenvalues of $\int_0^1\bm{\Pi}(s)\bm{\Pi}(s)^\top ds$ are bounded away from zero and infinity.
Such normalization is only adopted for the convenience of theoretical derivations, and it is neither essential nor necessary in practice. 

For other choices of knots for B-spline approximation with $H$ basis functions, denote $\bm{b}=(b_1,\dots,b_H)^\top$ with $b_h:=\int_0^1B_h^q(x)dx$, and these $b_j$'s may not be identical. 
We alternatively consider another orthogonal rotation matrix $\bm{O}^\top=\left[\bm{b}/\|\bm{b}\|_2,\widetilde{\bm{O}}^\top\right]\in\mathbb{R}^{H\times H}$, and the estimation with constraints similarly can be transformed into an unconstrained problem.
Note that there are many different choices of $\widetilde{\bm{O}}\in\mathbb{R}^{(H-1)\times H}$, while they all lead to the same estimate of $m_{i,\tau}(x)$; see \cite{Wu_Li2014}.
We then can arbitrarily choose one in real applications.

\begin{remark}\label{rmk:2}
The univariate spline approximation can be naturally extended to its multivariate analogs. Consider $p$-dimensional covariates $\bm{x}_{it}=(x_{1it},\dots,x_{pit})^\top\in\mathbb{R}^p$, and we can use B-spline basis functions to represent functions for each coordinate in $\bm{x}_{it}$. Then, as discussed in \citet{friedman2001elements}, the $H_1 \times H_2 \times\cdots\times H_p$ dimensional tensor product basis can be used to approximate the $p$-dimensional smooth function
\begin{equation}
	m_{\tau}(\bm{x}_{it})\approx \sum_{h_1=1}^{H_1}\sum_{h_2=1}^{H_2}\cdots\sum_{h_p=1}^{H_p}B_{h_1}(x_{1it})B_{h_2}(x_{2it})\cdots B_{h_p}(x_p)\beta_{h_1h_2\dots h_p}(\tau),
\end{equation}
and the constraint due to \eqref{eq:constraint} and \eqref{eq:contraint2} can be settled down by a  reparameterization similar to that for the univariate case.
Note that the total number of parameters in the $p$-dimensional spline approximation will increase exponentially with $p$, and hence a larger sample size will be needed for the inference. 
\end{remark}

If one assumes heterogeneity without taking into account the subgroup structure, by using the B-spline approximation at \eqref{eq:approximation}, the estimators could then be obtained by minimizing the loss function,
\begin{equation}
	\frac{1}{nT}\sum_{i=1}^n\sum_{t=1}^T\rho_\tau(y_{it}-\mu_{i,\tau}-\bm{\Pi}(x_{it})^\top\bm{\theta}_{i,\tau}),
\end{equation}
where $\rho_\tau(v)=\tau v-vI\{v\leq 0\}$ is the classical check loss function for quantile regression. Secondly, if assuming homogeneity, i.e. the B-spline coefficient vectors for all individuals are the same or $\bm{\theta}_\tau=\bm{\theta}_{1,\tau}=\bm{\theta}_{2,\tau}=\dots=\bm{\theta}_{n,\tau}$, the estimated B-spline coefficient vector for all individuals can then be obtained by minimizing
\begin{equation}
	\frac{1}{nT}\sum_{i=1}^n\sum_{t=1}^T\rho_\tau(y_{it}-\mu_{i,\tau}-\bm{\Pi}(x_{it})^\top\bm{\theta}_\tau).
\end{equation}
Thirdly, if the group memberships $\{G_1,\dots,G_K\}$ were known, one could combine all individuals belonging to the same group and simultaneously estimate the B-spline coefficient vectors for all groups, 
\begin{equation}\label{eq:oracle}
	\begin{split}
		&\{\widehat{\mu}_{1,\tau}^\text{o},\dots,\widehat{\mu}_{n,\tau}^\text{o},\widehat{\bm{\theta}}_{(1),\tau}^\text{o},...,\widehat{\bm{\theta}}_{(K),\tau}^\text{o}\}\\
		=&\argmin \frac{1}{nT}\sum_{k=1}^K\sum_{i\in G_k}\sum_{t=1}^T\rho_\tau(y_{it}-\mu_{i,\tau}-\bm{\Pi}(x_{it})^\top\bm{\theta}_{(k),\tau}),
	\end{split}
\end{equation}
where $\bm{\theta}_{(k),\tau}$ is the coefficient vector for group $k$ at quantile $\tau$.
As the estimator is based on the true subgroup structure, we call $\{\widehat{\mu}_{1,\tau}^\text{o},\dots,\widehat{\mu}_{n,\tau}^\text{o},\widehat{\bm{\theta}}_{(1),\tau}^\text{o},\dots,\widehat{\bm{\theta}}_{(K),\tau}^\text{o}\}$ the oracle estimator.

This article considers the case that the grouping structure is unknown, and a concave pairwise fused penalized approach is employed to achieve B-spline coefficient estimation and subgroup identification simultaneously. The objective function can be formulated as
\begin{equation} \label{eq:objective}
	\frac{1}{nT}\sum_{i=1}^n\sum_{t=1}^T\rho_\tau(y_{it}-\mu_{i,\tau}-\bm{\Pi}(x_{it})^\top\bm{\theta}_{i,\tau})+\binom{n}{2}^{-1}\sum_{i<j}p_\lambda(\|\bm{\theta}_{i,\tau}-\bm{\theta}_{j,\tau}\|_2),
\end{equation}
where $p_\lambda(\cdot)$ is a concave penalty function on $[0,\infty)$ with a tuning parameter $\lambda$.
The pairwise fused penalty shrinks some of the pairs $\|\bm{\theta}_{i,\tau}-\bm{\theta}_{j,\tau}\|_2$ to zero, and the tuning parameter $\lambda$ controls the number of selected subgroups. As a result, we can partition the individuals into several subgroups based on the values of $\bm{\theta}_{i,\tau}-\bm{\theta}_{j,\tau}$, for all $i<j$. 
Specifically, for a fixed penalty parameter $\lambda$, denote the above penalized estimators by $\widehat{\bm{\theta}}_{i,\tau}(\lambda)$ with $1\leq i \leq n$. Suppose that there are $\widehat{K}(\lambda)$ or $\widehat{K}$ distinct values. Denoted by $\widehat{\bm{\theta}}_{(1),\tau}(\lambda),\dots,\widehat{\bm{\theta}}_{(\widehat{K}),\tau}(\lambda)$, and then the estimated groups can be defined as $\widehat{G}_{k,\tau}(\lambda)=\{i:\widehat{\bm{\theta}}_{i,\tau}(\lambda)=\widehat{\bm{\theta}}_{(k),\tau}(\lambda)\}$ for $1\leq k\leq \widehat{K}$.

There are many penalty functions available in the literature, and they include the Lasso proposed by \cite{tibshirani1996regression}, adaptive Lasso by \cite{zou2006adaptive}, smoothly clipped absolute deviation (SCAD) by \cite{fan2001variable}, and minimax concave penalty (MCP) by \cite{zhang2010nearly}.
This article considers the SCAD function, a commonly used penalty function in high-dimensional models, and it has the form of
\begin{equation}
	\begin{split}
		p_\lambda(u) = \begin{cases}
			\lambda u, & \text{if }0\leq u <\lambda,\\
			(a-1)^{-1}[a\lambda u-(u^2+\lambda^2)/2], & \text{if }\lambda\leq u\leq a\lambda,\\
			(a+1)\lambda^2/2, & \text{if }u>a\lambda,
		\end{cases}
	\end{split}
\end{equation}
where $a>2$ is a pre-specified constant, and the concaveness of $p_{\lambda}(x)$ on $[0,\infty)$ can guarantee the oracle property of many high-dimensional problems; see \citet{fan2001variable}.
The Lasso penalty with $p_{\lambda}(x)=\lambda |x|$ is expected to lead to biased estimates for variable selection problems \citep{zou2006adaptive} and may not be able to correctly recover subgroups. Generally speaking, a variety of penalties can be tried in the literature of penalized variable selection, and there is no clear winner. 
The SCAD pairwise fusion penalty will be used in this article to achieve accurate nonparametric estimation and subgroup identification.

\subsection{Tuning parameter selection} \label{sec:SIC}
The proposed concave pairwise fusion penalized estimation in the previous subsection relies on the choice of tuning parameter $\lambda$, which can heavily affect the performance on both nonparametric estimation and subgroup selection.
This article suggests to select $\lambda$ via Schwarz information criterion (SIC), and it has the form of
\begin{equation}\label{eq:SIC}
    \text{SIC}(\lambda)=\log\left(\sum_{i=1}^n\sum_{t=1}^T\rho_\tau(y_{it}-\widehat{\mu}_{i}(\lambda)-\bm{\Pi}(x_{it})^\top\widehat{\bm{\theta}}_{i,\tau}(\lambda))\right)+\widehat{K}(\lambda)H\log(nT)/(nT),
\end{equation}
where $\{\widehat{\mu}_i(\lambda),\widehat{\bm{\theta}}_{i,\tau}(\lambda)\}$'s and $\widehat{K}(\lambda)$ are the estimators and number of groups selected for the given $\lambda$, respectively. Note that SIC is widely known as Bayesian information criterion (BIC) outside the area of quantile regression; however, we follow the quantile regression literature \citep{kim2007quantile,wang2009quantile} to call it SIC. Its theoretical justification will be established later in Section \ref{sec:thm}.

\section{Asymptotic Theory} \label{sec:thm}

This section investigates the asymptotic properties of both oracle and pairwise fusion penalized estimators, and the tuning parameter selection based on the Schwartz information criterion is also justified theoretically. 

To measure the temporal dependency in the panel data, we adopt the $\alpha$-mixing condition \citep{bradley2005basic}. To be specific, for any stationary process $\{X_t\}$, we denote the $\alpha$-mixing coefficients, for $l\geq 1$,
\begin{equation}
    \alpha(l)=\alpha(X_{-\infty:t},X_{t+l:\infty}).
\end{equation}
Throughout this section, we impose the following assumptions.
\begin{enumerate}
	\item[(A1)] Let $e_{it}(\tau)=y_{it}-\mu_{i,\tau}-m_{i,\tau}(x_{it})$. For each $\tau\in(0,1)$,  random variables $\{e_{it}(\tau),x_{it}\}$ are independent across $i$, and are strictly stationary and $\alpha$-mixing for each $i$, with mixing coefficient $\alpha(l)=O(r^l)$ for some $0<r<1$.
	\item[(A2)] The distribution of $x_{it}$ is supported on $[0,1]$ with density bounded away from zero and infinity, for all $1\leq i\leq n$.
	\item[(A3)] Let $f_i(\cdot|x_{it})$ be the conditional density of $e_{it}(\tau)$. Assume that $f_i(\cdot|x_{it})$ is bounded and bounded away from zero in a neighborhood of zero, uniformly over the support of $x_{it}$, for all $1\leq i\leq n$. The derivative of $f_i(\cdot|x_{it})$ is uniformly bounded in a neighborhood of zero over the support of $x_{it}$.
	\item[(A4)] The functions $m_{i,\tau}(\cdot)$ are in the H\"{o}lder space of order $d\geq2$, for all $1\leq i\leq n$; that is, $|m_{i,\tau}^{(u)}(x)-m_{i,\tau}^{(u)}(y)|\leq C|x-y|^v$ for $d=u+v$ and $u$ is the largest integer strictly smaller than $d$, where $m_{i,\tau}^{(u)}(\cdot)$ is the $u$-th derivative of $m_{i,\tau}(\cdot)$.
	\item[(A5)] There is a partition $\{G_{1,\tau},\dots,G_{K,\tau}\}$ of $\{1,\dots,n\}$ such that $m_{i,\tau}(\cdot)= m_{(k),\tau}(\cdot)$ for all $i\in G_{k,\tau}$. For each $\tau\in(0,1)$, the number of groups $K$ is fixed, and $|G_{k,\tau}|/n\to c_{k,\tau}$ for some $c_{k,\tau}\in(0,1)$.
	\item[(A6)] Assume that $T$ and $H$ diverge to infinity, and $n$ possibly diverges to infinity. 
\end{enumerate}

Assumptions (A1) contains the mild geometric mixing condition for the data, which is common in panel data analysis. The boundedness assumption for $x_{it}$ in (A2) is tied to the estimation approach using B-splines whose basis functions are usually constructed on a bounded interval. In practice, one can always transform the predictors into $[0,1]$ before the analysis. (A3) contains some conditional density assumptions and (A4) contains smoothness conditions, which are standard in nonparametric quantile regression literature \citep{he1994convergence}. It is noteworthy that the distributional condition in  (A3) encompasses a large class of distributions and is more general than the Gaussian or sub-Gaussian error condition in the penalized least squares methods in \citet{ma2017concave}. Assumption (A5) requires that the number of subgroups is fixed and none of the subgroups vanishes asymptotically, and it agrees with many applications, in which $K$ is expected to be small, and thus substantial reduction of the unknown parameters can be achieved.
The asymptotic theory in this section is in the sense of (A6). 

To derive the asymptotic theory for the proposed estimator, we first need to specify the population true values of B-spline coefficient vectors. For any subgroup $k$, let $f_{(k)}(\cdot|x_{it})=|G_{k,\tau}|^{-1}\sum_{i\in G_{k,\tau}}f_i(\cdot|x_{it})$ be the average conditional density of the error term for all $i\in G_{k,\tau}$. Throughout the article, $\bm{\theta}_{0(k),\tau}$ is defined as the spline coefficients in the best B-spline approximation of $m_{(k),\tau}(\cdot)$,
\begin{equation}
    \bm{\theta}_{0(k),\tau}=\argmin_{\bm{\theta}\in\mathbb{R}^{H-1}}\mathbb{E}\{f_{(k)}(0|x_{it})[m_{(k),\tau}(x_{it})-\bm{\Pi}(x_{it})^\top\bm{\theta}]^2\}.
\end{equation}
By the approximation theory of B-splines \citep{de1978practical}, it holds that
\begin{equation}
    \sup_{x\in[0,1]}|m_{(k),\tau}(x)-\bm{\Pi}(x)^\top\bm{\theta}_{0(k),\tau}|\leq CH^{-d},
\end{equation}
for some constant $C$ independent of $n$, $T$ and $H$.

We first state asymptotic properties of the oracle estimator $\widehat{\bm{\theta}}_{(k),\tau}^\text{o}$, assuming that the group memberships are known. Denote by $\{x_{it}\}$ the collection of $x_{it}$ for all $1\leq i\leq n$ and $1\leq t\leq T$.

\begin{theorem} \label{thm:1}
	Suppose that $H^2\log(n)^2/T\to0$, $H^3\log(T)/T\to0$, $Hn^2\log(n)^3/T\to0$, and Assumptions (A1)-(A6) hold. For the oracle estimator $\widehat{\bm{\theta}}^\textup{o}_{(k),\tau}$ with $1\leq k\leq K$ at \eqref{eq:oracle}, it holds that
	\begin{itemize}
		\item[(i)] $\max_{1\leq i\leq n}|\widehat{\mu}_{i,\tau}-\mu_{i,\tau}|=O_p(\sqrt{\log(n)/T}+H^{-d})$ and $\|\widehat{\bm{\theta}}_{(k),\tau}^\textup{o}-\bm{\theta}_{0(k),\tau}\|_2=O_p(\xi(n,T))$, where $\xi(n,T)=\sqrt{H/(nT)}+H^{-d}$.
		\item[(ii)] For any $x\in(0,1)$,
		\begin{equation}
			\frac{\bm{\Pi}(x)^\top(\widehat{\bm{\theta}}_{(k),\tau}^\textup{o}-\bm{\theta}_{0(k),\tau})}{\sqrt{\widehat{\textup{Var}}(\widehat{m}_{(k),\tau}(x)|\{x_{it}\})}}\to N(0,1),
		\end{equation}
		where $\widehat{\textup{Var}}(\widehat{m}_{(k),\tau}(x)|\{x_{it}\})=\tau(1-\tau)\bm{\Pi}^\top(x)(\bm{Z}_k^\top \bm{f}_k\bm{Z}_k)^{-1}(\bm{Z}_k^\top\bm{Z}_k)(\bm{Z}_k^\top \bm{f}_k\bm{Z}_k)^{-1}\bm{\Pi}(x)$, $\bm{Z}_k=[\bm{\Pi}(x_{11})I\{1\in G_k\},\dots,\bm{\Pi}(x_{1T})I\{1\in G_k\},\bm{\Pi}(x_{21})I\{2\in G_k\},\dots,\bm{\Pi}(x_{nT})I\{n\in G_k\}]^\top$ and $\bm{f}_k=\textup{diag}(f_k(0|x_{11})I\{1\in G_k\},\dots,f_k(0|x_{1T})I\{1\in G_k\},\dots,f_k(0|x_{nT})I\{n\in G_k\})$.
		Moreover, if $H(nT)^{-1/(2d+1)}\to \infty$,
		\begin{equation}\label{eq:asymp_norm}
		\frac{\widehat{m}_{(k),\tau}(x)-m_{(k),\tau}(x)}{\sqrt{\widehat{\textup{Var}}\left(\widehat{m}_{(k),\tau}(x)|\{x_{it}\}\right)}}\to N(0,1),
		\end{equation} 
		where $\widehat{m}_{(k),\tau}(x)=\bm{\Pi}(x)^\top\widehat{\bm{\theta}}_{(k),\tau}^\textup{o}$.
	\end{itemize}
\end{theorem}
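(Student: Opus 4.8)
\emph{Proof strategy for Theorem \ref{thm:1}.}
Both parts rest on the convexity of the check loss $\rho_\tau$. Write the objective in \eqref{eq:oracle} in local coordinates that center a candidate $(\mu_i,\bm\theta_{(k)})$ at $(\mu_{i,\tau},\bm\theta_{0(k),\tau})$, and for $i\in G_k$ put $\widetilde e_{it}=e_{it}(\tau)+a_{it}$ with $a_{it}=m_{(k),\tau}(x_{it})-\bm\Pi(x_{it})^\top\bm\theta_{0(k),\tau}$, so that $\sup_{i,t}|a_{it}|\le CH^{-d}$ by the B-spline approximation bound stated above. For part (i) the plan is to show that the recentered criterion, viewed as a convex function of $\bm\phi=(\eta_1,\dots,\eta_n,\bm\gamma_1^\top,\dots,\bm\gamma_K^\top)^\top$ with $\eta_i$ the deviation of $\mu_i$ and $\bm\gamma_k$ that of $\bm\theta_{(k)}$, is, with probability tending to one, strictly positive on the set $\{\max_i|\eta_i|=C_1(\sqrt{\log(n)/T}+H^{-d}),\ \max_k\|\bm\gamma_k\|_2=C_2\xi(n,T)\}$ once $C_1,C_2$ are large; convexity then forces the minimizer inside, which is the assertion of (i).

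To carry this out I would apply the elementary identity $\rho_\tau(u-v)-\rho_\tau(u)=v(I\{u\le 0\}-\tau)+\int_0^v(I\{u\le s\}-I\{u\le 0\})\,ds$ to each summand, splitting the recentered criterion into a linear term $S_{nT}(\bm\phi)$ and a nonnegative remainder $R_{nT}(\bm\phi)$. For $S_{nT}$: its conditional mean given $\{x_{it}\}$ is, by (A3), of order $H^{-d}$ per observation (the bias from $a_{it}$, whose aggregate contribution in the $\bm\gamma_k$ directions is further reduced because $\bm\theta_{0(k),\tau}$ is the $f_{(k)}$-weighted projection), while its centered part is a sum of bounded $\alpha$-mixing variables to which an exponential inequality under (A1) applies; done uniformly over $1\le i\le n$ this yields $O_p(\sqrt{T\log n}+TH^{-d})$ in each $\eta_i$ direction and $O_p(\sqrt{nTH}+nTH^{-d})$ in each $\bm\gamma_k$ direction, using $\sup_x\|\bm\Pi(x)\|_2^2=O(H)$ and that $\mathbb E[\bm\Pi(x_{it})\bm\Pi(x_{it})^\top]$ has eigenvalues bounded away from $0$ and $\infty$ by (A2) and the normalization of $\bm\Pi$. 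For $R_{nT}$: its conditional mean dominates a positive-definite quadratic form $c\,\bm\phi^\top\bm\Sigma_{nT}\bm\phi$ near the truth, by (A3) and the fact that the constant function is not in the span of $\bm\Pi$ (a consequence of $\widetilde{\bm O}\bm 1_H=\bm 0$ together with the partition-of-unity property of B-splines), so that $\bm\Sigma_{nT}$ has its $\bm\eta$-block of order $T$ and its $\bm\gamma$-block of order $nT$; the fluctuation $R_{nT}-\mathbb E[R_{nT}\mid\{x_{it}\}]$ is handled by a maximal inequality and is negligible precisely under $H^2\log(n)^2/T\to0$, $H^3\log(T)/T\to0$ and $Hn^2\log(n)^3/T\to0$. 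Balancing the upper bound on $|S_{nT}|$ against the lower bound on $R_{nT}$ on the boundary set above delivers the rates in (i).

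For part (ii), given the rate from (i), the plan is to extract a Bahadur representation from the subgradient conditions of \eqref{eq:oracle}, namely $\sum_{i\in G_k}\sum_t\bm\Pi(x_{it})\psi_\tau(\widehat e_{it})\approx\bm 0$ and $\sum_t\psi_\tau(\widehat e_{it})\approx 0$ for $i\in G_k$, where $\psi_\tau(u)=\tau-I\{u\le 0\}$. Linearizing $\psi_\tau$ through a conditional-expectation expansion (valid by (A3)), solving the $n$ scalar equations for $\widehat\mu_{i,\tau}^{\mathrm o}-\mu_{i,\tau}$ in terms of $\widehat{\bm\theta}_{(k),\tau}^{\mathrm o}-\bm\theta_{0(k),\tau}$ — a within-individual, density-weighted demeaning — and substituting back produces
\begin{equation}
\widehat{\bm\theta}_{(k),\tau}^{\mathrm o}-\bm\theta_{0(k),\tau}=-(\bm Z_k^\top\bm f_k\bm Z_k)^{-1}\sum_{i\in G_k}\sum_{t=1}^T\bm\Pi(x_{it})\big(\tau-I\{e_{it}(\tau)\le 0\}\big)+\bm r_k,
\end{equation}
where $\bm r_k$ gathers the B-spline bias, which is $O_p(H^{-d})$ because its leading $O(H^{-d})$ component is annihilated by the $f_{(k)}$-weighted projection defining $\bm\theta_{0(k),\tau}$, and the non-smooth empirical-process remainder, which is $o_p(\xi(n,T))$ under the same rate conditions; the demeaning correction to the Hessian does not affect the leading order of the one-dimensional functional $\bm\Pi(x)^\top(\cdot)$, so it may be replaced by $\bm Z_k^\top\bm f_k\bm Z_k$ there. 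Pre-multiplying by $\bm\Pi(x)^\top$ and dividing by $\sqrt{\widehat{\mathrm{Var}}(\widehat m_{(k),\tau}(x)\mid\{x_{it}\})}$, the leading term becomes a linear combination of $\tau-I\{e_{it}(\tau)\le 0\}$ that is independent across $i$ and $\alpha$-mixing across $t$; a Lindeberg-type CLT for such triangular arrays — scalar here, so the growth of $H$ is immaterial — together with $\widehat{\mathrm{Var}}\asymp H/(nT)$, gives the first convergence in (ii). The display \eqref{eq:asymp_norm} then follows by writing $\widehat m_{(k),\tau}(x)-m_{(k),\tau}(x)=\bm\Pi(x)^\top(\widehat{\bm\theta}_{(k),\tau}^{\mathrm o}-\bm\theta_{0(k),\tau})+\{\bm\Pi(x)^\top\bm\theta_{0(k),\tau}-m_{(k),\tau}(x)\}$ and noting the bracket is $O(H^{-d})$, which is $o(\sqrt{H/(nT)})$ exactly when $H(nT)^{-1/(2d+1)}\to\infty$.

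The main obstacle is the uniform control, over the $O_p(\xi(n,T))$-neighborhood and simultaneously over the $n$ nuisance effects $\mu_i$, of the remainder generated by linearizing the non-differentiable $\psi_\tau$: this calls for maximal inequalities for empirical processes indexed by an $\{(H-1)K+n\}$-dimensional parameter under $\alpha$-mixing, and it is there that the three rate conditions are consumed. A secondary, more bookkeeping-type difficulty is verifying that the density-weighted within-individual demeaning used to profile out the individual effects indeed leaves the sandwich variance in the statement unchanged to leading order when one evaluates the scalar functional $\bm\Pi(x)^\top(\widehat{\bm\theta}_{(k),\tau}^{\mathrm o}-\bm\theta_{0(k),\tau})$.
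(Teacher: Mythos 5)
Your overall architecture matches the paper's on the key points: Knight's identity to split the recentered criterion into a score term and a nonnegative remainder, a quadratic lower bound on the conditional mean of the remainder via (A3), exponential inequalities for bounded $\alpha$-mixing sums, a Bahadur representation extracted from the subgradient conditions with the individual effects profiled out by density-weighted demeaning, and the final CLT for the scalar functional $\bm{\Pi}(x)^\top(\bm{Z}_k^\top\bm{f}_k\bm{Z}_k)^{-1}\bm{Z}_k^\top\bm{\epsilon}$ with the bias killed by $H(nT)^{-1/(2d+1)}\to\infty$. Part (ii) of your plan is essentially the paper's Steps 2--3 and its Lemmas A.6--A.7.

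The genuine gap is in part (i). You propose to read off the sharp rates $\max_i|\eta_i|\asymp\sqrt{\log(n)/T}+H^{-d}$ and $\|\bm{\gamma}_k\|_2\asymp\sqrt{H/(nT)}+H^{-d}$ directly from a single convexity argument on the boundary of a joint $(n+(H-1)K)$-dimensional neighborhood. The paper does not do this, and for a reason: controlling the fluctuation $R_{nT}-\mathbb{E}[R_{nT}\mid\{x_{it}\}]$ uniformly over the $n$-dimensional nuisance block forces an entropy (union-bound) cost of order $n$, and at the target radius $\sqrt{H/(nT)}$ in the $\bm{\gamma}$ directions the resulting maximal-inequality bound is not obviously $o(H)$ (the size of the quadratic lower bound on that boundary) under the three stated rate conditions. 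The paper's Step 1 therefore only establishes the cruder rate $\sqrt{H\log(n)^2/T}+H^{-d}$ for $\widehat{\bm{\theta}}$, working individual by individual; the sharp rates are then \emph{outputs} of the asymptotic representation (your part (ii) machinery), where the condition $Hn^2\log(n)^3/T\to0$ is consumed precisely to make the second-order incidental-parameter term $\max_i|\widehat{\mu}_i-\mu_{0i}|^2=O_p(\log(n)/T)$ negligible against $\sqrt{H/(nT)}$, i.e.\ $(H\log(n)/T)^{3/4}=o(\sqrt{H/(nT)})$. As written, your plan has the logical order reversed: the Bahadur representation in (ii) presupposes the sharp rate from (i), but (i) as you argue it is unlikely to deliver that rate. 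The fix is structural rather than conceptual: downgrade (i) to crude consistency, derive the representation from the score equations using only consistency, and then extract both rates in the theorem from the representation — which is exactly the paper's route. A second, smaller imprecision: the B-spline bias in $\bm{r}_k$ is not annihilated by the $f_{(k)}$-weighted projection in the empirical sandwich; it is bounded by Cauchy--Schwarz as $O_p(\sqrt{H/(nT)}\,(H^{-d}+\sqrt{\log(n)/T}))=o_p(\sqrt{H/(nT)})$, and the remainder $\widetilde{R}_{it}$ must also carry the $\widehat{\mu}_i-\mu_{0i}$ contribution, not only the $O(H^{-d})$ approximation error.
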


For panel data models, the conditional quantile model with individual effects is well known for the incidental parameter problem, as the number of parameters for individual effects increases with the number of individuals $n$. To consistently estimate the parameter $\bm{\theta}_{(k),\tau}$, as in \citet{kato2012asymptotics}, it is necessary to assume that 
$T$ and $n$ satisfy that $H^2\log(n)^2/T\to0$.
The estimation error rate in the argument (i) of Theorem \ref{thm:1} has a standard form as in nonparametric regression, and the two terms correspond to bias and variance, respectively. Note that, by taking advantage of group structures, the effective sample size becomes $nT$, while it is $T$ for the case of full heterogeneity. In order to minimize the estimation error rate, the optimal choice of $H$ obviously satisfies $c_1 \leq H(nT)^{-1/(2d+1)} \leq c_2 $ for two constants $0<c_1<c_2<\infty$. 

The condition for mean-zero asymptotic normality is more stringent. To make the asymptotic bias of the fixed effects negligible, we further require that $T$ is much larger than $n$, namely $Hn^2\log(n)^3/T\to0$.
Moreover, to remove the B-spline approximation error asymptotically and make inference on the smooth function, we need to set a large number of knots such that $H(nT)^{-1/(2d+1)}\to\infty$, and the asymptotic normality at \eqref{eq:asymp_norm} enables us to construct point-wise confidence intervals for the estimated smooth function $\widehat{m}_{(k),\tau}(x)$. To fit the conditional density functions in the asymptotic variance at \eqref{eq:asymp_norm}, we may consider the method in \cite{hendricks1992hierarchical} or a standard kernel conditional density estimation such as the re-weighted Nadaraya-Watson method \citep{de2003conditional}.

The following theorem shows that the oracle estimator is a local minimizer of the proposed penalized objective function with probability approaching one.
\begin{theorem}[Oracle property] 
\label{thm:2}
Suppose that the 
conditions in Theorem \ref{thm:1} hold. If (a) $\lambda^{-1}[\sqrt{H}\xi(n,T)+\sqrt{(H/T)\log(T)\log(nH\log(T))}]\to0$ and (b) $\lambda^{-1}\cdot\min_{k_1\neq k_2}\|\bm{\theta}_{0(k_1),\tau}-\bm{\theta}_{0(k_2),\tau}\|_2\to\infty$, then there exists a local minimizer in \eqref{eq:objective} such that $\widehat{\bm{\theta}}_{i,\tau}(\lambda)$ can be clustered into $K$ groups, denoted by $\widehat{\bm{\theta}}_{(k),\tau}(\lambda)$ with $1\leq k\leq K$, and, up to a permutation of subgroups and as $T\to\infty$,
	\begin{itemize}
		\item[(i)] $\mathbb{P}(\widehat{\bm{\theta}}_{(k),\tau}(\lambda)=\widehat{\bm{\theta}}_{(k),\tau}^\textup{o})\to 1$, and
		\item[(ii)] $\mathbb{P}(\widehat{G}_{k,\tau}(\lambda)=G_{k,\tau},\text{ for }k=1,\dots,K)\to1$.
	\end{itemize}
\end{theorem}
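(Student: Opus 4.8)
The plan is to reduce both claims to the single assertion that, with probability tending to one, the oracle point $\{\widehat{\mu}_{i,\tau}^{\textup{o}},\widehat{\bm{\theta}}_{i,\tau}^{\textup{o}}\}$ --- where we write $\widehat{\bm{\theta}}_{i,\tau}^{\textup{o}}=\widehat{\bm{\theta}}_{(k),\tau}^{\textup{o}}$ for $i\in G_{k,\tau}$ --- is a local minimizer of the penalized objective \eqref{eq:objective}. Granting this, claim (i) is immediate, and claim (ii) follows since Theorem~\ref{thm:1}(i) gives $\|\widehat{\bm{\theta}}_{(k),\tau}^{\textup{o}}-\bm{\theta}_{0(k),\tau}\|_2=O_p(\xi(n,T))$ while conditions (a) and (b) force $\min_{k_1\ne k_2}\|\bm{\theta}_{0(k_1),\tau}-\bm{\theta}_{0(k_2),\tau}\|_2\gg\lambda\gg\xi(n,T)$, so the $\widehat K$ distinct values of $\widehat{\bm{\theta}}_{i,\tau}(\lambda)$ must be exactly the $K$ oracle group values $\widehat{\bm{\theta}}_{(k),\tau}^{\textup{o}}$ and the partition they induce is $\{G_{k,\tau}\}$.

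To establish the local-minimizer property I would work on a neighborhood $\mathcal{N}$ of the oracle point of radius $r_n$, where $r_n$ is chosen to dominate the oracle estimation errors of Theorem~\ref{thm:1}(i) yet to be $o\big(\min_{k_1\ne k_2}\|\bm{\theta}_{0(k_1),\tau}-\bm{\theta}_{0(k_2),\tau}\|_2\big)$, and split $\mathcal{N}$ along the oracle subspace $\mathcal{M}=\{(\bm{\mu},\bm{\theta}):\bm{\theta}_i=\bm{\theta}_j\text{ whenever }i\text{ and }j\text{ lie in the same }G_{k,\tau}\}$. On $\mathcal{M}\cap\mathcal{N}$ every within-group difference vanishes and, by the choice of $r_n$ and condition (b), every between-group difference exceeds $a\lambda$, so the SCAD penalty is constant there; since the oracle estimator is by definition the minimizer of the unpenalized quantile loss over $\mathcal{M}$, it minimizes \eqref{eq:objective} over $\mathcal{M}\cap\mathcal{N}$. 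For a point $(\bm{\mu},\bm{\theta})\in\mathcal{N}\setminus\mathcal{M}$ I would project $\bm{\theta}$ back onto $\mathcal{M}$ by replacing each $\bm{\theta}_i$ by its within-group average $\bm{\theta}_i^\ast$ and compare the objective values. By Knight's identity the loss increment $L(\bm{\mu},\bm{\theta})-L(\bm{\mu},\bm{\theta}^\ast)$ decomposes into a nonnegative curvature term and a linear term whose absolute value is, up to lower-order remainders, at most $\big(\max_{1\le i\le n}\|\frac{1}{T}\sum_{t}\bm{\Pi}(x_{it})\psi_\tau(\widehat{e}_{it})\|_2\big)\cdot\|\bm{\theta}-\bm{\theta}^\ast\|$, where $\widehat{e}_{it}=y_{it}-\widehat{\mu}_{i,\tau}^{\textup{o}}-\bm{\Pi}(x_{it})^\top\widehat{\bm{\theta}}_{i,\tau}^{\textup{o}}$ and $\psi_\tau(u)=\tau-I\{u\le0\}$; on the penalty side, each within-group pair switches from the value $0$ to the linear branch $\lambda\|\bm{\theta}_i-\bm{\theta}_j\|_2$ of SCAD while all between-group pairs stay in the flat branch, so the total penalty strictly increases by an amount of order $\lambda\binom{n}{2}^{-1}\sum_{i<j,\,i,j\in G_{k,\tau}}\|\bm{\theta}_i-\bm{\theta}_j\|_2$. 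Condition (a) is exactly what makes this $\lambda$-linear gain dominate the loss increment, so $Q(\bm{\mu},\bm{\theta})>Q(\widehat{\bm{\mu}}^{\textup{o}},\widehat{\bm{\theta}}^{\textup{o}})$; hence the oracle point globally minimizes $Q$ over $\mathcal{N}$ and is therefore a local minimizer on the full parameter space.

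The technical core --- and the step I expect to be the main obstacle --- is to make the two stochastic bounds behind the last comparison hold uniformly over the $n$ incidental parameters $\mu_i$ and over the $(H-1)$-dimensional ball of spline coefficients of radius $r_n$: namely (1) that $\max_{1\le i\le n}\|\frac{1}{T}\sum_{t}\bm{\Pi}(x_{it})[\psi_\tau(e_{it}(\tau))-\psi_\tau(e_{it}(\tau)-\delta_{it})]\|_2$ is controlled uniformly over perturbations $\delta_{it}$ of size $O(r_n)$, which is what allows passage from the true errors to the oracle residuals $\widehat{e}_{it}$ and produces the term $\sqrt{(H/T)\log(T)\log(nH\log(T))}$ in condition (a); and (2) that the individual oracle scores $\max_{1\le i\le n}\|\frac{1}{T}\sum_{t}\bm{\Pi}(x_{it})\psi_\tau(\widehat{e}_{it})\|_2$ are $O_p\big(\sqrt{H}\,\xi(n,T)+\sqrt{(H/T)\log(T)\log(nH\log(T))}\big)$, using Theorem~\ref{thm:1}(i) and the rate conditions $H^2\log(n)^2/T\to0$, $H^3\log(T)/T\to0$, $Hn^2\log(n)^3/T\to0$ already imposed in Theorem~\ref{thm:1}. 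Both bounds would be obtained from a Bernstein-type maximal inequality for $\alpha$-mixing sequences under Assumption~(A1) together with a covering/chaining argument for the B-spline class, and it is precisely at this point that the non-differentiability of the check loss, the diverging number of fixed effects, and Assumptions (A1) and (A6) all come into play simultaneously --- which is what separates this argument from the Gaussian penalized-least-squares analysis of \citet{ma2017concave}.
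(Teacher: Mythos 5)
Your proposal follows essentially the same route as the paper's proof: reduce both claims to showing the oracle point is a local minimizer, split the neighborhood along the group-structured subspace, handle the structured part via the oracle estimator's definitional optimality together with the flatness of SCAD beyond $a\lambda$ (guaranteed by condition (b)), and handle the off-subspace part by projecting via within-group averaging, bounding the loss increment through the convexity/Knight subgradient inequality and uniform score bounds (Bernstein for $\alpha$-mixing plus a covering argument, the paper's Lemma \ref{lemma:6}), and letting the linear branch $\lambda\|\bm{\theta}_i-\bm{\theta}_j\|_2$ of the within-group penalties dominate under condition (a). The decomposition, the key lemmas, and the role of each condition all match the paper's argument.
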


Theorem \ref{thm:2} presents the oracle property for pairwise fusion penalized estimators. This result, together with the asymptotic properties of the oracle estimator in Theorem \ref{thm:1}, directly leads to the convergence rate and asymptotic normality of $\widehat{\bm{\theta}}_{i,\tau}(\lambda)$. The two additional assumptions (a) and (b) imply that $\sqrt{H}\xi(n,T)+\sqrt{(H/T)\log(T)\log(nH\log(T))} \ll \lambda \ll \min_{k_1\neq k_2}\|\bm{\theta}_{0(k_1),\tau}-\bm{\theta}_{0(k_2),\tau}\|_2$. 
Specifically, the condition (a) indicates the minimum rate of the tuning parameter $\lambda$, and the condition (b) requires that the minimum difference of the B-spline coefficients in different groups cannot be too small.

\begin{remark}\label{rmk:3}
    If all smooth functions $m_{(k),\tau}(\cdot)$ remain the same as $T$ and $n$ diverge, the normalization of B-spline basis implies that $\min_{k_1\neq k_2}\|\bm{\theta}_{0(k_1),\tau}-\bm{\theta}_{0(k_2),\tau}\|_2$ is bounded away from zero. Hence, the conditions (a) and (b) require that $\lambda\to0$ and $\sqrt{(H/T)\log(T)\log(nH\log(T))}\to 0$. In other words, the number of time points $T$ is required to diverge to infinity, and if $n$ also diverges to infinity, we further require that $H\log(nH)=o(T/\log(T))$. Otherwise, when $T$ is fixed and $n$ increases to infinity, the pairwise fusion penalized estimator may not be able to identify the group structure or consistently estimate the smooth function.
\end{remark}

\begin{remark}\label{rmk:4}
    Theorem \ref{thm:2} implies that, with a high probability, the oracle estimator is a local minimizer, but not necessarily the global minimizer, of objective function \eqref{eq:objective}. In addition, since the objective function is non-convex, the estimator $\widehat{\bm{\theta}}_{i,\tau}(\lambda)$ will heavily depend on algorithms, as well as parameter initialization. An efficient iterative algorithm is introduced in Section \ref{sec:algo}, and a warm-start initialization procedure is also provided.
\end{remark}

We next investigate asymptotic properties of the SIC in \eqref{eq:SIC}. For technical reasons, we only consider $\lambda$ such that the number of estimated groups is not larger than $K_{\max}$, a pre-specified upper bound which is fixed. Since the partition will be unchanged if one just relabels by a permutation of $\{1,\dots,K\}$, without loss of generality, we assume that $\min{\{i:i\in G_k\}}>\min\{i:i\in G_{k+1}\}$, $k=1,\dots,K-1$. Under the additional assumption on the conditional densities within each subgroup, we have the following subgroup selection consistency of the SIC.

\begin{theorem}\label{thm:3}
	Suppose that $f_i(0|x_{it})=f_{(k)}(0|x_{it})$ for all $i\in G_k$ and $k=1,\dots,K$.
    Let $\widehat{K}$ be the number of estimated groups and $\widehat{\mathcal{G}}$ be the corresponding partition with the penalty parameter $\lambda$ selected by \textup{SIC}. If the conditions in Theorem \ref{thm:2} hold, then
	\begin{equation}
		\mathbb{P}(\widehat{K}=K,\widehat{\mathcal{G}}=\mathcal{G})\to 1 \hspace{3mm}\text{as}\hspace{3mm} T\to\infty.
	\end{equation}
\end{theorem}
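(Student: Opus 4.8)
The plan is to classify the candidate tuning parameters according to the partition $\widehat{\mathcal G}(\lambda)$ that the penalized estimator induces, and to show that $\textup{SIC}(\lambda)$ is asymptotically strictly larger outside the ``correct'' class. Restricting, as in the statement, to $\lambda$ with $\widehat K(\lambda)\le K_{\max}$, every such partition is of exactly one of three types relative to $\mathcal G$: (i) not a refinement of $\mathcal G$ (this includes $\widehat K<K$ and $\widehat K=K$ with a wrong partition); (ii) a strict refinement of $\mathcal G$ ($\widehat K>K$ with each estimated block inside a true block); or (iii) equal to $\mathcal G$. Write $L_n(\lambda)=\sum_{i,t}\rho_\tau(y_{it}-\widehat\mu_i(\lambda)-\bm\Pi(x_{it})^\top\widehat{\bm\theta}_{i,\tau}(\lambda))$, and for any partition $\mathcal G'=\{G'_1,\dots,G'_{K'}\}$ let $R_n(\mathcal G')=\min\sum_k\sum_{i\in G'_k}\sum_t\rho_\tau(y_{it}-\mu_i-\bm\Pi(x_{it})^\top\bm\theta_{(k)})$ be the unpenalized best within‑partition fit. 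Since the fused penalty is nonnegative and the penalized solution induces $\widehat{\mathcal G}(\lambda)$, we have $L_n(\lambda)\ge R_n(\widehat{\mathcal G}(\lambda))$ for every $\lambda$, and $R_n(\mathcal G)$ equals the oracle loss $L_n^{\textup o}$ from \eqref{eq:oracle} exactly. By Theorem~\ref{thm:2}, for every $\lambda$ in a nonempty range the penalized estimator coincides with the oracle estimator with probability tending to one (we take the penalized estimator to be that local minimizer when several exist, as discussed in Remark~\ref{rmk:4}); hence there is $\lambda^\ast$ of type (iii) with $L_n(\lambda^\ast)=L_n^{\textup o}$ with probability tending to one, and $L_n^{\textup o}=nT\,\bar c\{1+o_p(1)\}$ with $\bar c=\lim(nT)^{-1}\sum_{i,t}\mathbb E\rho_\tau(e_{it}(\tau))\in(0,\infty)$. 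It therefore suffices to rule out types (i) and (ii).

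For type (i), the key step is a uniform underfitting bound: there is a constant $\delta_0>0$ such that, with probability tending to one, $R_n(\mathcal G')\ge L_n^{\textup o}+\delta_0 nT$ for every partition $\mathcal G'$ with $K'\le K_{\max}$ that is not a refinement of $\mathcal G$. At the population level, such a $\mathcal G'$ pools two individuals from distinct true groups $G_{k_1},G_{k_2}$; because $\mathbb E\rho_\tau(\cdot)$ is uniquely minimized at the true conditional quantile, $f_i(\cdot|x_{it})$ is bounded away from zero near the origin by (A3), and the constraint \eqref{eq:constraint} together with (A5) prevents the free intercepts from absorbing the nonconstant gap $m_{(k_1),\tau}-m_{(k_2),\tau}\not\equiv 0$, the minimized population loss for $\mathcal G'$ exceeds that for $\mathcal G$ by at least a constant times $nT$. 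The empirical version follows from uniform concentration of $R_n(\cdot)$ around its mean over all such partitions: $R_n(\mathcal G')$ is a minimum of a sum of Lipschitz, after truncation bounded, summands, so a Bernstein‑type inequality for $\alpha$‑mixing sequences under (A1), combined with a union bound, controls the fluctuation at order $o_p(nT)$, negligible against the $\delta_0 nT$ signal. Consequently $\log L_n(\lambda)\ge\log R_n(\mathcal G')\ge\log L_n^{\textup o}+\log(1+\delta_0/\bar c)+o_p(1)$, while the penalty terms of the two models differ by at most $K_{\max}H\log(nT)/(nT)=o(1)$ (since $H=o(\sqrt T)$); hence $\textup{SIC}(\lambda)-\textup{SIC}(\lambda^\ast)\ge\log(1+\delta_0/\bar c)+o_p(1)>0$.

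For type (ii), each estimated block fits the same true function $m_{(k),\tau}(\cdot)$, and the added assumption $f_i(0|x_{it})=f_{(k)}(0|x_{it})$ for all $i\in G_k$ forces every sub‑block of $G_k$ to share the single population B‑spline target $\bm\theta_{0(k),\tau}$, so splitting a true group yields no population gain, only overfitting noise. A local quadratic expansion of the check loss around the true intercepts and $\bm\theta_{0(k),\tau}$ — valid under (A3)–(A4) and $H=o(T)$, as in the proof of Theorem~\ref{thm:1} — shows that each spurious extra block lowers the loss by only $O_p(H)$, uniformly over the relevant refinements, so $R_n(\widehat{\mathcal G}(\lambda))\ge L_n^{\textup o}-(\widehat K-K)\,O_p(H)$. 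Combining with $L_n(\lambda)\ge R_n(\widehat{\mathcal G}(\lambda))$, $L_n(\lambda^\ast)=L_n^{\textup o}\asymp nT$, and $\log(1-u)\ge-2u$ for small $u>0$,
\begin{equation*}
\textup{SIC}(\lambda)-\textup{SIC}(\lambda^\ast)\ \ge\ (\widehat K-K)\,\frac{H\log(nT)}{nT}\;-\;(\widehat K-K)\,\frac{O_p(H)}{L_n^{\textup o}}\ \ge\ (\widehat K-K)\,\frac{H}{nT}\,\{\log(nT)-O_p(1)\}\ >\ 0
\end{equation*}
with probability tending to one, since $\widehat K-K\ge1$ and $\log(nT)\to\infty$. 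Together with type (i) and the existence of $\lambda^\ast$, the minimizer of $\textup{SIC}$ must induce the partition $\mathcal G$, i.e.\ $\widehat K=K$ and $\widehat{\mathcal G}=\mathcal G$, with probability tending to one.

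The main obstacle is the uniform control in types (i) and (ii) over the candidate partitions, whose number is exponential in $n$ in the worst case; establishing the concentration of $R_n(\cdot)$ and of the quadratic‑expansion remainder with a union bound that the exponential $\alpha$‑mixing inequality can absorb is precisely where the rate restrictions linking $T$, $n$ and $H$ inherited from Theorem~\ref{thm:1} (such as $Hn^2\log(n)^3/T\to0$) are used; it is cleanest to carry this out over the polynomially many partitions realized along the penalized solution path, as the procedure is implemented in practice. A secondary technicality is profiling out the $n$ incidental intercepts inside $R_n(\mathcal G')$ and verifying that their $O_p(\sqrt{\log n/T})$ estimation error (from Theorem~\ref{thm:1}(i)) disturbs neither bound.
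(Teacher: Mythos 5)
Your proof is correct and follows essentially the same route as the paper's: wrong groupings (merging individuals from distinct true groups) inflate the fitted check loss by a constant times $nT$, which dominates the $O(K_{\max}H\log(nT)/(nT))$ difference in SIC complexity terms, while overfitted refinements gain only $O_p(H+nTH^{-2d})$ in fit and are killed by the extra $H\log(nT)/(nT)$ penalty, with Theorem~\ref{thm:2} supplying a $\lambda$ that realizes the oracle partition. Your trichotomy (non-refinement / strict refinement / equal) is marginally more careful than the paper's two cases $K<K_0$ and $K>K_0$ in that it explicitly covers $\widehat K=K$ with a wrong partition and arbitrary (not merely group-combining) partitions, and your remark about needing uniformity over the candidate partitions is a fair point that the paper handles only implicitly by restricting to the finitely many population-optimal partitions for each $K$.
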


This theorem shows that, with probability approaching one, both the number of subgroups and group memberships can be estimated correctly by the proposed penalized estimation and SIC tuning parameter selection procedure.
The additional homogeneous condional density assumption in this theorem is not too stringent, since we only require that all individual's conditional density are identical at $0$ within each subgroup, rather than assuming the conditional density functions $f_i(\cdot|x_{it})$ are homogenous almost everywhere.

\begin{remark}\label{rmk:4}
	Consider multiple quantile levels of $0<\tau_1<\dots<\tau_Q<1$, and it is of interest to test whether the latent group structures at the $Q$ levels are the same. First, under the null hypothesis of the same group structures, we can extend the pairwise fusion penalized estimation at \eqref{eq:objective} to the case with multiple quantiles,
	\begin{equation}
	\min\frac{1}{nT}\sum_{q=1}^Q\sum_{i=1}^n\sum_{t=1}^T\rho_{\tau_q}(y_{it}-\mu_{i,\tau_q}-\bm{\Pi}(x_{it})^\top\bm{\theta}_{i,\tau_q})+\binom{n}{2}^{-1}\sum_{i<j}p_\lambda(\|\bm{\theta}_i-\bm{\theta}_j\|_2),
	\end{equation}
	where $\bm{\theta}_i=(\bm{\theta}_{i,\tau_1}^\top,\dots,\bm{\theta}_{i,\tau_Q}^\top)^\top$ is the stacked parameter vector, and its asymptotic properties can be established similarly. On the other hand, under the alternative hypothesis, we can conduct the penalized estimation at \eqref{eq:objective} to the $Q$ quantiles separately. As a result, the test statistic can be set to the difference between the loss functions under null and alternative hypotheses. However, this actually is a high-dimensional testing problem and, not like the case with fixed dimensions, the null distribution usually has a complicated form. We leave it for future research.
\end{remark}

\section{ADMM Algorithm} \label{sec:algo}

It is nontrivial to search for the proposed estimator, $\{\widehat{\mu}_{1,\tau}(\lambda),\cdots,\widehat{\mu}_{n,\tau}(\lambda),\widehat{\bm{\theta}}_{1,\tau}(\lambda),\cdots,\widehat{\bm{\theta}}_{n,\tau}(\lambda)\}$, since there are $O(n^2)$ non-convex penalty terms in the objective function \eqref{eq:objective} and the quantile loss $\rho_{\tau}(\cdot)$ is also not differentiable.
This section overcomes the difficulty by introducing an alternating direction method of multipliers (ADMM) algorithm \citep{boyd2010distributed}.

We first introduce some notations. Denote the matrix form of the panel data by $\bm{y}=(y_{11},\dots,y_{1T},y_{21},\dots,y_{2T},\dots,y_{nT})^\top\in\mathbb{R}^{nT}$,  $\bm{\Pi}_i=(\bm{\Pi}(x_{i1}),\bm{\Pi}(x_{i2}),\dots,\bm{\Pi}(x_{iT}))^\top\in\mathbb{R}^{T\times (H-1)}$, $\widetilde{\bm{\Pi}}_i=(\bm{1}_T,\bm{\Pi}_i)\in\mathbb{R}^{T\times H}$, and
$\bm{\Pi}=\text{diag}( \widetilde{\bm{\Pi}}_1,\widetilde{\bm{\Pi}}_2,\dots,\widetilde{\bm{\Pi}}_n)\in\mathbb{R}^{nT\times nH}$. For the pairwise difference of coefficient vectors, define the pairwise difference indicator matrix by  $\bm{\Delta}=\{(\bm{e}_i-\bm{e}_j),~i<j\}^\top\in\mathbb{R}^{\binom{n}{2}\times n}$, where $\bm{e}_i$ is the coordinate vector with the $i$-th element equal to one and all the other elements equal to zero. Let $\bm{w}_\tau=(\mu_{1,\tau},\bm{\theta}_{1,\tau}^\top,\mu_{2,\tau},\bm{\theta}_{2,\tau}^\top,\dots,\mu_{1,\tau},\bm{\theta}_{n,\tau}^\top)^\top\in\mathbb{R}^{nH}$ be the vectorized coefficients, and $\bm{A}=\bm{\Delta}\otimes\bm{I}_H\in\mathbb{R}^{\binom{n}{2}H\times nH}$, where $\otimes$ refers to the Kronecker product of matrices.

The objective function of the pairwise fusion penalized estimator in \eqref{eq:objective} can be rewritten into
\begin{equation}\label{eq:obj2}
	\frac{n-1}{2T}\rho_\tau(\bm{y}-\bm{\Pi} \bm{w}_\tau)+\sum_{i=1}^{\binom{n}{2}}p_\lambda(\|(\bm{A}\bm{w}_\tau)^{(i)}\|_2),
\end{equation}
where $\rho_\tau(\bm{u})=\sum_{j=1}^{nT}\rho_\tau(u_j)$ for any vector $\bm{u}=(u_1,\dots,u_{nT})^\top\in\mathbb{R}^{nT}$,  $(\bm{v})^{(i)}$ refers to the sub-vector extracted from the $[(i-1)H+2]$-th to $(iH)$-th elements of $\bm{v}\in\mathbb{R}^{\binom{n}{2}H}$ and $1\leq i\leq \binom{n}{2}$. Note that the sub-vector extraction will skip the term $\mu_{i,\tau}$ and enable us to calculate the pairwise difference between $\bm{\theta}_{i,\tau}$ and $\bm{\theta}_{j,\tau}$.

By introducing the dummy variable $\bm{v}_\tau=\bm{A}\bm{w}_\tau\in\mathbb{R}^{\binom{n}{2}H}$, we can rewrite the objective function \eqref{eq:obj2} into the constrained form
\begin{equation}
	\frac{n-1}{2T}\rho_\tau(\bm{y}-\bm{\Pi} \bm{w}_\tau)+\sum_{i=1}^{\binom{n}{2}}p_\lambda(\|\bm{v}_\tau^{(i)}\|_2),~~\text{subject to }\bm{A}\bm{w}_\tau-\bm{v}_\tau=0,
\end{equation}
and further the augmented Lagrangian form
\begin{equation}
	\mathcal{L}_\gamma(\bm{w}_\tau,\bm{v}_\tau;\bm{u}_\tau)=\frac{n-1}{2T}\rho_\tau(\bm{y}-\bm{\Pi}\bm{w}_\tau)+\sum_{i=1}^{\binom{n}{2}}p_\lambda(\|\bm{v}_\tau^{(i)}\|)+\frac{\gamma}{2}\left\|\bm{A}\bm{w}_\tau-\bm{v}_\tau+\frac{\bm{u}_\tau}{\gamma}\right\|_2^2,
\end{equation}
respectively, where $\bm{u}_\tau$ is the Lagrangian multiplier and $\gamma$ is the penalty parameter. Given the fixed tuning parameter $\lambda$, the augmented Lagrangian form can be solved by the ADMM algorithm, as summarized in Algorithm \ref{alg:admm}.\\

In the outer loop of Algorithm \ref{alg:admm}, the $\bm{w}_\tau$-update takes the form of
\begin{equation}
	\frac{n-1}{2T}\rho_\tau(\bm{y}-\bm{\Pi} \bm{w}_\tau)+\frac{\gamma}{2}\|\bm{A}\bm{w}_\tau-\bm{v}_\tau+\bm{u}_\tau/\gamma\|_2^2,
\end{equation}
and does not have an explicit solution. Consider its augmented Lagrangian form
\begin{equation}
	\mathcal{L}_\kappa(\bm{w}_\tau,\bm{r}_\tau;\bm{h}_\tau)=\frac{n-1}{2T}\rho_\tau(\bm{r}_\tau)+\frac{\gamma}{2}\|\bm{A}\bm{w}_\tau-\bm{v}_\tau+\bm{u}_\tau/\gamma\|_2^2+\frac{\kappa}{2}\|\bm{r}_\tau+\bm{\Pi}\bm{w}_\tau-\bm{y}+\bm{h}_\tau/\kappa\|_2^2,
\end{equation}
where $\bm{r}_\tau$ is the dummy variable for $\bm{y}-\bm{\Pi}\bm{w}_\tau$, $\bm{h}_\tau$ is the Lagrangian multiplier, and $\kappa$ is the penalty parameter. We then apply another ADMM algorithm to solve it, and this results in the inner loop in Algorithm \ref{alg:admm}. Note that the $\bm{r}_\tau$-update can be solved by the asymmetric soft-thresholding operator, while the $\bm{w}_\tau$-update is a least squares problem and has a closed-form solution. Even though the dimensions of matrices $\bm{A}\in\mathbb{R}^{\binom{n}{2}H\times nH}$ and $\bm{\Pi}\in\mathbb{R}^{nT\times nH}$ are very large, the design matrix for the least squares problem in the $\bm{w}_\tau$-update remains the same for all iterations. Hence, we can apply the QR decomposition to the design matrix in the least squares problem only once, and it is not necessary to calculate the inverse of this large matrix repeatedly. 

\begin{algorithm}[H]\label{alg:admm}
	\SetAlgoLined
	\KwData{$\bm{y}$, $\bm{\Pi}$}
	Initialize $\bm{w}_\tau$, $\bm{v}_\tau$, $\bm{u}_\tau$; $\bm{h}_\tau \leftarrow$ $\bm{w}^0_\tau$, $\bm{v}^0_\tau$, $\bm{u}^0_\tau$; $\bm{h}^0_\tau$
	
	\While{k $<$ number of outer loop replications}{
		$\bm{w}_\tau^{k+1}$ update:\\
		\While{j $<$ number of inner loop replications}{
			$\bm{r}_\tau^{j+1} := \underset{\bm{r}}{\argmin}~ (n-1)/(2T)\cdot\rho_{\tau}(\bm{r}_\tau) + \kappa/2\cdot\|\bm{r}_\tau + \bm{\Pi} \bm{w}_\tau^j - \bm{y} + \bm{h}_\tau^k/\kappa\|_2^2$  \\
			$\bm{w}_\tau^{j+1} := \underset{\bm{w}_\tau}{\argmin}~ \gamma/2\cdot \|\bm{A}\bm{w}_\tau - \bm{v}_\tau^k+\bm{u}_\tau^k/\gamma\|_2^2 + \kappa/2\cdot\|\bm{\Pi}\bm{w}_\tau + \bm{r}_\tau^{j+1} - \bm{y} + \bm{h}_\tau^k/\kappa\|_2^2$  \\
			$\bm{h}_\tau^{j+1} := \bm{h}_\tau^j + \kappa(\bm{r}_\tau^{j+1} + \bm{\Pi} \bm{w}_\tau^{j+1} - \bm{y})$
		} 
		$\bm{v}_\tau^{(i)k+1} := \underset{\bm{v}_\tau^{(i)}}{\argmin} P_{\lambda}(\|\bm{v}_\tau^{(i)}\|_2) + \gamma/2\cdot\|(\bm{A}\bm{w}_\tau^{k+1})^{(i)} + \bm{u}_\tau^{(i)k}/\gamma - \bm{v}_\tau^{(i)}\|_2^2$  \\
		$\bm{u}_\tau^{k+1} := \bm{u}_\tau^k + \gamma(\bm{A}\bm{w}_\tau^{k+1} - \bm{v}_\tau^{k+1})$
	}
	\KwResult{optimized $\bm{w}_\tau$}
	\caption{ADMM algorithm for the fixed $\lambda$}
\end{algorithm}~

As the penalty function $p_\lambda(\cdot)$ is concave on $[0,\infty)$, we apply the majorization method in $\bm{v}_\tau^{(i)}$-update as in \citet{peng2015iterative}. Here $\bm{v}_\tau^{(i)}$ step has a closed-form solution:
\begin{equation}
	\bm{v}_\tau^{(i)k+1} = R_{p_\lambda'(\|\bm{v}_\tau^{(i)}\|_2+)/\gamma}\left((\bm{A}\bm{w}_\tau^{k+1})^{(i)}+\bm{u}_\tau^{k(i)}/\gamma\right),
\end{equation}
where $R_t(\bm{x})=\bm{x}(1-t/\|\bm{x}\|_2)_+$ and $p_\lambda'(a+)$ denotes the limit of the derivative $p_\lambda'(x)$ when $x\to a$ from the above. Note that, with $p_\lambda(\cdot)$ being the SCAD penalty,
\begin{equation}
	p_\lambda'(\|\bm{v}^{(i)}_\tau\|_2+)=
	\begin{cases}
		\lambda, & \text{when }0\leq\|\bm{v}_\tau^{(i)}\|_2<\lambda;\\
		(a\lambda-\|\bm{v}_\tau^{(i)}\|_2)/(a-1), & \text{when }\lambda\leq\|\bm{v}_\tau^{(i)}\|_2< a\lambda;\\
		0, & \text{when }\|\bm{v}_\tau^{(i)}\|_2\geq a\lambda.\\
	\end{cases}
\end{equation}
In Algorithm \ref{alg:admm}, all subproblems have closed-form solutions and thus can be solved efficiently.

For non-convex objective functions, it is well known that the ADMM algorithm may not converge to the global optimal solution and the results are highly sensitive to the initial values. We suggest a fine grid search with a warm-start initialization procedure to solve the problem. Specifically, we first consider a sequence of increasing tuning parameters $0=\lambda_0<\lambda_1<\dots<\lambda_M$ on a fine grid, and set the initial values of $\bm{w}_\tau,\bm{v}_\tau,\bm{u}_\tau$ and $\bm{h}_\tau$ to $\bm{0}$ for $\lambda_0=0$. For the following tuning parameters $\lambda_m$, $1\leq m\leq M$, we initialize the iterative algorithm by the solution previously obtained with respect to $\lambda_{m-1}$.

\begin{remark}
	It is of interest to conduct convergence analysis for the proposed ADMM algorithm, which involves both a non-smooth quantile loss function and a non-convex penalty function. The convergence analysis has been investigated for ADMM algorithms to non-convex regularized least squares problems \citep{ma2017concave,zhu2018cluster}, however, it cannot be applied here since the proving techniques heavily depend on the smoothness of loss functions. In the meanwhile, to solve non-convex penalized quantile regression, \citet{peng2015iterative} proposed an iterative coordinate descent algorithm, and its convergence analysis was also studied; see also \citet{yu2017admm}. It actually is a single-layer ADMM, and it is still an open problem in the literature to theoretically justify a nested two-layer ADMM as in Algorithm \ref{alg:admm}. We leave it for future research.
\end{remark}

\section{Simulation Experiments} \label{sec:sim}

\subsection{Subgroup analysis at a single quantile level}

This subsection conducts three simulation experiments to evaluate the finite-sample performance of the proposed SCAD pairwise fusion penalized estimator, $\widehat{m}_{i,\tau}(x)=\bm{\Pi}(x)^\top\widehat{\bm{\theta}}_{i,\tau}(\lambda)$, at a single quantile level $\tau$ for the cases with independent and identically distributed ($i.i.d.$) covariates, weakly dependent covariates and heavy-tailed data, respectively.
The oracle estimator with the known group structure, $\widetilde{m}_{i,\tau}(x)=\bm{\Pi}(x)^\top\widehat{\bm{\theta}}_{i,\tau}^\textup{o}$, is set to be the benchmark for comparison. 

For each experiment, individuals belong to three subgroups of equal size, and the data generating process is
\begin{equation}
	y_{it} = \mu_{i,\tau} + c_i\cdot\sin(2\pi x_{it}) + e_{it}(\tau),
\end{equation}
where we set $c_i=0.2$, 1 or 2 for each subgroup, the error term $e_{it}(\tau)$ is independent of $x_{it}$, and $\{\mu_{i,\tau},1\leq i\leq n\}$ are generated from independent standard normal distributions and kept unchanged for all replications. 
Note that $\int_0^1\sin(2\pi x)dx=0$, and $y_{it}$ has the conditional quantile function of $Q_\tau(y_{it}|x_{it})=\mu_{i,\tau}+c\cdot\sin(2\pi x_{it})+F^{-1}(\tau)$, where $F(\tau)$ is the distribution function of $e_{it}(\tau)$.
Thus, the smooth function of interest at model \eqref{eq:ConditionalQuantile} is $m_{i,\tau}(x)=c_i\sin(2\pi x)$.
The number of individuals is set to $n=60$ and 120, and that of time points is $T=100$ and 1000. There are 1000 replications for each combination of $n$ and $T$. 

The ADMM algorithm in Section \ref{sec:algo} is employed to search for the estimators, and the tuning parameter $\lambda$ is selected by the SIC in Section \ref{sec:SIC}. The estimation performance of both estimators is evaluated by mean squared errors (MSEs), which are defined as the average of $(nT)^{-1}\sum_{i=1}^n\sum_{t=1}^T[{g}_{i,\tau}(x_{it})-m_{i,\tau}(x_{it})]^2$ over 1000 replications with ${g}_{i,\tau}(\cdot)=\widehat{m}_{i,\tau}(\cdot)$ or $\widetilde{m}_{i,\tau}(\cdot)$, and the subgroup selection performance is measured by the percentage of correct recovery of the number of groups. 
Moreover, the asymptotic normality in \eqref{eq:asymp_norm}, together with Theorem \ref{thm:2}, makes it possible to use $\widehat{m}_{i,\tau}(x)$ to construct the pointwise confidence intervals for ${m}_{i,\tau}(x)$. We set the confidence level to 95\%, and the re-weighted Nadaraya-Watson method is used to estimate conditional densities in the asymptotic variance.

Experiment 1 is to check the estimation performance with $i.i.d.$ covariates. Specifically, the covariate and error terms $\{x_{it},e_{it}(\tau)\}$ are set to be $i.i.d.$ across both $i$ and $t$, where $x_{it}$ follows the standard uniform distribution, and $e_{it}(\tau)$ follows a normal distribution with the standard deviation of 0.1 and mean satisfying that $\mathbb{P}(e_{it}(\tau)\leq 0)=\tau$. 
Following \citet{yao2000nonparametric}, for subgroup $k$, we define its signal-to-noise ratio as $\text{SNR}_{(k)}=\mathbb{E}[m_{(k),\tau}^2(x_{it})]/\mathbb{E}[\sigma_{k}^2(x_{it})]$, where $\sigma_k(x_{it})$ is the conditional standard deviation of $e_{it}(\tau)$ given $x_{it}$. In this experiment, the signal-to-noise ratios for the three subgroups are 10, 50, and 100, respectively.
We consider two quantile levels of $\tau=0.5$ and $0.7$. 
Table \ref{table:ex1} gives the percentages of correct group number determination and MSEs of $\widehat{m}_{i,\tau}(\cdot)$ and $\widetilde{m}_{i,\tau}(\cdot)$, and Figure \ref{fig:cov_prob_iid_0pt5_0pt7} presents the empirical coverage probabilities for pointwise confidence intervals of ${m}_{i,\tau}(x)$.
It can be seen that the MSEs of both estimators are close to each other, and they decrease as $n$ and/or $T$ increases.
Moreover, the MSEs at $\tau=0.7$ are slightly larger than those at $\tau=0.5$, and the SIC can correctly select the subgroups for almost all cases. 
Finally, the pointwise confidence intervals can provide a reliable coverage even when $(n,T)$ is as small as $(60,100)$.

Experiment 2 is designed for the case with weak dependence in the sequence of covariates. Specifically, a sequence of random variables are first generated by an autoregressive model, $\widetilde{x}_{it}=0.5\widetilde{x}_{i,t-1}+u_{it}$, where $\{u_{it}\}$ are $i.i.d.$ standard normal random variables, and they are then transformed into covariates with the range of $[0,1]$ by $x_{it}=F_{N(0,4/3)}(\widetilde{x}_{it})$, where $F_{N(0,4/3)}(\cdot)$ is the normal distribution function with mean zero and variance $4/3$. 
The error terms $\{e_{it}(\tau)\}$ are generated as in the first experiment, and the quantile level is fixed at $\tau=0.5$. The signal-to-noise ratios are the same as those in the first experiment. Table \ref{table:ex2} presents the percentage of correct subgroup selection and MSEs of two estimators, and the left panel of Figure \ref{fig:cov_prob_depco_htail} plots the empirical coverage probabilities of pointwise confidence intervals for ${m}_{i,\tau}(x)$. The results are similar to those in the first experiment, and it suggests that, when $x_{it}$'s  are dependent and $\alpha$-mixing across $t$, the pairwise fusion penalized estimator has performance as good as the case with $\textit{i.i.d.}$ covariates. In other words, the proposed methodology is robust to the weak dependence in covariates.

Experiment 3 is for the case with heavy tails, and the errors are generated by $e_{it}(\tau)=0.1\epsilon_{it}(\tau)$, where $\{\epsilon_{it}(\tau)\}$ are independent across $i$ and $t$ and follow the Student's $t$ distribution with five degrees of freedom ($t_5$ distribution). 
The signal-to-noise ratios of three subgroups are 6, 30 and 60, respectively, and 
all other settings are the same as in the first experiment. The MSEs of both estimators and percentage of correct group number determination are listed in Table \ref{table:ex2}, and empirical coverage probabilities are presented in the right panel of Figure \ref{fig:cov_prob_depco_htail}. Note that the $t_5$ distribution has much heavier tails than those of the normal distribution. While, in general, a similar performance to that in the first experiment can be observed, although the response is subject to the heavy-tailed noise contamination. Moreover, the percentage of correct group number determination can still reach 100\%, and coverage rates of pointwise confidence intervals are close to the target level 95\%. We may argue that the proposed methodology is robust to the heavy-tailed contamination.

\subsection{Subgroup analysis at multiple quantile levels}

This subsection first introduces a data generating process for Experiment 4 with varying subgroup structures at different quantile levels, and the proposed methodology is then evaluated in terms of finite-sample performance. 

Consider a panel data model with $n=60$ individuals,
\begin{equation}
	y_{it}=\mu_i+\sin(2\pi x_{it})+\sigma_i(x_{it},I\{\varepsilon_{it}<0\})\varepsilon_{it},
\end{equation}
where $\varepsilon_{it}$ is independent of $x_{it}$, $\{x_{it},\varepsilon_{it}\}$ are $i.i.d.$ across both $i$ and $t$, $x_{it}$ and $\varepsilon_{it}$ follow the standard uniform and standard normal distributions, respectively, and $\{\mu_{i},1\leq i\leq n\}$ are generated from independent standard normal distributions and keep unchanged for all replications.
The scale function $\sigma_i(\cdot,\cdot)$ has the form of
\begin{equation}
	\sigma_i(x,1)=\begin{cases}
		\sigma_{(1),\text{L}}(x)=0.4+0.8x, & i\in G_{1,\text{L}}\\
		\sigma_{(2),\text{L}}(x)=1.2-0.8x, & i\in G_{2,\text{L}}
	\end{cases}
\end{equation}
and
\begin{equation}
	\sigma_i(x,0)=\begin{cases}
		\sigma_{(1),\text{U}}(x)=0.4+0.8x, & i\in G_{1,\text{U}}\\
		\sigma_{(2),\text{U}}(x)=1.2-0.8x, & i\in G_{2,\text{U}}\\
		\sigma_{(3),\text{U}}(x)=0.4, & i\in G_{3,\text{U}}
	\end{cases},
\end{equation}
where $\{G_{i,\text{L}}:i=1,2\}$ and $\{G_{i,\text{U}}:i=1,2,3\}$ are different partitions of all individuals. Specifically, we set $G_{1,\textup{L}}=\{1,\dots,30\}$, $G_{2,\textup{L}}=\{31,\dots,60\}$, $G_{1,\textup{U}}=\{1,\dots,20\}$, $G_{2,\textup{U}}=\{21,\dots,40\}$, and $G_{3,\textup{U}}=\{41,\dots,60\}$, and the signal-to-noise ratios are 0.721 for the first 40 individuals and 1.172 for the last 20 ones, respectively.

One can easily check that the conditional median functions are homogeneous, up to an individual effect; that is $Q_{0.5}(y_{it}|x_{it})=\mu_{i}+\sin(2\pi x_{it})$, for all $1\leq i\leq n$. For any lower quantile $\tau<0.5$, the conditional quantile has a subgroup structure,
\begin{equation}
	Q_\tau(y_{it}|x_{it})=\begin{cases}
		\mu_i+\sin(2\pi x_{it})+(0.4+0.8x_{it})\Phi^{-1}(\tau), & i\in G_{1,\text{L}}\\
		\mu_i+\sin(2\pi x_{it})+(1.2-0.8x_{it})\Phi^{-1}(\tau), & i\in G_{2,\text{L}}\\
	\end{cases},
\end{equation}
and, for any upper quantile $\tau>0.5$, the conditional quantile has another subgroup structure,
\begin{equation}
	Q_\tau(y_{it}|x_{it}) = \begin{cases}
		\mu_i + \sin(2\pi x_{it}) + (0.4+0.8x_{it})\Phi^{-1}(\tau), & i \in G_{1,\text{U}} \\
		\mu_i + \sin(2\pi x_{it}) + (1.2-0.8x_{it})\Phi^{-1}(\tau), & i \in G_{2,\text{U}}\\
		\mu_i + \sin(2\pi x_{it}) + 0.4\Phi^{-1}(\tau), & i \in G_{3,\text{U}}
	\end{cases},
\end{equation}
where $\Phi(\cdot)$ is the standard normal distribution function.
Note that $\int_0^1\sin(2\pi x)dx=0$ and $\int_0^1xdx=0.5$, and we then can define the function of $m_{i,\tau}(\cdot)$ at model \eqref{eq:ConditionalQuantile}.
Compared with those in the previous subsection, the above data generating process has three features: (i) the model is heteroskedastic, (ii) the subgroup structure varies over $\tau$, and (iii) the difference between subgroups is relatively small. 
More details about the model can be found in Appendix C of the supplementary file.

The proposed SCAD pairwise fusion penalized estimation, as well as the oracle estimation, is conducted at three quantile levels of $\tau=0.1$, 0.5 and 0.9, at which the individuals belong to two, one and three subgroups, respectively. We employ the SIC to select the tuning parameter $\lambda$, and hence the group structure.
The number of time points is set to $T=100$, and there are 1000 replications.
Table \ref{table:ex4} gives the percentages of correct group number determination and MSEs of $\widehat{m}_{i,\tau}(\cdot)$ and $\widetilde{m}_{i,\tau}(\cdot)$, and Figure \ref{fig:cov_prob_locsc} presents the empirical coverage probabilities of pointwise confidence intervals for ${m}_{i,\tau}(x)$.

From Table \ref{table:ex4}, for the case of $\tau=0.5$, the percentage of correct structure recovery is 99.6\%, and MSEs of the oracle and penalized estimators are almost the same.
It is due to the fact that all individuals have a homogeneous structure at the median.
At the lower quantile level of $\tau=0.1$, the correct subgroup selection rate is 97\%. However, when estimating the nonparametric functions, all data points are essentially split into two groups, and it then leads to a worse performance of the penalized estimator.
On the other hand, at the upper level of $\tau=0.9$, there are three subgroups, and the estimation performance  is even worse. The percentage of correct subgroup selection also drops to 88.4\%. 
Moreover, compared with those in the previous subsection, the coverage rates of the confidence intervals in Figure \ref{fig:cov_prob_locsc} are relatively less accurate. It is due to the fact that the data generating process is heteroskedastic, and this makes the estimation of conditional densities in the asymptotic variance more challenging.  
In general, due to the relatively small difference between subgroups compared with the data generating processes in the previous subsection, both the subgroup selection and estimation have a worse performance, and they can be significantly improved with a larger value of $n$ or $T$.

In sum, we may conclude that the proposed method can be used to conduct homogeneity pursuit at different quantile levels, and it is robust to the weak dependence, heavy tails, and heterogeneous errors.

\section{Real Data Analysis} \label{sec:real}

This section analyzes a climate dataset collected from different regions of the United Kingdom (UK). It can be downloaded from the website \url{https://www.metoffice.gov.uk/research/climate/maps-and-data/historic-station-data}. The climate data are collected monthly, and there are 37 weather stations in total.
We attempt to study how the temperature can be affected by the sunshine duration. The response $y_{it}$ is set to be the mean of daily maximum temperature, and the covariate $x_{it}$ is the total sunshine duration counted by hours.
We consider the data from January $1993$ to December $2009$,  and there are total $204$ time points.
After removing these stations with missing values, there are total $16$ stations left, i.e. the number of individuals is 16. 

We first consider the quantile level of $\tau=0.5$, and the proposed nonparametric quantile regression is applied to search for the subgroups. The tuning parameter $\lambda$ is selected by the SIC in Section \ref{sec:SIC}. All data are seasonally adjusted, and $x_{it}$ are standardized into the range $[0,1]$ before model fitting. For different values of the tuning parameter $\lambda$, Figure \ref{fig:k_sic_0pt5} plots the corresponding numbers of selected groups $\widehat{K}$, and the calculated values of SIC are also given. It is reasonable to choose $\widehat{K}=2$ groups, and the corresponding tuning parameter is $1.86$. Figure \ref{fig:subgroup_func} presents the estimated nonlinear functions, $\widehat{m}_{(k),0.5}$, with $k=1$ and $2$, and Figure \ref{fig:subgroup_map} plots the selected weather stations for these two groups.

It can be seen that the big Group 1 consists of 14 stations out of 16: Armagh, Bradford, Eastbourne, Paisley, Shawbury, Sheffield, Waddington, Cambridge, Eskdalemuir, Heathrow, Hurn, Leuchars, Oxford, and Ross-on-Wye. Therefore, up to the individual effects on the daily maximum temperature, most of the stations share exactly the same relationship between the total sunshine duration and daily maximum temperature. Two exceptions are the stations in Camborne and Lerwick, and they locate on extreme corners. 
With the individual effect presented in our model, the grouping structures are identified through the different shapes of the fitted nonparametric functions in Figure \ref{fig:subgroup_map}.
After checking the dataset in details, we found that, in Camborne and Lerwick, the sunshine duration reaches the maximum level from April to June, but the hottest months are July, August and September. In the meanwhile, for the other stations in Group 1, the peaks of both sunshine duration and daily maximum temperature are in July and August. This could explain the heterogeneity of $\widehat{m}_{(1),0.5}(x)$ and $\widehat{m}_{(2),0.5}(x)$ on the interval $x\in(0.5,1)$.

The proposed methodology is also applied to the quantile level of $\tau=0.9$, which corresponds to the scenario with extremely high temperatures. There are also two groups detected and the fitted nonparametric functions $\widehat{m}_{(k),0.9}(x)$ are presented in Figure \ref{fig:subgroup_func}. However, the subgroup memberships at $\tau=0.9$ are different from those at $\tau=0.5$. Specifically, Eastbourne station joins the small Group 2 with Camborne and Lerwick, and all the other group memberships are the same as those at $\tau=0.5$. 
From the map in Figure \ref{fig:subgroup_map}, Eastbourne station locates at the southeast edge of England and is along the English Channel. The unique location makes the corresponding maximum temperatures relatively lower, and hence Eastbourne joins Camborne and Lerwick at the high quantile level of $\tau=0.9$.
Finally, we consider the quantile level of $\tau=0.1$, and all individuals belong to one group in this case. Note that $y_{it}$ refers to the mean of daily maximum temperatures, and then its lower quantiles may be related to relatively mild temperatures, which can be affected by the sunshine duration for all stations in a same way.

\section{Conclusions and Discussions}

For panel data models with subgroup effects, this article considers a nonparameteric method to explore the relationship between response and predictors, which can avoid the wrong grouping results due to possible model misspecification.
In addition, quantile regression is also employed to detect different pursuit results at various quantile levels. More importantly, these two features are both supported by the real analysis on a climate dataset.
A concave fused penalty is used to select the groups and estimate models simultaneously, and the corresponding oracle properties are hence expected.
Moreover, the developed ADMM algorithm can be used to efficiently solve the pairwise fusion penalized minimization problem.

This article can be extended in three directions. First, the individual effects are considered in the conditional quantile functions in this article, and the time effects could also be included in the model. However, to consistently the individual effects and time effects, the numbers of individual $n$ and time points $T$ are required to diverge to infinity simultaneously.
Second, when there are many covariates, the nonparametric method will lead to a large number of parameters and, as in \citet{lian2019homogeneity}, we may consider some semiparametric approaches to substantially reduce the model complexity. 
Finally, this article considers the subgroup structure with a fixed number of groups, and the exact homogeneity is also assumed within each group. It may be interesting in theory to relax the exact homogeneity to approximate homogeneity with a diverging number of groups to flexibly characterize the heterogeneous nature in real applications.

%
%
%
%
%

\bibliography{QuantileHomogeneity}

\newpage
\newcolumntype{L}{>{\centering\arraybackslash}m{3cm}}

\begin{table}[!htp]
	\caption{Mean squared errors (MSEs) of the oracle and SCAD-penalized estimators, and  percentages of correct subgroup recovery for the case with $i.i.d.$ covariates (Experiment 1). The estimators are evaluated at two quantile levels of $\tau=0.5$ and 0.7.}
	\centering
	\begin{tabular}{ll|ccccccc}
		\specialrule{0.07em}{0.5em}{0.1em} \hline
		\multirow{3}{*}{$n$}& \multirow{3}{*}{$T$} & \multicolumn{3}{c}{$\tau = 0.5$} && \multicolumn{3}{c}{$\tau = 0.7$} \\
		\cline{3-5} \cline{7-9}
		&&\multirow{2}{3cm}{\centering \% of correct group number} & \multicolumn{2}{c}{MSE $(\times 10^{-4})$}  && \multirow{2}{3cm}{\centering \% of correct group number} & \multicolumn{2}{c}{MSE $(\times 10^{-4})$} \\
		\cline{4-5}\cline{8-9}
		&&& Oracle & SCAD && & Oracle & SCAD \\
		\hline
		\multirow{2}{*}{60}& 100  & 99.0\%   & 1.005 & 1.242 && 99.0\%  & 1.086 & 1.261 \\
		& 1000 & 99.8\% & 0.097 & 0.108 && 100\% & 0.109 & 0.122 \\
		\hline
		\multirow{2}{*}{120}& 100  & 100\% & 0.490 & 0.464 && 100\% & 0.545 & 0.508 \\
		& 1000 & 100\% & 0.049 & 0.050 && 100\% & 0.055 & 0.059 \\
		\hline
	\end{tabular}
	\label{table:ex1}
\end{table}

\begin{table}[!htp]
	\caption{Mean squared errors (MSEs) of the oracle and SCAD-penalized estimators, and  percentages of correct subgroup recovery for the cases with weakly dependent covariates (Experiment 2) and heavy-tailed data (Experiment 3).}
	\centering
	\begin{tabular}{ll|ccccccc}
		\specialrule{0.07em}{0.5em}{0.1em} \hline
		\multirow{3}{*}{$n$}& \multirow{3}{*}{$T$} & \multicolumn{3}{c}{Weakly dependent covariates} && \multicolumn{3}{c}{Heavy-tailed data} \\
		\cline{3-5} \cline{7-9} 
		&&\multirow{2}{3cm}{\centering \% of correct group number} & \multicolumn{2}{c}{MSE $(\times 10^{-4})$}  && \multirow{2}{3cm}{\centering \% of correct group number} & \multicolumn{2}{c}{MSE $(\times 10^{-4})$} \\
		\cline{4-5}\cline{8-9}
		&&& Oracle & SCAD && & Oracle & SCAD \\
		\hline
		\multirow{2}{*}{60}& 100  & 99.2\% & 1.010 & 1.214 && 99.6\% & 1.108 & 1.356 \\
		& 1000 & 100\%  & 0.097 & 0.106 && 100\%  & 0.109 & 0.120 \\
		\hline
		\multirow{2}{*}{120}& 100  & 100\% & 0.497 & 0.468 && 100\% & 0.551 & 0.523 \\
		& 1000 & 100\% & 0.049 & 0.051 && 100\% & 0.054 & 0.055 \\
		\hline
	\end{tabular}
	\label{table:ex2}
\end{table}

\begin{table}[!htp]
    \caption{Mean squared errors (MSEs) of the oracle and SCAD-penalized estimators, and  percentages of correct subgroup recovery for Experiment 4 with varying subgroup structures at different quantile levels. We consider three quantile levels of $\tau=0.1$, 0.5 and 0.9. }
    \centering
    \begin{tabular}{ll|ccccc}
        \specialrule{0.07em}{0.5em}{0.1em} \hline
        \multicolumn{2}{c|}{$(n,T)=(60, 100)$} & $\tau = 0.1$ && $\tau = 0.5$ && $\tau = 0.9$ \\
        \hline
        \multicolumn{2}{c|}{\centering \% of correct group number} & $97.0\%$ && $99.6\%$ && $88.4\%$ \\
        \hline
        \multirow{2}{*}{MSE $(\times 10^{-4})$} & Oracle & 4.512  && 0.849  && 6.217  \\
        & SCAD & 5.188 && 0.841 && 8.602 \\
        \hline
    \end{tabular}
    \label{table:ex4}
\end{table}

\begin{figure}[h]
    \centering
    \begin{subfigure}[b]{0.525\textwidth}
        \includegraphics[width=\textwidth]{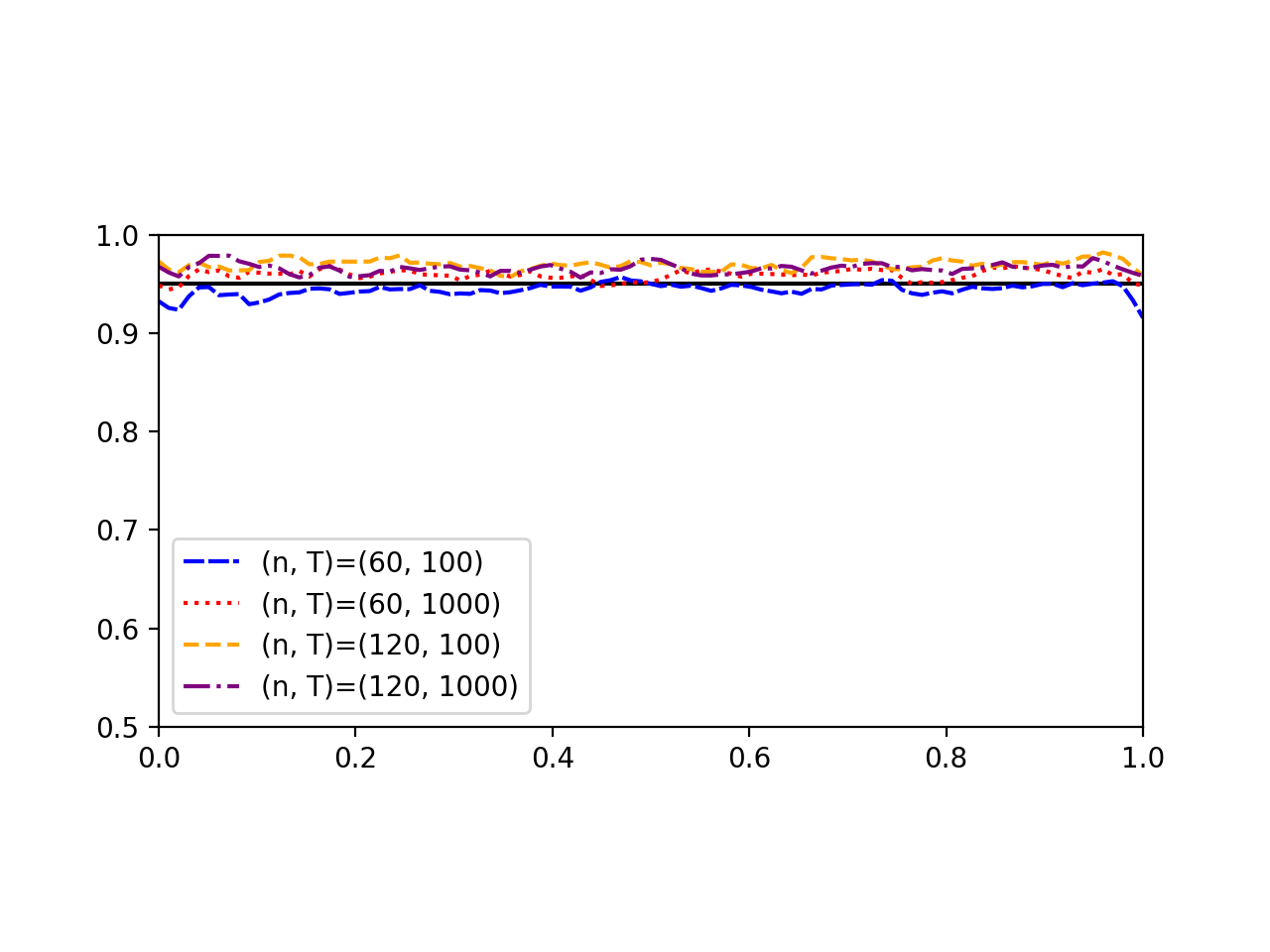}
        \vspace{-0.35in}
    \end{subfigure}\hspace{-0.95cm}
    \begin{subfigure}[b]{0.525\textwidth}
        \includegraphics[width=\textwidth]{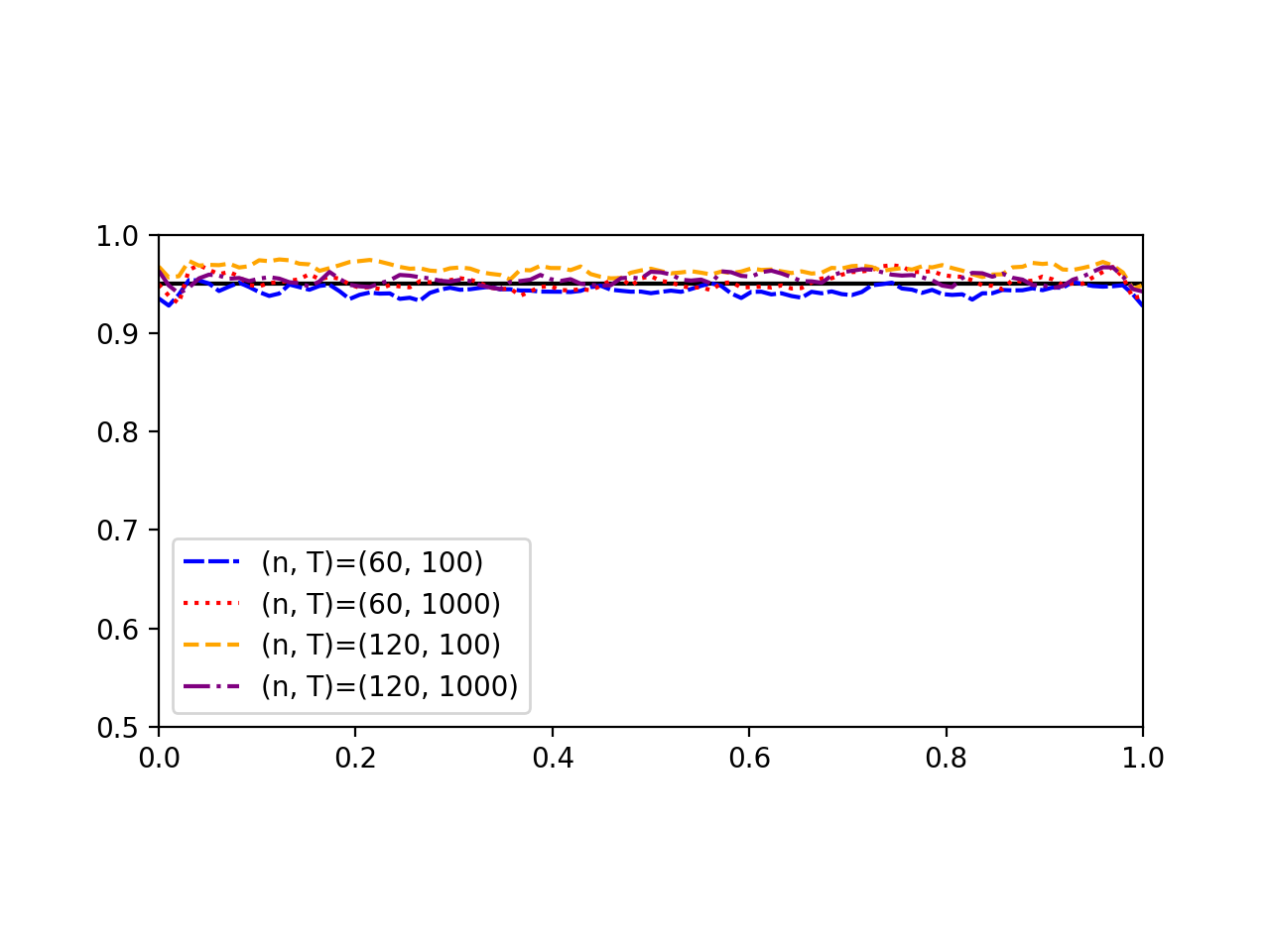}
        \vspace{-0.35in}
    \end{subfigure}
    \caption{Empirical coverage probabilities for pointwise confidence intervals of ${m}_{i,\tau}(\cdot)$ in Experiment 1 at two quantile levels of $\tau=0.5$ (left panel) and 0.7 (right panel).}
    \label{fig:cov_prob_iid_0pt5_0pt7}
\end{figure}

\begin{figure}[h]
	\centering
	\begin{subfigure}[b]{0.525\textwidth}
		\includegraphics[width=\textwidth]{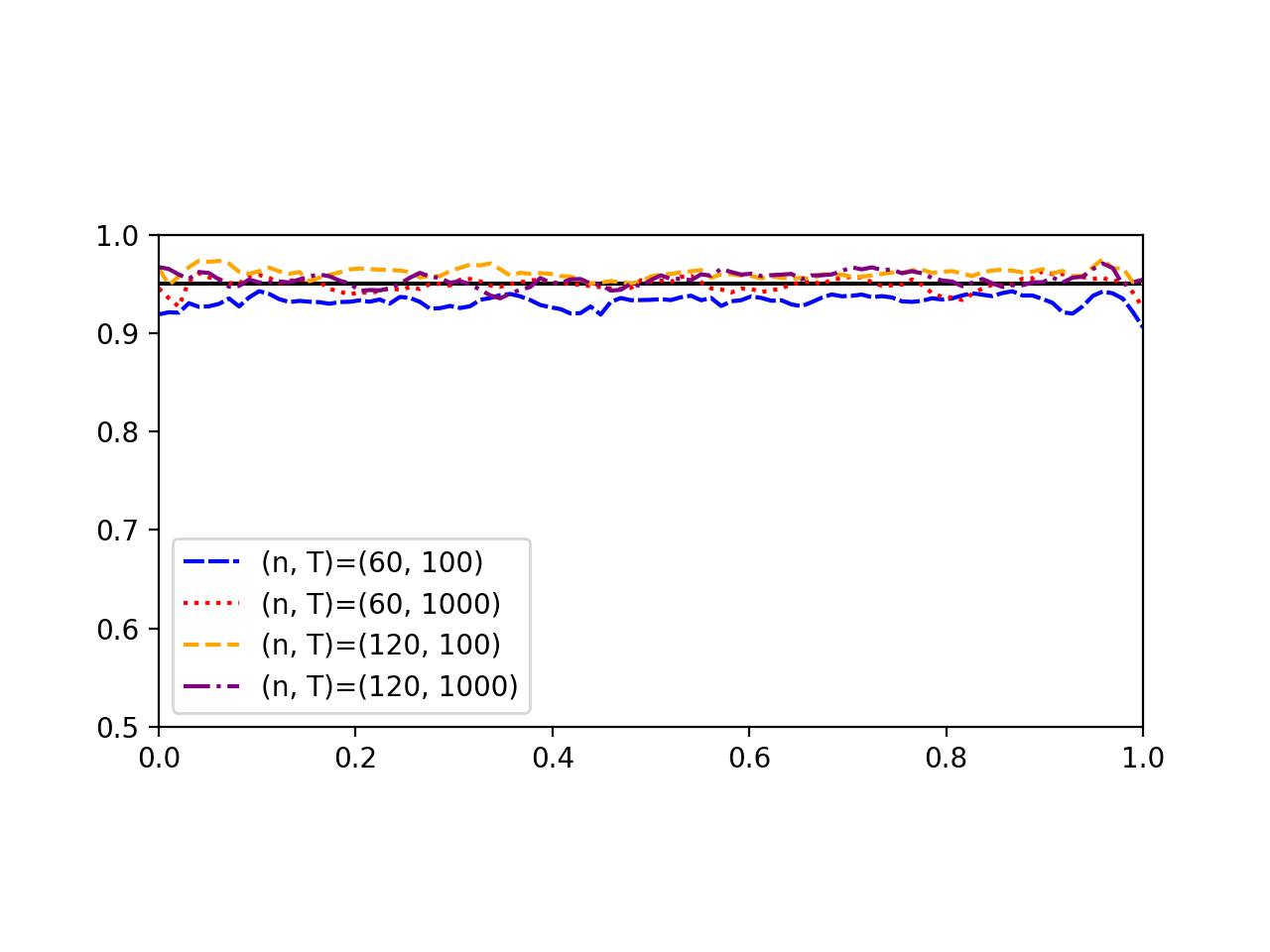}
		\vspace{-0.35in}
	\end{subfigure}\hspace{-0.95cm}
	\begin{subfigure}[b]{0.525\textwidth}
  	\includegraphics[width=\textwidth]{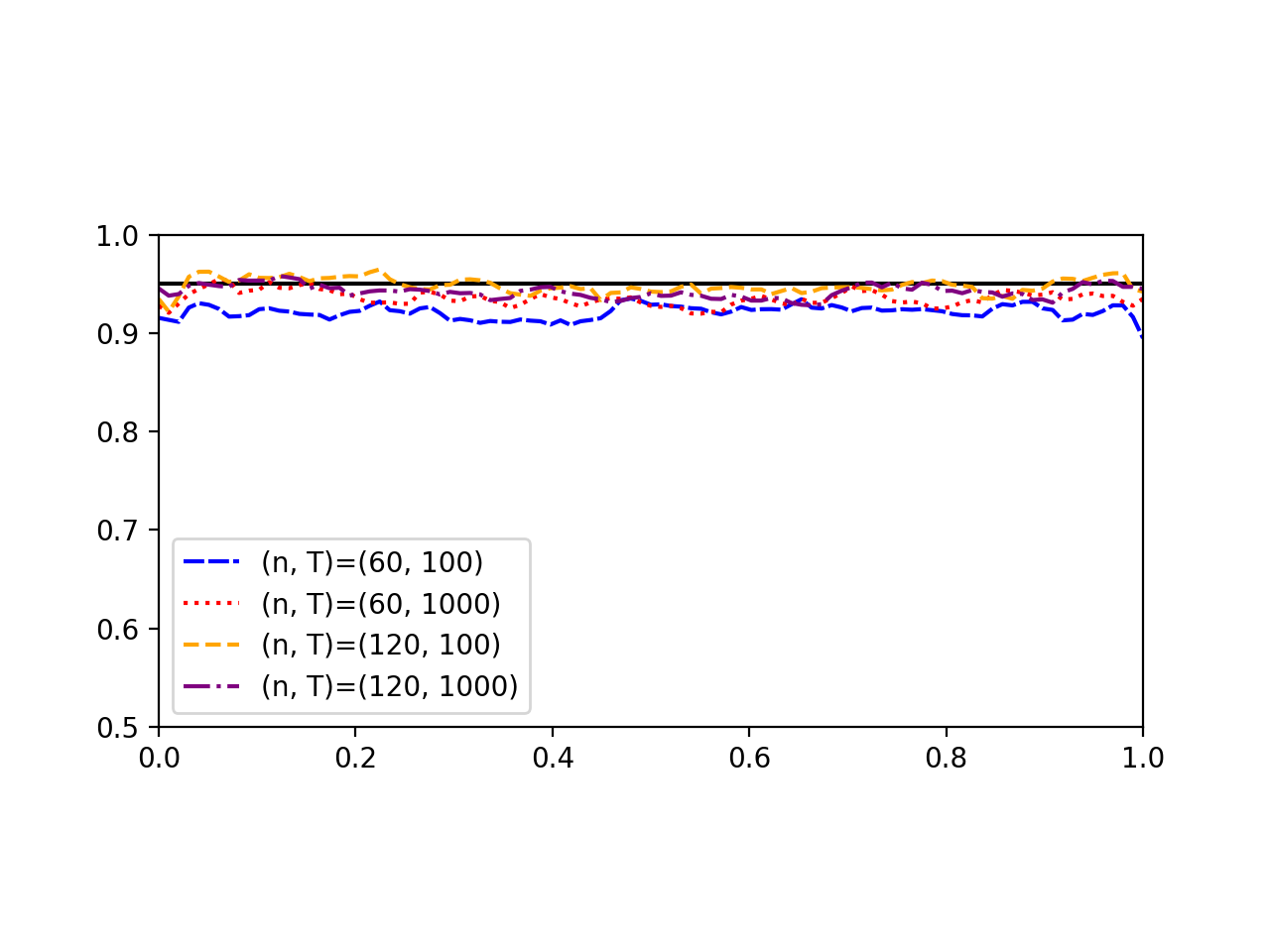}
		\vspace{-0.35in}
	\end{subfigure}
	\caption{Empirical coverage probabilities for pointwise confidence intervals of ${m}_{i,0.5}(\cdot)$ in Experiment 2 (left panel) and Experiment 3 (right panel).}
	\label{fig:cov_prob_depco_htail}
\end{figure}

\begin{figure}[h]
	\begin{center}
	\includegraphics[width=0.6\textwidth]{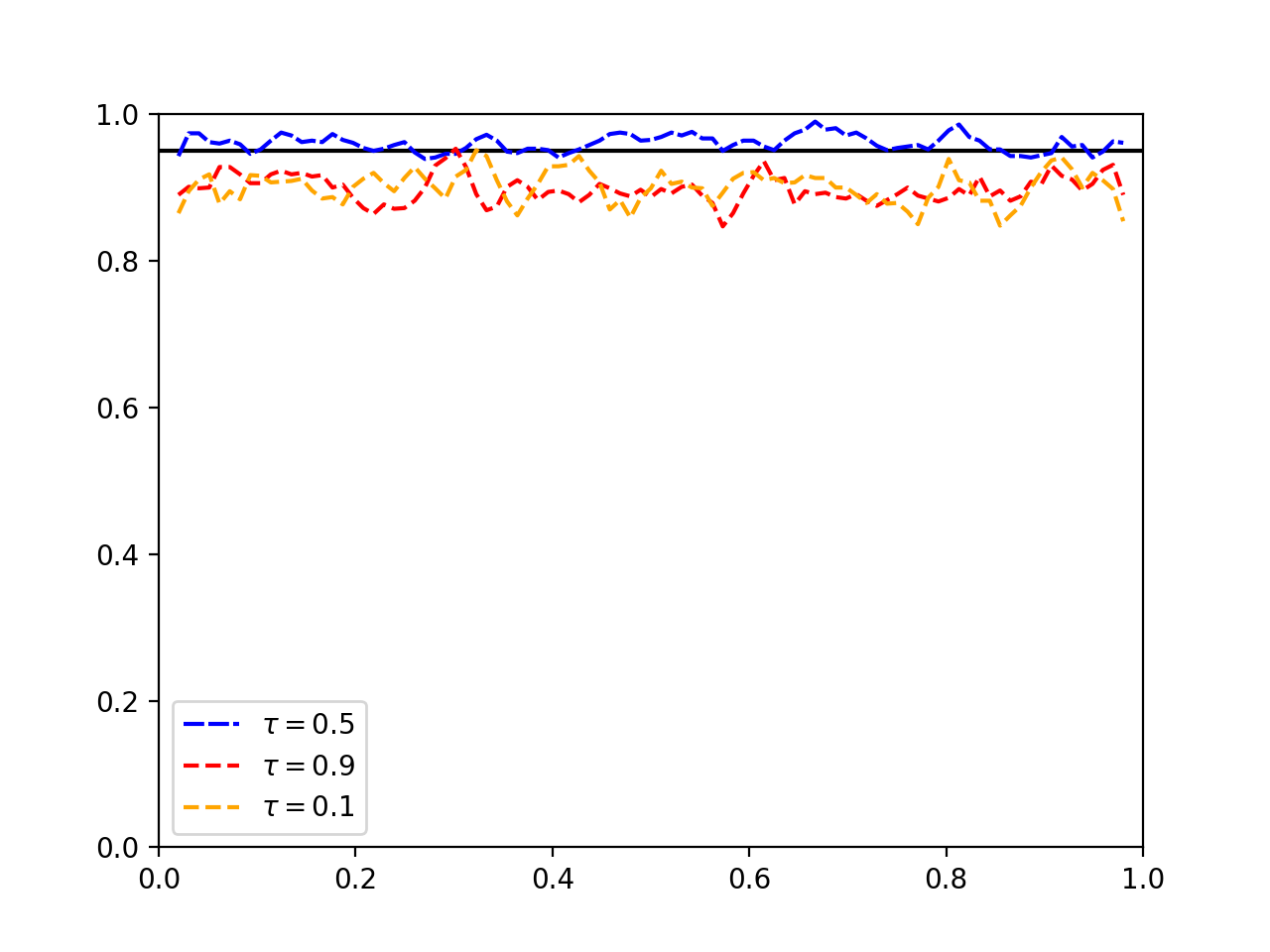}
	\caption{Empirical coverage probabilities for pointwise confidence intervals of ${m}_{i,\tau}(\cdot)$ in Experiment 4 with varying subgroup structures at different quantile levels. We consider $(n,T)=(60,100)$ and three quantile levels of $\tau=0.1$, 0.5 and 0.9.}
	\label{fig:cov_prob_locsc}
	\end{center}
\end{figure}

\clearpage

\begin{figure}[!htp]
    \centering
    \begin{subfigure}[b]{0.525\textwidth}
        \includegraphics[width=\textwidth]{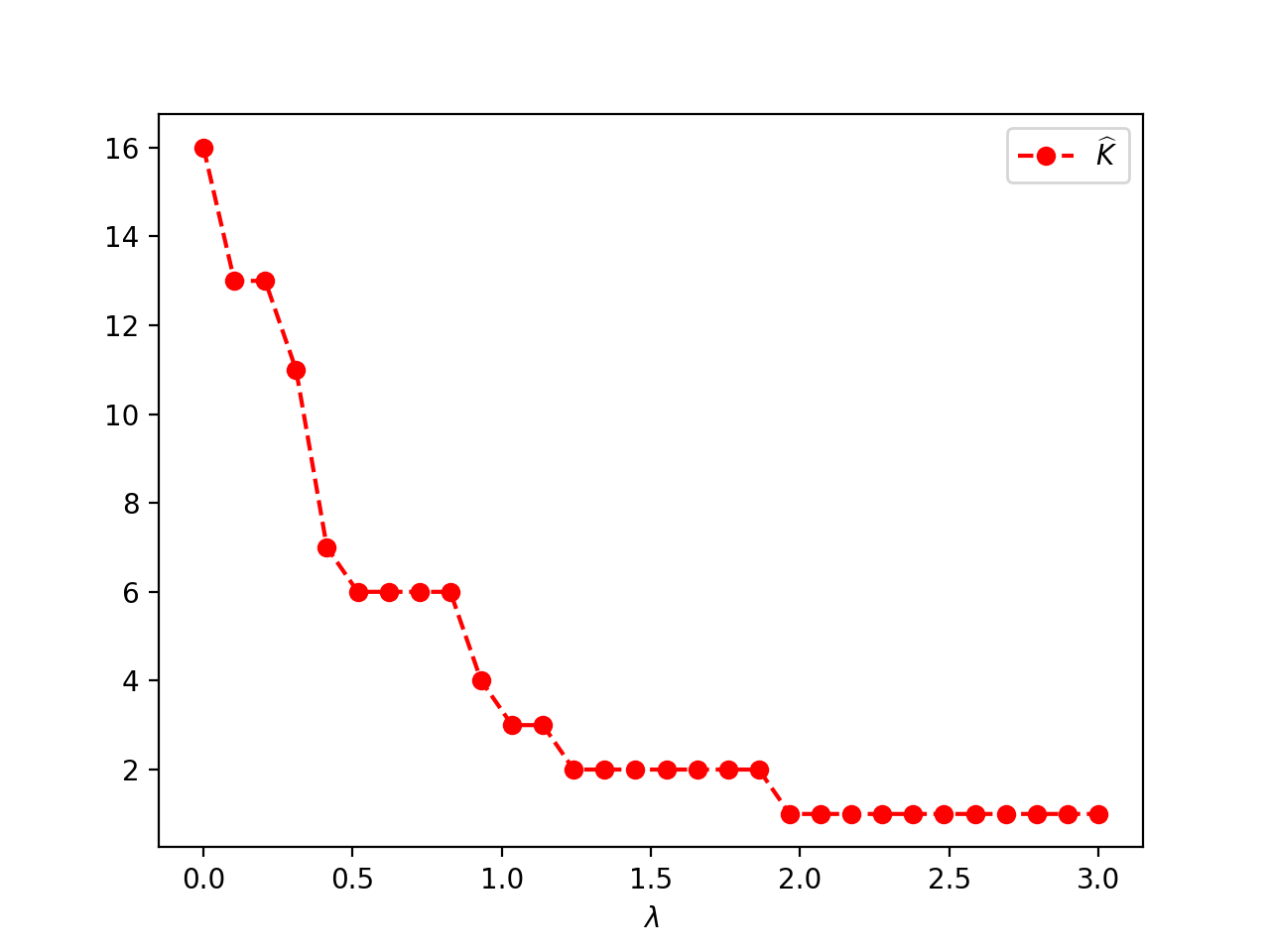}
    \end{subfigure}\hspace{-0.95cm}
    \begin{subfigure}[b]{0.525\textwidth}
        \includegraphics[width=\textwidth]{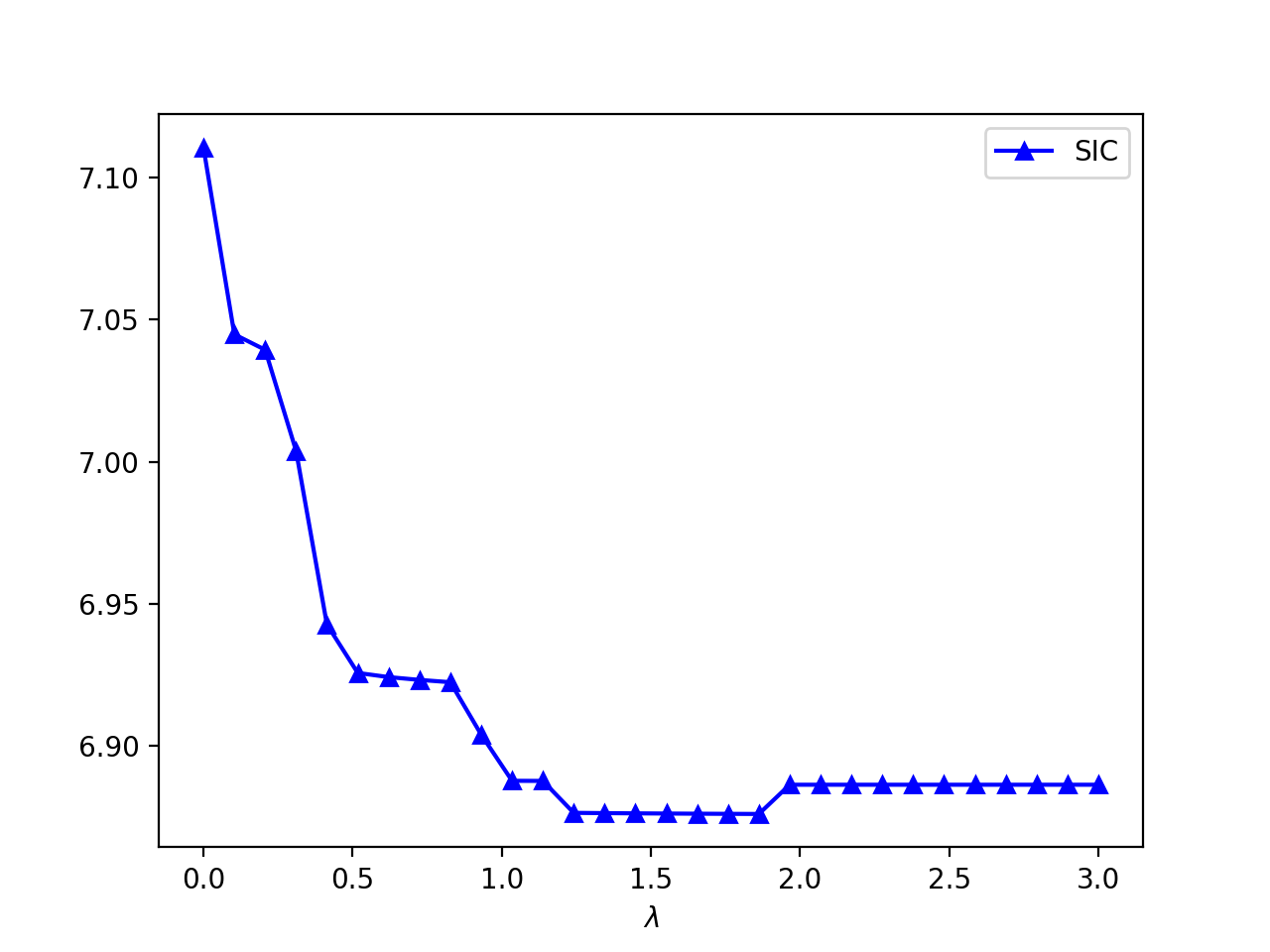}
    \end{subfigure}
    \caption{Numbers of selected groups $\widehat{K}$ in real data analysis under different tuning parameters $\lambda$ (left panel) and their corresponding values of SIC (right panel), for $\tau=0.5$.}
    \label{fig:k_sic_0pt5}
\end{figure}

\begin{figure}[!htp]
    \centering
    \begin{subfigure}[b]{0.525\textwidth}
        \includegraphics[width=\textwidth]{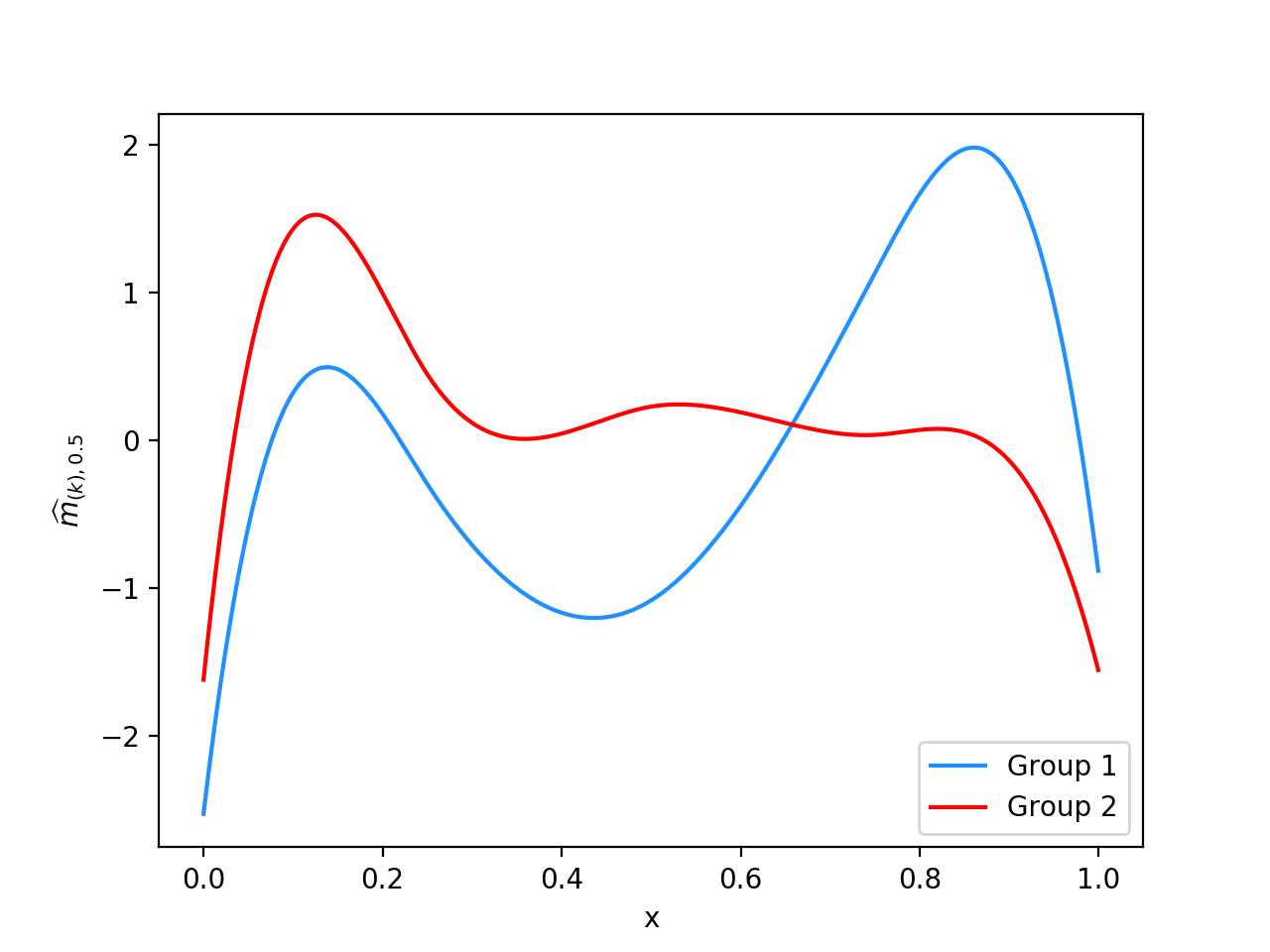}
    \end{subfigure}\hspace{-0.95cm}
    \begin{subfigure}[b]{0.525\textwidth}
        \includegraphics[width=\textwidth]{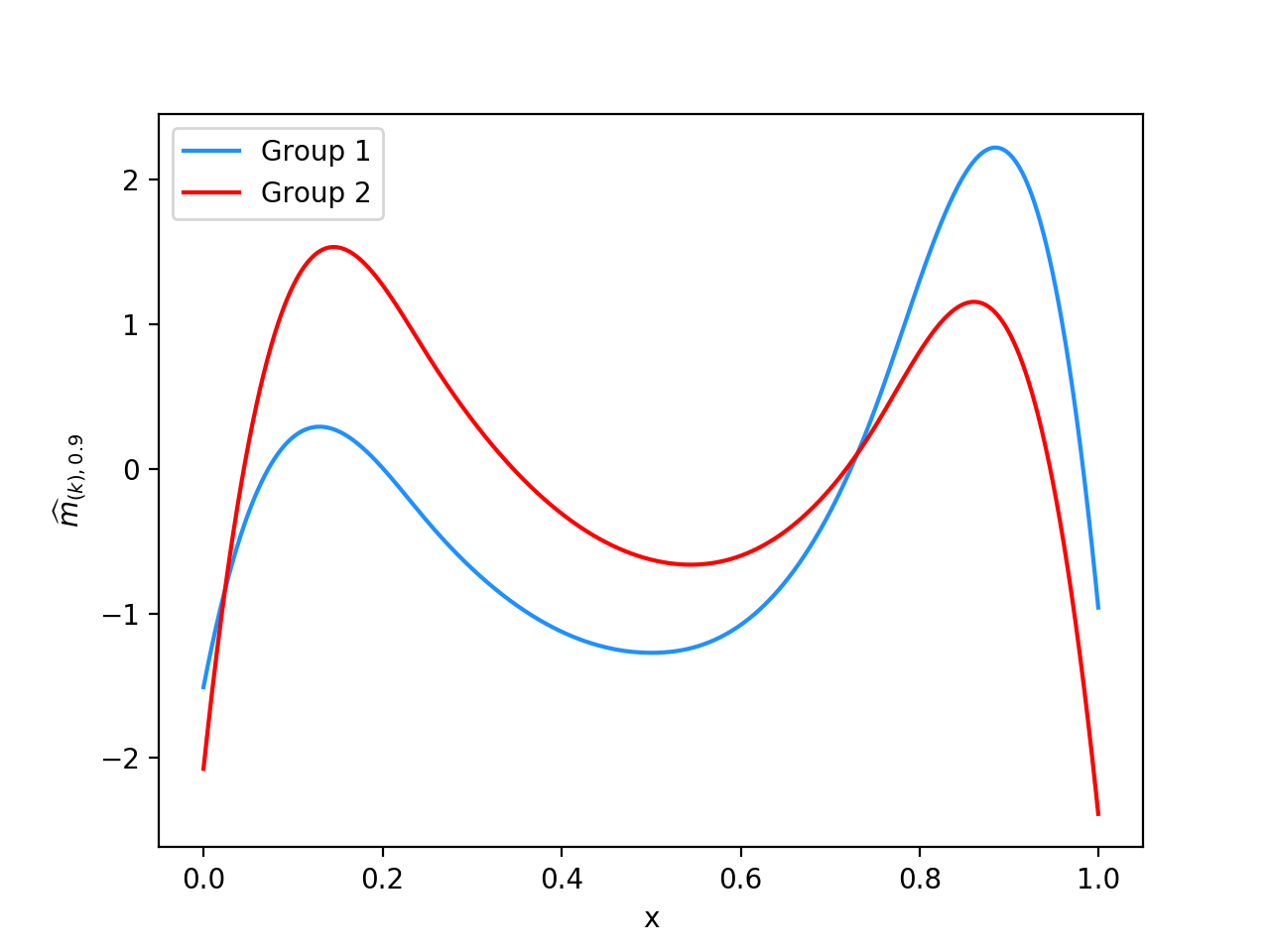}
    \end{subfigure}
    \caption{Fitted nonparametric functions, $\widehat{m}_{(k),\tau}$'s, in real data analysis at quantile levels $\tau=0.5$ (left panel) and $\tau=0.9$ (right panel).}
    \label{fig:subgroup_func}
\end{figure}

\begin{figure}[ht]
	\centering
	\includegraphics[width=11.6cm, height=13.42cm]{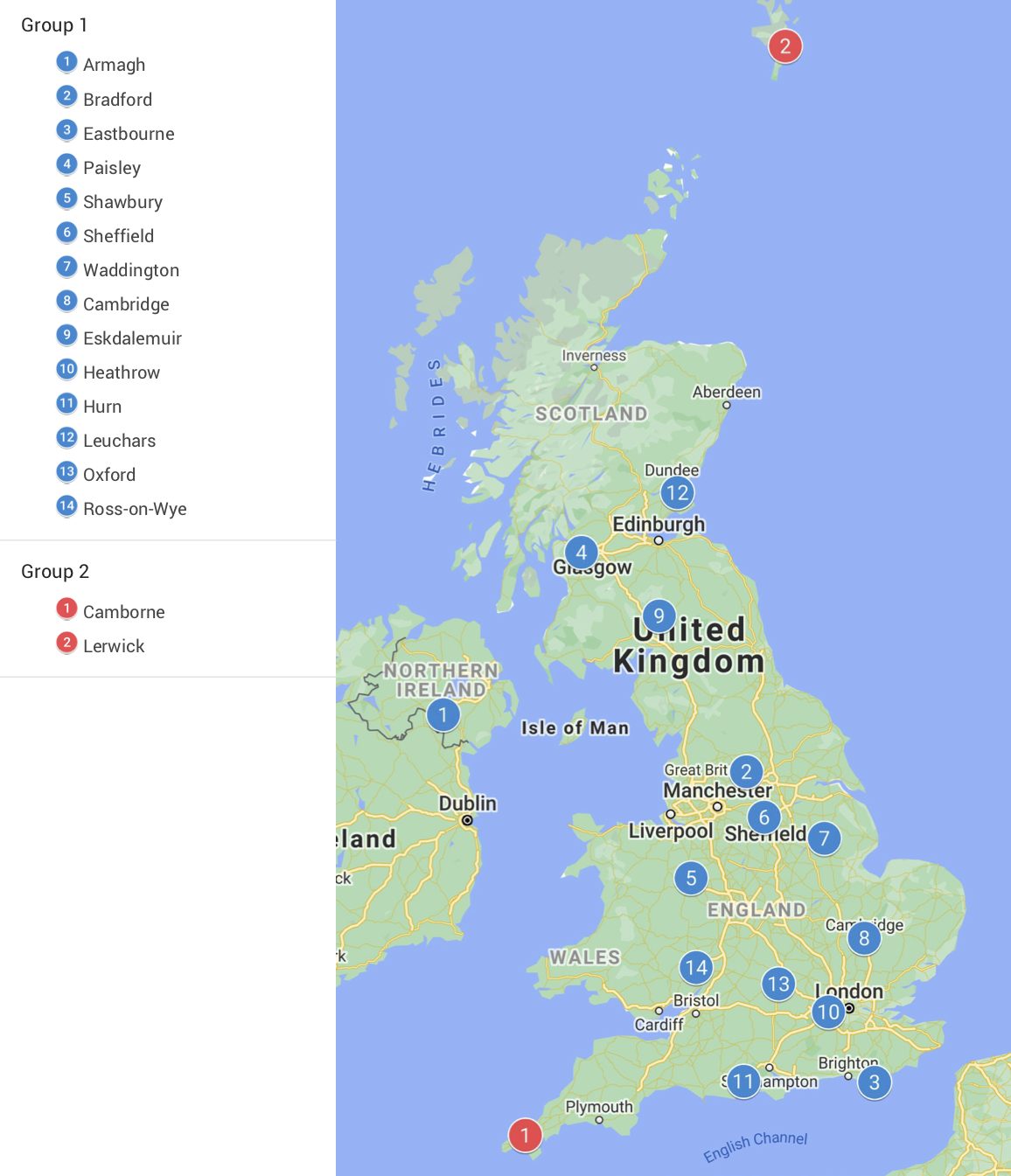}
	\caption{Subgroups of weather stations at the quantile level of $\tau=0.5$.}
	\label{fig:subgroup_map}
\end{figure}

\newpage

\section{Technical proofs}
	This section gives the technical proofs. Specifically, Section \ref{append:A} presents the asymptotic properties of the oracle estimator by providing the proofs of Theorem \ref{thm:1} and some auxiliary lemmas.
	The oracle property of the SCAD-penalized estimator and the consistency of the SIC are proved in Section \ref{append:B}.

\subsection{Proofs of Theorem \ref{thm:1} and auxiliary lemmas}\label{append:A}

In this subsection, we present the proof of Theorem \ref{thm:1} and relegate some auxiliary lemmas to the end of this appendix.

We start with some notations.
Throughout subsection \ref{append:A}, since we focus on the oracle estimator with a fixed quantile level $\tau$, we omit $\tau$ in all notations and simplify $\widehat{\mu}_i^\text{o}$ and $\widehat{\bm{\theta}}^\text{o}$ to $\widehat{\mu}$ and $\widehat{\bm{\theta}}$. let $\widehat{\bm{\mu}}=(\widehat{\mu}_1,\dots,\widehat{\mu}_n)^\top$.

For any $i=1,2,\dots,n$, denote $\widetilde{\bm{\Pi}}(x_{it})=(1,\bm{\Pi}(x_{it}^\top)^\top)$, $\bm{\vartheta}_i=(\mu_i,\bm{\theta}^\top)^\top$ and $\bm{\vartheta}_{0i}=(\mu_{0i},\bm{\theta}_{0}^\top)^\top$.
Let $M_{i}(\bm{\vartheta}_i):=T^{-1}\sum_{t=1}^T\rho_\tau(y_{it}-\widetilde{\bm{\Pi}}(x_{it})^\top\bm{\vartheta}_i)$, $\Delta_i^{(1)}(\bm{\vartheta}_i):=M_i(\bm{\vartheta}_i)-M_i(\bm{\vartheta}_{0i})$ and $\Delta_i^{(2)}(\bm{\vartheta}_i):=T^{-1}\sum_{t=1}^T[(\mu_i-\mu_{0i})+\bm{\Pi}(x_{it})^\top(\bm{\theta}-\bm{\theta}_0)](\tau-I\{e_{it}\leq0\})$. Define $f_i(0)=\mathbb{E}[f_i(0|x_{it})]$, $\bm{\gamma}_i:=f_i(0)^{-1}\mathbb{E}[f_i(0|x_{it})\bm{\Pi}(x_{it})]$, and $\bm{\Gamma}:=n^{-1}\sum_{i=1}^n\mathbb{E}[f_i(0|x_{it})\bm{\Pi}(x_{it})(\bm{\Pi}(x_{it})-\bm{\gamma}_i)^\top]$. Define the score vectors of the quantile regression problem $(nT)^{-1}\sum_{i=1}^n\sum_{t=1}^T\rho_\tau(y_{it}-\mu_i-\bm{\Pi}(x_{it})^\top\bm{\theta})$ as
\begin{equation}
	\begin{split}
		\mathbb{H}_i^{(1)}(\mu_i,\bm{\theta}):=&\frac{1}{T}\sum_{t=1}^T\{\tau-I(y_{it}\leq\mu_i+\bm{\Pi}(x_{it})^\top\bm{\theta})\}\\
		H_i^{(1)}(\mu_i,\bm{\theta}):=&\mathbb{E}[\mathbb{H}_i^{(1)}(\mu_i,\bm{\theta})]\\
		=&\mathbb{E}[\tau-F_i(\mu_i-\mu_{i0}+\bm{\Pi}(x_{it})(\bm{\theta}-\bm{\theta}_0)-R_{it}|x_{it})]\\
		\mathbb{H}^{(2)}(\bm{\mu},\bm{\theta}):=&\frac{1}{nT}\sum_{i=1}^n\sum_{t=1}^T\{\tau-I(y_{it}\leq\mu_i+\bm{\Pi}(x_{it})^\top\bm{\theta})\}\bm{\Pi}(x_{it})\\
		H^{(2)}(\bm{\mu},\bm{\theta}):=&\mathbb{E}[\mathbb{H}^{(2)}(\bm{\mu},\bm{\theta})]\\
		=&\frac{1}{n}\sum_{i=1}^n\mathbb{E}[\{\tau-F_i(\mu_i-\mu_{i0}+\bm{\Pi}(x_{it})(\bm{\theta}-\bm{\theta}_0)-R_{it}|x_{it})\}\bm{\Pi}(x_{it})]
	\end{split}
\end{equation}

\begin{proof}[\textbf{Proof of Theorem 1}]
	
	Throughout the proof, to ease the notations, we focus on the case with $K=1$. Since the subgroup structure is completely known when we define the oracle estimator, the results can be directly extended to the general $K>1$ groups.
	The proof consists of three steps. In the first step, we show the consistency of the oracle estimator $(\widehat{\bm{\mu}},\widehat{\bm{\theta}})$. In the second step, we show the convergence rates of $\max_{1\leq i\leq n}|\widehat{\mu}_i-\mu_{0i}|$ and $\|\widehat{\bm{\theta}}-\bm{\theta}_0\|_2$. Finally, we prove the asymptotic normality of $\widehat{\bm{\theta}}$ and the conditional variance of the estimated smooth function.\\
	
	\noindent\textit{Step 1. Consistency of $\widehat{\bm{\mu}}$ and $\widehat{\bm{\theta}}$}
	
	We first prove the consistency of $\widehat{\bm{\theta}}$. For $i=1,\dots,n$, note that
	\begin{equation}
		\begin{split}
			& M_i(\bm{\vartheta}_i)-M_i(\bm{\vartheta}_{0i})=\Delta_i^{(1)}(\bm{\vartheta}_{i})\\
			= & \underbrace{\Delta_i^{(1)}(\bm{\vartheta}_{i})+\Delta_i^{(2)}(\bm{\vartheta}_i)-\mathbb{E}[\Delta_i^{(1)}(\bm{\vartheta}_i)|\{x_{it}\}]}_{T_{1i}}
			- \underbrace{\Delta_i^{(2)}(\bm{\vartheta}_i)}_{T_{2i}} + \underbrace{\mathbb{E}[\Delta_i^{(1)}(\bm{\vartheta}_i)|\{x_{it}\}]}_{T_{3i}}.
		\end{split}
	\end{equation}
	
	Let $\xi_1(n,T)=\sqrt{H\log(n)^2/T}+H^{-d}$. Suppose that $\|\widehat{\bm{\theta}}-\bm{\theta}_0\|_2\geq L\xi_1(n,T)$ for some constant $L>0$. Then $\widehat{\bm{\vartheta}}_i=(\widehat{\mu}_i,\widehat{\bm{\theta}}^\top)^\top$ satisfies $\|\widehat{\bm{\vartheta}}_i-\bm{\vartheta}_{0i}\|_2\geq L\xi_1(n,T)$, for all $1\leq i\leq n$.
	By Lemmas \ref{lemma:T1_concentration}, \ref{lemma:4} and \ref{lemma:5}, we have $\max_{1\leq i\leq n}T_{1i}=o_p(\xi_1^2(n,T))$, $T_{2i}=L\cdot O_p(\xi^2_1(n,T))$, and $T_{3i}\geq CL^2\xi_1^2(n,T)$, respectively. Hence, for some sufficiently large $L$, $M_i(\bm{\vartheta}_i)>M_i(\bm{\vartheta}_{0i})$ for all $1\leq i\leq n$. 
	Hence, with probability approaching one, $\sum_{i=1}^nM_i(\bm{\vartheta}_i)>\sum_{i=1}^nM_i(\bm{\vartheta}_{0i})$, which however contradicts the definition of $\widehat{\mu}_i$ and $\widehat{\bm{\theta}}$. Therefore, we conclude that $\widehat{\bm{\theta}}=\bm{\theta}_0+O_p(\xi_1(n,T))=\bm{\theta}_0+o_p(1)$.
	
	Next, we prove the consistency of $\widehat{\mu}_i$, for $i=1,\dots,n$. Note that each $\widehat{\mu}_i$ is the minimizer of $M_i((\mu,\widehat{\bm{\theta}}^\top)^\top)$. Note that
	\begin{equation}
		\begin{split}
			& M_i((\mu_i,\widehat{\bm{\theta}}^\top)^\top)-M_i((\mu_{0i},\widehat{\bm{\theta}}^\top)^\top)= \Delta_i^{(1)}((\mu_i,\widehat{\bm{\theta}}^\top)^\top) - \Delta_i^{(1)}((\mu_{0i},\widehat{\bm{\theta}}^\top)^\top)\\
			= & \left[\Delta_i^{(1)}((\mu_i,\widehat{\bm{\theta}}^\top)^\top)+\Delta_i^{(2)}((\mu_i,\widehat{\bm{\theta}}^\top)^\top)-\mathbb{E}[\Delta_i^{(1)}((\mu_i,\bm{\theta}^\top)^\top)|\{x_{it}\}]|_{\bm{\theta}=\widehat{\bm{\theta}}}\right]\\
			- & \left[\Delta_i^{(1)}((\mu_{0i},\widehat{\bm{\theta}}^\top)^\top)+\Delta_i^{(2)}((\mu_{0i},\widehat{\bm{\theta}}^\top)^\top)-\mathbb{E}[\Delta_i^{(1)}((\mu_{0i},\bm{\theta}^\top)^\top)|\{x_{it}\}]|_{\bm{\theta}=\widehat{\bm{\theta}}}\right]\\
			+ & \left[\mathbb{E}\Delta_i^{(1)}[((\mu_i,\bm{\theta}^\top)^\top)|\{x_{it}\}]|_{\bm{\theta}=\widehat{\bm{\theta}}} - \mathbb{E}[\Delta_i^{(1)}((\mu_i,\bm{\theta}_0^\top)^\top)|\{x_{it}\}]\right]\\
			- & \left[\mathbb{E}[\Delta_i^{(1)}((\mu_{0i},\bm{\theta}^\top)^\top)|\{x_{it}\}]|_{\bm{\theta}=\widehat{\bm{\theta}}}-\mathbb{E}[\Delta_i^{(1)}((\mu_{0i},\bm{\theta}_0^\top)^\top)|\{x_{it}\}]\right]\\
			+ & \underbrace{\mathbb{E}[\Delta_i^{(1)}((\mu_i,\bm{\theta}_0^\top)^\top)|\{x_{it}\}] - \Delta_i^{(2)}((\mu_i,\widehat{\bm{\theta}}^\top)^\top) - \Delta_i^{(2)}((\mu_{0i},\widehat{\bm{\theta}}^\top)^\top)}_{T_{4i}}.
		\end{split}
	\end{equation}
	As $H^2\log(n)^2/T\to0$, we consider a positive sequence $\xi_2(n,T)$ such that $\sqrt{H}\xi_1(n,T)=o(\xi_2^2(n,T))$, suppose that $|\widehat{\mu}_i-\mu_{0i}|=L\xi_2(n,T)$. Then, by Lemmas \ref{lemma:4} and \ref{lemma:5}, we have that $\mathbb{E}[\Delta_i^{(1)}((\mu_i,\bm{\theta}_0^\top)^\top)|\{x_{it}\}]\geq CL^2\xi^2_2(n,T)$, $\Delta_i^{(2)}((\mu_i,\widehat{\bm{\theta}}^\top)^\top)= O_p(\xi_2^2(n,T))$, and $\Delta_i^{(2)}((\mu_{0i},\widehat{\bm{\theta}}^\top)^\top)=O_p(\xi_2^2(n,T))$. For some sufficiently large constant $L$, we have that $T_{8i}=CL\xi^2_2(n,T)$.
	Hence
	\begin{equation}
		\begin{split}
			&\mathbb{P}\left(\max_{1\leq i\leq n}|\widehat{\mu}_i-\mu_{0i}|> C\xi_2(n,T)\right)\\
			\leq & \mathbb{P}\left(M_i((\mu_i,\widehat{\bm{\theta}}^\top)^\top)< M_i((\mu_{0i},\widehat{\bm{\theta}}^\top)^\top),~\exists{1\leq i\leq n},~\exists|\mu_i-\mu_{0i}|>C\xi_2(n,T)\right)\\
			\leq & \mathbb{P}\left(\max_{1\leq i\leq n}\sup_{|\mu-\mu_{0i}|\leq L\xi_2(n,T)}\left|\Delta_i^{(1)}((\mu,\widehat{\bm{\theta}}^\top)^\top)+\Delta_i^{(2)}((\mu,\widehat{\bm{\theta}}^\top)^\top)-\mathbb{E}[\Delta_i^{(1)}((\mu,\bm{\theta}^\top)^\top)|\{x_{it}\}]|_{\bm{\theta}=\widehat{\bm{\theta}}}\right|>\frac{T_{8i}}{4}\right)\\
			+ & \mathbb{P}\left(\max_{1\leq i\leq n}\sup_{|\mu-\mu_{0i}|\leq L\xi_2(n,T)}\left|\mathbb{E}[\Delta_i^{(1)}((\mu_{0i},\bm{\theta}^\top)^\top)|\{x_{it}\}]|_{\bm{\theta}=\widehat{\bm{\theta}}}-\mathbb{E}[\Delta_i^{(1)}((\mu_{0i},\bm{\theta}_0^\top)^\top)|\{x_{it}\}]\right|>\frac{T_{8i}}{4}\right)\\
			:=&\mathbb{P}(A_1)+\mathbb{P}(A_2).
		\end{split}
	\end{equation}
	By Lemma \ref{lemma:T1_concentration}, we have $\mathbb{P}(A_1)\to0$, as $T\to\infty$. In addition, since
	\begin{equation}
		|\Delta_i^{(1)}((\mu_{0i},\bm{\theta}^\top)^\top)-\Delta_i^{(1)}((\mu_{0i},\bm{\theta}_0^\top)^\top)|\leq 2\|\widetilde{\bm{\Pi}}(x_{it})\|_2\cdot\|\bm{\theta}-\bm{\theta}_0\|_2,
	\end{equation}
	it is obtained that $\mathbb{P}(A_2)\to0$ provided that $\|\bm{\theta}-\bm{\theta}_0\|_2=O_p(\xi_1(n,T))$, $\|\widetilde{\bm{\Pi}}(x_{it})\|_2\leq\sqrt{H}$, and $H^2\log(n)^2/T\to0$. Therefore, we prove the consistency of $\widehat{\mu}_1,\dots,\widehat{\mu}_n,\widehat{\bm{\theta}}$ under the conditions in Theorem 1.\\
	
	\noindent\textit{Step 2. Rate of $\max_{1\leq i\leq n}|\widehat{\mu}_i-\mu_{0i}|$ and $\|\widehat{\bm{\theta}}-\bm{\theta}_0\|_2$}
	
	As $\widehat{\mu}_1,\dots,\widehat{\mu}_n,\widehat{\bm{\theta}}$ are consistent, by Lemma \ref{lemma:asymp_represent}, we have the following asymptotic representations
	\begin{equation}
		\begin{split}
			&\widehat{\mu}_i-\mu_{0i} + o_p(|\widehat{\mu}_i-\mu_{0i}|)\\
			= & -\bm{\gamma}_i^\top(\widehat{\bm{\theta}}-\bm{\theta}_0)+f_i(0)^{-1}\left\{\mathbb{H}_i^{(1)}(\widehat{\mu}_i,\widehat{\bm{\theta}}) - \mathbb{H}_i^{(1)}(\mu_{0i},\bm{\theta}_0) - H_i^{(1)}(\widehat{\mu}_i,\widehat{\bm{\theta}})\right\}\\
			+&f_i(0)^{-1}\mathbb{H}_i^{(1)}(\mu_{0i},\bm{\theta}_0)+O_p(T^{-1}\vee H^{-d}\vee\|\widehat{\bm{\theta}}-\bm{\theta}_0\|_2^2),
		\end{split}
	\end{equation}
	for all $i=1,\dots,n$, and
	\begin{equation}
		\begin{split}
			& \widehat{\bm{\theta}}-\bm{\theta}_0 + o_p(\|\widehat{\bm{\theta}}-\bm{\theta}_0\|_2)\\
			= & \bm{\Gamma}^{-1}\underbrace{\left[-\frac{1}{n}\sum_{i=1}^n\mathbb{H}_i^{(1)}(\mu_{0i},\bm{\theta}_0)\bm{\gamma}_i + \mathbb{H}^{(2)}(\bm{\mu}_0,\bm{\theta}_0)\right]}_{T_{5i}}\\
			- & \bm{\Gamma}^{-1} \underbrace{\left[\frac{1}{n}\sum_{i=1}^n\left\{\mathbb{H}_i^{(1)}(\widehat{\mu}_i,\widehat{\bm{\theta}}) - \mathbb{H}_i^{(1)}(\mu_{0i},\bm{\theta}_0) - H_i^{(1)}(\widehat{\mu}_i,\widehat{\bm{\theta}})\right\}\bm{\gamma}_i\right]}_{T_{6i}}\\
			+ & \bm{\Gamma}^{-1}\underbrace{\left[\mathbb{H}^{(2)}(\widehat{\bm{\mu}},\widehat{\bm{\theta}}) - \mathbb{H}^{(2)}(\bm{\mu}_0,\bm{\theta}_0) - H^{(2)}(\widehat{\bm{\mu}},\widehat{\bm{\theta}})\right]}_{T_{7i}}\\
			+ & O_p\left(T^{-1}H^{1/2}\vee H^{-d}\vee \max_{1\leq i\leq n}|\widehat{\mu}_i-\mu_{0i}|^2\right).
		\end{split}
	\end{equation}
	As $\bm{\gamma}_i\leq\sqrt{H}$, $\|T_{5i}\|_2=O_p(\sqrt{H/(nT)})$.
	Because of the consistency of $(\widehat{\bm{\mu}},\widehat{\bm{\theta}})$, by taking $\delta=H^{-1/2}n^{-1/2}T^{-1/3}$ in Lemma \ref{lemma:ep_rates}, $\|T_{6i}\|_2$ and $\|T_{7i}\|_2$ are both $o_p(\sqrt{H/(nT)})$, which implies that
	\begin{equation}
		\|\widehat{\bm{\theta}}-\bm{\theta}_0\|_2=O_p\left(\max_{1\leq i\leq n}|\widehat{\mu}_i-\mu_{0i}|^2\right)+O_p(\sqrt{H/(nT)}\vee T^{-1}H^{1/2}\vee H^{-d})
	\end{equation}
	and
	\begin{equation}
		\begin{split}
			& \max_{1\leq i\leq d}|\widehat{\mu}_{i}-\mu_{0i}|\\
			\leq & C\left\{\max_{1\leq i\leq n}|\mathbb{H}_i^{(1)}(\mu_{0i},\bm{\theta}_0)|+\max_{1\leq i\leq n}|\mathbb{H}_i^{(1)}(\widehat{\mu}_i,\widehat{\bm{\theta}}) - \mathbb{H}_i^{(1)}(\mu_{0i},\bm{\theta}_0) - H_i^{(1)}(\widehat{\mu}_i,\widehat{\bm{\theta}})|\right\}\\
			+&O_p(\sqrt{H/(nT)}\vee T^{-1}H^{1/2}\vee H^{-d}).
		\end{split}
	\end{equation}
	
	By taking the union upper bound and Lemma \ref{lemma:1},
	\begin{equation}
		\begin{split}	
			&\mathbb{P}\left[\max_{1\leq i\leq n}|\mathbb{H}_i^{(1)}(\mu_{0i},\bm{\theta}_0)|\geq C\sqrt{\frac{\log(n)}{T}}\right]\\
			\leq & \sum_{i=1}^n\mathbb{P}\left[\max_{1\leq i\leq n}|\mathbb{H}_i^{(1)}(\mu_{0i},\bm{\theta}_0)|\geq C\sqrt{\frac{\log(n)}{T}}\right]\leq2\exp(-C\log(n)),
		\end{split}
	\end{equation}
	which implies that $\max_{1\leq i\leq n}|\mathbb{H}_i^{(1)}(\mu_{0i},\bm{\theta}_0)|=O_p(\sqrt{\log(n)/T})$. Additionally, because of consistency of $\widehat{\bm{\mu}}$ and $\widehat{\bm{\theta}}$, by Lemma \ref{lemma:ep_rates}, for any $\epsilon>0$,
	\begin{equation}
		\max_{1\leq i\leq n}\mathbb{P}\left[|\mathbb{H}_i^{(1)}(\widehat{\mu}_i,\widehat{\bm{\theta}}) - \mathbb{H}_i^{(1)}(\mu_{0i},\bm{\theta}_0) - H_i^{(1)}(\widehat{\mu}_i,\widehat{\bm{\theta}})|>\epsilon\sqrt{\log(n)/T}\right]=o(n^{-1}).
	\end{equation}
	Therefore, we have $\max_{1\leq i\leq n}|\widehat{\mu}_i-\mu_{0i}|=O_p(\sqrt{\log(n)/T}+H^{-d})$ and $\|\widehat{\bm{\theta}}-\bm{\theta}_0\|_2=O_p(\sqrt{H/(nT)}+(H\log(n)/T)^{3/4}+H^{-d})$. If $Hn^2\log(n)^3/T\to0$, then $(H\log(n)/T)^{3/4}=o(\sqrt{H/(nT)})$ and $\|\widehat{\bm{\theta}}-\bm{\theta}_0\|_2=O_p(\sqrt{H/(nT)}+H^{-d})$.\\
	
	\noindent\textit{Step 3. Asymptotic normality of $\widehat{\bm{\theta}}$ and estimated function}
	
	Note that $\|\widehat{\bm{\theta}}-\bm{\theta}_0\|_2=O_p(\sqrt{H/(nT)}+H^{-d})$ and
	\begin{equation}\label{eq:expansion}
		\begin{split}
			&\sum_{t=1}^T\sum_{i=1}^n\mathbb{E}[\rho_\tau(y_{it}-\widehat{\mu}_{i}-\bm{\Pi}(x_{it})^\top\bm{\widehat{\theta}})|x_{it}]-\sum_{t=1}^T\sum_{i=1}^n\mathbb{E}[\rho_\tau(y_{it}-\widehat{\mu}_i-\bm{\Pi}(x_{it})^\top{\bm{\theta}}_{0})|x_{it}]\\
			=&\sum_{t=1}^T\sum_{i=1}^n\int_{\bm{\Pi}(x_{it})^\top\bm{\theta}_{0}-m_{it}+\widehat{\mu}_i-\mu_{0i}}^{\bm{\Pi}(x_{it})^\top\bm{\widehat{\theta}}-m_{it}+\widehat{\mu}_i-\mu_{0i}}F_i(z|x_{it})-F_i(0|x_{it})dz\\
			=&\frac{1}{2}\sum_{t=1}^T\sum_{i=1}^nf_k(0|x_{it})[(\bm{\Pi}(x_{it})^\top(\bm{\widehat{\theta}}-\bm{\theta}_{0}))^2+2\bm{\Pi}(x_{it})^\top(\bm{\widehat{\theta}}-\bm{\theta}_{0})\widetilde{R}_{it}]\\
			+&O_p\left(nT[\sqrt{H}(\sqrt{H/(nT)}+H^{-d})]^3\right),
		\end{split}
	\end{equation}
	where $\widetilde{R}_{it}=\bm{\Pi}(x_{it})^\top\bm{\theta}_{0}-m_{i}(x_{it})+\widetilde{\mu}_i-\mu_{0i}=O(H^{-d}+\sqrt{\log(n)/T})$. Define
	\begin{equation}
		\begin{split}
			\widetilde{\bm{\theta}}:=\underset{\bm{\theta}}{\argmin}&\sum_{t=1}^T\sum_{i=1}^n\Big\{-\bm{\Pi}(x_{it})^\top(\bm{\theta}-\bm{\theta}_{0})(\tau-I\{e_{it}\leq0\})\\
			+&\frac{1}{2}f_i(0|x_{it})[(\bm{\Pi}(x_{it})^\top(\bm{\theta}-\bm{\theta}))^2+2\bm{\Pi}(x_{it})^\top(\bm{\theta}-\bm{\theta}_0)\widetilde{R}_{it}]\Big\}.
		\end{split}
	\end{equation}
	We have obviously
	\begin{equation}
		\widetilde{\bm{\theta}}=\bm{\theta}_{0}+(\bm{Z}^\top\bm{f}\bm{Z})^{-1}(-\bm{Z}^\top \bm{f}\bm{R}+\bm{Z}^\top\bm{\epsilon}),
	\end{equation}
	where $\bm{Z}=[\bm{\Pi}(x_{11}),\dots,\bm{\Pi}(x_{1T}),\bm{\Pi}(x_{21}),\dots,\bm{\Pi}(x_{nT})]^\top$, $\bm{f}=\text{diag}(f_1(0|x_{11}),\dots,f_n(0|x_{nT}))$, $\widetilde{\bm{R}}=(\widetilde{R}_{11},\dots,\widetilde{R}_{nT})^\top$, and $\bm{\epsilon}=((\tau-I\{e_{11}\leq0\}),\dots,(\tau-I\{e_{nT}\leq0\}))^\top$.
	
	First consider $\bm{\Pi}(x)^\top(\bm{Z}^\top \bm{f}\bm{Z})^{-1}\bm{Z}^\top\bm{\epsilon}$. Its conditional asymptotic variance is given by $\tau(1-\tau)\bm{\Pi}(x)^\top(\bm{Z}^\top\bm{f}\bm{Z})^{-1}(\bm{Z}^\top\bm{Z})(\bm{Z}^\top\bm{f}\bm{Z})^{-1}\bm{\Pi}(x)\asymp H/(nT)$. Using Lindeberg-Feller condition, similar to the proof of Theorem 3.1 of \citet{zhou1998local}, and by a central limit theorem for $\alpha$-mixing sequences, we have
	\begin{equation}
		\left[\tau(1-\tau)\bm{\Pi}(x)^\top(\bm{Z}^\top \bm{f}\bm{Z})^{-1}(\bm{Z}^\top\bm{Z})(\bm{Z}^\top \bm{f}\bm{Z})^{-1}\bm{\Pi}(x)\right]^{-\frac{1}{2}}\bm{\Pi}(x)^\top(\bm{Z}^\top\bm{f}\bm{Z})^{-1}\bm{Z}^\top\bm{\epsilon}\overset{d}{\to}N(0,1).
	\end{equation}
	By Lemma \ref{lemma:3} and $|R_{it}|=O(H^{-d}+\sqrt{\log(n)/T})$,
	\begin{equation}
		\bm{\Pi}(x)^\top(\bm{Z}^\top\bm{f}\bm{Z})^{-1}\bm{Z}_k^\top \bm{R}_k=O_p(\sqrt{H/(nT)}(H^{-d}+\sqrt{\log(n)/T}))=o_p(\sqrt{H/(n_T)}).
	\end{equation}
	Thus,
	\begin{equation}
		\frac{\bm{\Pi}(x)^\top(\widetilde{\bm{\theta}}-\bm{\theta}_{0})}{(\tau(1-\tau)\bm{\Pi}(x)^\top(\bm{Z}^\top\bm{f}\bm{Z})^{-1}(\bm{Z}^\top\bm{Z})(\bm{Z}^\top\bm{f}\bm{Z})^{-1}\bm{\Pi}(x))^{1/2}}\to N(0,1).
	\end{equation}
	
	Denote
	\begin{equation}
		\begin{split}
			& Q(\bm{\theta}) = -\sum_{i=1}^n\sum_{t=1}^T[\bm{\Pi}(x_{it})^\top(\bm{\theta}-\bm{\theta}_0)](\tau-I(e_{it}\leq 0))\\
			& + \sum_{i=1}^n\sum_{t=1}^T\mathbb{E}[\rho_\tau(y_{it}-\widehat{\mu}_i-\bm{\Pi}(x_{it})^\top\bm{\theta})|x_{it}]- \sum_{i=1}^n\sum_{t=1}^T\mathbb{E}[\rho_\tau(y_{it}-\widehat{\mu}_{i}-\bm{\Pi}(x_{it})^\top\bm{\theta}_0)|x_{it}].
		\end{split}
	\end{equation}
	If $\|\bm{\theta}-\widetilde{\bm{\theta}}\|_2=\delta\xi(n,T)$ where $\delta$ is any positive constant, by a similar argument as Lemma \ref{lemma:T1_concentration} with all information of $n$ individuals combined, we have
	\begin{equation}
		\begin{split}	
			\sup_{\|\bm{\theta}-\widetilde{\bm{\theta}}\|_2\leq\delta\xi(n,T)}&\Bigg|\sum_{i=1}^n\sum_{t=1}^T\rho_\tau(y_{it}-\widehat{\mu}_i-\bm{\Pi}(x_{it})^\top\bm{\theta})-\sum_{i=1}^n\sum_{t=1}^T\rho_\tau(y_{it}-\widehat{\mu}_i-\bm{\Pi}(x_{it})^\top\widetilde{\bm{\theta}})\\
			&\quad\quad\quad-[Q(\bm{\theta})-Q(\widetilde{\bm{\theta}})]\Bigg| = o_p(nT\xi^2(n,T)).
		\end{split}
	\end{equation}
	By comparing $Q(\bm{\theta})$ with \eqref{eq:expansion}, $Q(\bm{\theta})$ is a quadratic function of $\bm{\theta}-\widetilde{\bm{\theta}}$ after ignoring the small term $O_p(nT[\sqrt{H}(\sqrt{H/(nT)}+H^{-d})]^3)$. As $\widetilde{\bm{\theta}}$ is the minimizer of the quadratic function. When $\|\bm{\theta}-\widetilde{\bm{\theta}}\|_2=\delta\xi(n,T)$,
	\begin{equation}
		|Q(\bm{\theta})-Q(\widetilde{\bm{\theta}})|\geq CnT\|\bm{\theta}-\widetilde{\bm{\theta}}\|_2^2-O_p(nT[\sqrt{H}(\sqrt{H/(nT)}+H^{-d})]^3)\geq CnT\|\bm{\theta}-\widetilde{\bm{\theta}}\|_2^2.
	\end{equation}
	
	Therefore, we have that with probablity approaching one
	\begin{equation}
		\inf_{\|\bm{\theta}-\widetilde{\bm{\theta}}\|_2=\delta\xi(n,T)}\sum_{i=1}^n\sum_{t=1}^T\left[\rho_\tau(y_{it}-\widehat{\mu}_i-\bm{\Pi}(x_{it})^\top\bm{\theta})-\rho_\tau(y_{it}-\widehat{\mu}_{i}-\bm{\Pi}(x_{it})^\top\widetilde{\bm{\theta}})\right]>0.
	\end{equation}
	By the convexity of $\rho_\tau(\cdot)$ function and the definition of $\widehat{\mu}_i$ and $\widehat{\bm{\theta}}$, this implies that $\|\bm{\theta}-\widetilde{\bm{\theta}}\|_2=o_p(\xi(n,T))$. Therefore, $\widehat{\bm{\theta}}$ has the same asymptotic properties as $\widetilde{\bm{\theta}}$.
	
	Finally, by the B-spline approximation error, if $H(nT)^{-1/(2d+1)}\to\infty$, $|\bm{\Pi}(x)^\top\bm{\theta}_{0}-m_{i}(x)|=o_p(\sqrt{H/nT})$, and the above results imply that
	\begin{equation}
		\frac{\bm{\Pi}(x)^\top\widehat{\bm{\theta}}-m_i(x)}{(\tau(1-\tau)\bm{\Pi}(x)^\top(\bm{Z}^\top\bm{f}\bm{Z})^{-1}(\bm{Z}^\top\bm{Z})(\bm{Z}^\top\bm{f}\bm{Z})^{-1}\bm{\Pi}(x))^{1/2}}\to N(0,1).
	\end{equation}
	
\end{proof}

We state some auxiliary lemmas used for the proof of Theorem \ref{thm:1}.
The first lemma is the Bernstein-type inequality for the geometrically $\alpha$-mixing sequence.
It is a corollary of Theorem 2.19 in \citet{fan2008nonlinear} by taking $q\asymp n/\log(n)$ in their theorem.

\begin{lemma}\label{lemma:1}
	Let $\{x_{it}\}$ be a strictly stationary $\alpha$-mixing process with mean zero and mixing coefficient $\alpha(l)\leq r^l$ for some $r\in(0,1)$. Suppose that $\mathbb{E}|x_t|^k\leq Ck!A^{k-2}D^2$, $k=3,4,\dots$, then for any $\varepsilon>0$,
	\begin{equation}
		\mathbb{P}\left(\left|\sum_{t=1}^Tx_t\right|>T\varepsilon\right)\leq C\log(T)\exp\left[-C\frac{T}{\log(T)}\frac{\varepsilon^2}{\varepsilon A+D^2}\right].
	\end{equation}
\end{lemma}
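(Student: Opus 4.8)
The plan is to derive the bound as an immediate corollary of the general Bernstein-type inequality for $\alpha$-mixing partial sums, namely Theorem~2.19 of \citet{fan2008nonlinear}; the present lemma is just a convenient specialization of that result. Writing $S_T=\sum_{t=1}^{T}x_t$, that theorem---in the form valid for sequences satisfying the Cram\'er/Bernstein moment bound $\mathbb{E}|x_t|^{k}\le Ck!A^{k-2}D^{2}$---controls $\mathbb{P}(|S_T|>T\varepsilon)$, for every integer block parameter $q$ with $1\le q\le T/2$, by a quantity of the form
\[
\underbrace{\Big(\tfrac{T}{q}\Big)\exp\!\Big(-\tfrac{c\,q\,\varepsilon^{2}}{A\varepsilon+D^{2}}\Big)}_{\text{main term}}\;+\;\underbrace{O\!\Big(\tfrac{T}{\varepsilon}\,\alpha(\lfloor T/q\rfloor)\Big)}_{\text{mixing remainder}},
\]
up to multiplicative logarithmic factors and constants depending only on $r$ and the moment constants; the two terms come respectively from a block-wise Bernstein bound and from the big-block/small-block coupling. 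So the first step is simply to quote this decomposition.

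The decisive step is the choice $q\asymp T/\log T$---which is exactly the prescription $q\asymp n/\log n$ of the cited theorem with its sample size taken equal to $T$, i.e.\ blocks of length of order $\log T$. With this choice the exponent of the main term becomes of order $(T/\log T)\,\varepsilon^{2}/(A\varepsilon+D^{2})$, which is precisely the exponent in the statement, and the prefactor $T/q$ is of order $\log T$, supplying the factor $C\log T$; the resulting constant $C$ depends on $r$ and the moment constants but not on $T$ or $\varepsilon$. For the remainder, the relevant mixing gap is of order $\log T$, so $\alpha(\lfloor T/q\rfloor)\le r^{c\log T}=T^{c\log r}$; since $\log r<0$, enlarging the proportionality constant in $q\asymp T/\log T$ makes this smaller than any prescribed negative power of $T$, so the remainder is negligible relative to the displayed right-hand side. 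Combining the two contributions yields the claim.

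The step deserving the most care is this last absorption, since it must hold uniformly in $\varepsilon$ while the factor $\varepsilon^{-1}$ in the remainder blows up as $\varepsilon\downarrow0$. The clean remedy is to note that the inequality is trivial whenever $C\log T\cdot\exp\{-C(T/\log T)\varepsilon^{2}/(A\varepsilon+D^{2})\}\ge1$, and that on the complementary range one has $\varepsilon\gtrsim(\log T)/\sqrt{T}$, on which $T\varepsilon^{-1}T^{c\log r}$ is easily dominated once the proportionality constant is chosen large. If one prefers not to invoke \citet{fan2008nonlinear} as a black box, the same conclusion follows directly: truncate $x_t$ at a level of order $\log T$ (the Cram\'er condition makes the discarded tail contribute at most $T r^{c\log T}$), cut $\{1,\dots,T\}$ into alternating big and small blocks of length of order $\log T$, replace the big-block sums by independent copies via a coupling lemma at cost of order $(T/\log T)\alpha(\asymp\log T)$, bound the variance of each block sum by $(\log T)D^{2}$ using summability of covariances under geometric mixing, and apply the classical Bernstein inequality to the $\asymp T/\log T$ independent summands, with the moment bookkeeping inside each block producing the clean denominator $A\varepsilon+D^{2}$.
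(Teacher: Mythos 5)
Your proposal follows exactly the paper's own route: the authors state Lemma 1 as a corollary of Theorem 2.19 of \citet{fan2008nonlinear} obtained by taking $q\asymp n/\log(n)$, which is precisely your choice $q\asymp T/\log T$. You simply spell out the details (the main-term/remainder decomposition, the absorption of the mixing remainder under geometric decay, and the uniformity in $\varepsilon$) that the paper leaves implicit, and these details are correct.
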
~

Next, we state some intermediate results in the proof of Theorem \ref{thm:1} in the following lemmas and present their proofs.

\begin{lemma}\label{lemma:T1_concentration}
	Let $d(n,T)$ be a sequence depending on $n$ and $T$ such that $d(n,T)\to0$ as $T\to\infty$ and $\sqrt{H\log(n)^2T^{-1}}+H^{-d}=O(\xi(n,T))$. Under the conditions in Theorem \ref{thm:1},
	\begin{equation}\begin{split}
			&\max_{1\leq i\leq n}\sup_{\|\bm{\bm{\vartheta}}_{i}-\bm{\bm{\vartheta}}_{0i}\|_2=d(n,T)}\left|\Delta_i^{(1)}(\bm{\vartheta}_{i})+\Delta_i^{(2)}(\bm{\vartheta}_i)-\mathbb{E}[\Delta_i^{(1)}(\bm{\vartheta}_i)|\{x_{it}\}]\right|=o_p(d^2(n,T)).
	\end{split}\end{equation}
\end{lemma}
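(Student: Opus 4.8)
The plan is to reduce the left‑hand side to a second‑order empirical process by Knight's identity and then control it uniformly. Write $y_{it}-\widetilde{\bm{\Pi}}(x_{it})^\top\bm{\vartheta}_i=e_{it}+R_{it}-\delta_{it}$, where $R_{it}=m_i(x_{it})-\bm{\Pi}(x_{it})^\top\bm{\theta}_0=O(H^{-d})$ is the B‑spline approximation error and $\delta_{it}=\widetilde{\bm{\Pi}}(x_{it})^\top(\bm{\vartheta}_i-\bm{\vartheta}_{0i})$. By Knight's identity, with $\psi_\tau(u)=\tau-I\{u\le0\}$,
\[
\rho_\tau(e_{it}+R_{it}-\delta_{it})-\rho_\tau(e_{it}+R_{it})=-\delta_{it}\psi_\tau(e_{it}+R_{it})+\phi_{it}(\delta_{it}),\qquad \phi_{it}(\delta):=\int_0^\delta\!\big(I\{e_{it}+R_{it}\le s\}-I\{e_{it}+R_{it}\le0\}\big)ds .
\]
Since $\mathbb{E}[\psi_\tau(e_{it})\mid x_{it}]=0$, substituting this into $\Delta_i^{(1)}$ and adding $\Delta_i^{(2)}=T^{-1}\sum_t\delta_{it}\psi_\tau(e_{it})$ gives the exact identity
\[
\Delta_i^{(1)}(\bm{\vartheta}_i)+\Delta_i^{(2)}(\bm{\vartheta}_i)-\mathbb{E}[\Delta_i^{(1)}(\bm{\vartheta}_i)\mid\{x_{it}\}]=\frac1T\sum_{t=1}^T\delta_{it}\big(\zeta_{it}-\mathbb{E}[\zeta_{it}\mid x_{it}]\big)+\frac1T\sum_{t=1}^T\big(\phi_{it}(\delta_{it})-\mathbb{E}[\phi_{it}(\delta_{it})\mid x_{it}]\big),
\]
where $\zeta_{it}=I\{e_{it}+R_{it}\le0\}-I\{e_{it}\le0\}$. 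Call the two sums $\mathrm{I}_i(\bm{\vartheta}_i)$ and $\mathrm{II}_i(\bm{\vartheta}_i)$; both have conditionally centered summands, so it suffices to show that each of $\max_i\sup_{\|\bm{\vartheta}_i-\bm{\vartheta}_{0i}\|_2=d(n,T)}|\mathrm{I}_i|$ and $\max_i\sup|\mathrm{II}_i|$ is $o_p(d^2(n,T))$, using the standing bound $d(n,T)\gtrsim\sqrt{H\log(n)^2/T}+H^{-d}$.

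The main term is $\mathrm{II}_i$. Its summand is Lipschitz in $\bm{\vartheta}_i$ (constant $O(\sqrt H)$, since $\|\widetilde{\bm{\Pi}}(x_{it})\|_2\le C\sqrt H$) and, crucially, obeys a small‑ball bound: $\phi_{it}(\delta)$ vanishes unless $e_{it}+R_{it}$ lies between $0$ and $\delta$, so $|\phi_{it}(\delta_{it})|\le|\delta_{it}|\le C\sqrt H\,d$ and, by (A3), $\mathbb{E}[\phi_{it}(\delta_{it})^2\mid x_{it}]\le C|\delta_{it}|^3\le CH^{3/2}d^3$. For a fixed $\bm{\vartheta}_i$, Lemma~\ref{lemma:1} applied to the $\alpha$‑mixing sequence $\{\phi_{it}(\delta_{it})-\mathbb{E}[\,\cdot\mid x_{it}]\}_t$ gives the Bernstein tail $\mathbb{P}(|\mathrm{II}_i(\bm{\vartheta}_i)|>\eta d^2)\lesssim\log(T)\exp(-c\,Td/(H^{3/2}\log T))$. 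For the supremum over the $(H-1)$‑sphere I would not discretize crudely (the metric entropy would be too large), but exploit that $\delta\mapsto\phi_{it}(\delta)$ is $1$‑Lipschitz with $\phi_{it}(0)=0$ composed with the linear map $\bm{\vartheta}_i\mapsto\delta_{it}$: a symmetrization step followed by the Ledoux--Talagrand contraction inequality bounds $\mathbb{E}\sup_{\|\bm{\vartheta}_i-\bm{\vartheta}_{0i}\|_2\le d}|\mathrm{II}_i|$ by $2d\,\mathbb{E}\|T^{-1}\sum_t\varepsilon_t\widetilde{\bm{\Pi}}(x_{it})\|_2\le Cd\sqrt{H/T}$, which is $o(d^2)$ because $d\gtrsim\sqrt{H\log(n)^2/T}$ forces $\sqrt{H/T}\le d/\log n$. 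As the data are only $\alpha$‑mixing, this and the accompanying concentration are carried out on Bernstein blocks of length $\asymp\log^2T$ coupled to independent blocks (Berbee/Bradley coupling, the geometric mixing making the coupling error negligible), and a functional Bernstein (Talagrand‑type) inequality upgrades the expectation bound to a tail of order $\exp(-c\,Td/H^{3/2})$ up to logarithmic factors. Under $H^2\log(n)^2/T\to0$ this tail is $o(1/n)$, so a union bound over the $n$ cross‑sectionally independent individuals yields $\max_i\sup|\mathrm{II}_i|=o_p(d^2)$.

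The term $\mathrm{I}_i$ is easier: it is linear in $\bm{\vartheta}_i$, and $\zeta_{it}\ne0$ only on $\{|e_{it}|\le CH^{-d}\}$, an event of probability $O(H^{-d})$. Hence $\sup_{\|\bm{\vartheta}_i-\bm{\vartheta}_{0i}\|_2=d}|\mathrm{I}_i|=d\,\|T^{-1}\sum_t\widetilde{\bm{\Pi}}(x_{it})(\zeta_{it}-\mathbb{E}[\zeta_{it}\mid x_{it}])\|_2$, and a second‑moment bound together with the geometric mixing gives $\mathbb{E}\|\,\cdot\,\|_2^2\lesssim HH^{-d}/T$, so $\sup|\mathrm{I}_i|=O_p(d\sqrt{HH^{-d}/T})=o(d^2)$ since $d\gtrsim\sqrt{H\log(n)^2/T}$ forces $H^{-d}/\log(n)^2\to0$; the maximum over $i$ is absorbed by the same union bound. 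Combining the contributions of $\mathrm{I}_i$ and $\mathrm{II}_i$ proves the lemma.

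The crux is the uniform control of $\mathrm{II}_i$ over the growing‑dimensional sphere: the non‑differentiability of $\rho_\tau$ is disposed of by Knight's identity, but plainly chaining the Bernstein bound of Lemma~\ref{lemma:1} against an $\varepsilon$‑net does not close under the stated rates, so one must use the Lipschitz ``link'' $\phi_{it}$ --- equivalently, the small‑ball structure of its increments --- via contraction/symmetrization, and simultaneously reconcile this with the $\alpha$‑mixing dependence (blocking) and with the $n$‑fold union (functional concentration). It is precisely the matching of $d(n,T)\gtrsim\sqrt{H\log(n)^2/T}$, $H^2\log(n)^2/T\to0$, and $H^3\log(T)/T\to0$ against the entropy and the tail exponent that requires care.
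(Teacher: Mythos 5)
Your Knight's-identity decomposition is correct, and controlling the remainder $\mathrm{II}_i$ by symmetrization and contraction is a legitimate alternative strategy, but it is a genuinely different --- and substantially heavier --- route than the paper's, and the premise on which you reject the simpler route is mistaken. The paper proves this lemma by exactly the ``crude'' discretization you dismiss: it covers the ball $\{\|\bm{\vartheta}_i-\bm{\vartheta}_{0i}\|_2\le d(n,T)\}$ at resolution $\delta(n,T)\asymp T^{-a}d(n,T)$, so that the entropy is only $\log N\le CH\log T$, and uses the fact that the \emph{entire} centered summand $\Delta_{it}(\bm{\vartheta}_i)$ (not merely the check loss) is Lipschitz in $\bm{\vartheta}_i$ with constant $O(\sqrt H)$, so the discretization error is $O(\sqrt H\,T^{-a}d)=o(d^2)$ for $a$ large enough. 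Applying Lemma \ref{lemma:1} at each net point with envelope $A\asymp\sqrt H\,d$ and variance proxy $D^2\asymp\sqrt H\,d^3$, and taking the union over the net and over $i$, yields an exponent of order $\sqrt{T}\log(n)/\log(T)+TH^{-d-1/2}/\log(T)$, which dominates $\log n+CH\log T$ under $H^2\log(n)^2/T\to0$ and $H^3\log(T)/T\to0$. So chaining Lemma \ref{lemma:1} against an $\varepsilon$-net does close under the stated rates; no symmetrization, contraction, blocking, or Talagrand-type functional inequality is needed.

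If you pursue your route instead, two steps as written do not go through. First, for $\mathrm{I}_i$ you establish only a second-moment bound and then assert that ``the maximum over $i$ is absorbed by the same union bound''; a Chebyshev tail unioned over $n$ individuals requires $nH^{1-d}/(Td^2(n,T))\to0$, which after inserting $d^2\gtrsim H\log(n)^2/T$ reduces to $nH^{-d}=o(\log(n)^2)$ --- and this is \emph{not} implied by the hypotheses (take, e.g., $n=T^{1/4}$ and $H=\log T$ with $d=2$: all conditions of Theorem \ref{thm:1} hold yet $nH^{-d}\to\infty$). You need an exponential tail here as well, e.g.\ by applying Lemma \ref{lemma:1} coordinatewise to $T^{-1}\sum_t\Pi_h(x_{it})(\zeta_{it}-\mathbb{E}[\zeta_{it}\mid x_{it}])$ and unioning over $h$ and $i$. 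Second, the symmetrization/coupling/Talagrand chain for $\mathrm{II}_i$ is announced but not executed, and it is not routine under (A1): Berbee's coupling requires $\beta$-mixing while the paper assumes only geometric $\alpha$-mixing, so you would need an $\alpha$-mixing functional Bernstein inequality (e.g.\ of Merlev\`ede--Peligrad--Rio type) and must check that its logarithmic losses still sit below the threshold $d^2(n,T)$; relatedly, your expectation bound $Cd\sqrt{H/T}\le Cd^2/\log n$ is $o(d^2)$ only when $n\to\infty$, whereas (A6) permits $n$ fixed. None of this is fatal to your strategy, but each point must be filled in, and the paper's single fine-net Bernstein argument avoids all of them.
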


\begin{proof}[\textbf{Proof of Lemma \ref{lemma:T1_concentration}}]
	Let $\mathcal{N}_i=\{\bm{\vartheta}_i^{(1)},\dots,\bm{\vartheta}_i^{(N)}\}$ be a $\delta(n,T)$ covering of $\{\bm{\theta}:\|\bm{\vartheta}_i-\bm{\vartheta}_{0i}\|_2\leq d(n,T)\}$. The size of $\mathcal{N}_{(k)}$ is bounded by $N\leq(Cd(n,T)/\delta(n,T))^{H}$ and thus $\log N\leq CH\log(T)$ if we choose $\delta(n,T)\asymp T^{-a}d(n,T)$ for some $a>0$.
	
	Let $\Delta_{it}(\bm{\vartheta}_i)=\rho_\tau(y_{it}-\widetilde{\bm{\Pi}}(x_{it})^\top\bm{\vartheta}_i)-\rho_\tau(y_{it}-\widetilde{\bm{\Pi}}(x_{it})^\top\bm{\vartheta}_{0i})+\widetilde{\bm{\Pi}}(x_{it})^\top(\bm{\vartheta}_i-\bm{\vartheta}_{0i})(\tau-I\{e_{it}\leq0\})$.
	Using the Lipschitz property of $\rho_\tau(\cdot)$, and that for any $\bm{\vartheta}_{i}$, there exists some  $\bm{\vartheta}_i^{(l)}$ such that $\|\bm{\vartheta}_{i}-\bm{\vartheta}^{(l)}_i\|_2\leq\delta(n,T)$, we have
	\begin{equation}
		\begin{split}
			&\left|\Delta_i^{(1)}(\bm{\vartheta}_{i})+\Delta_i^{(2)}(\bm{\vartheta}_i)-\mathbb{E}[\Delta_i^{(1)}(\bm{\vartheta}_i)|\{x_{it}\}]-\Delta_i^{(1)}(\bm{\vartheta}_{i}^{(l)})-\Delta_i^{(2)}(\bm{\vartheta}_i^{(l)})+\mathbb{E}[\Delta_i^{(1)}(\bm{\vartheta}_i^{(l)})|\{x_{it}\}]\right|\\
			&=\left|\frac{1}{T}\sum_{t=1}^T\Delta_{it}(\bm{\vartheta}_i)-\mathbb{E}[\Delta_{it}(\bm{\vartheta}_i)|\{x_{it}\}]\right|\leq \frac{C}{T}\sum_{t=1}^T|\widetilde{\bm{\Pi}}(x_{it})^\top(\bm{\vartheta}_i-\bm{\vartheta}_i^{(l)})|=O(\sqrt{H}\delta(n,T)),
		\end{split}
	\end{equation}
	which can obviously be made to be $o_p(\xi^2(n,T))$ by setting $\delta(n,T)\asymp T^{-a}\xi(n,T)$ for some $a$ large enough.
	
	Denote $m_{it}=m(x_{it})$. Using that $\rho_\tau(x)=|x|/2+(\tau-1/2)x$, by simple algebra,
	\begin{equation}
		\begin{split}
			|\Delta_{it}&(\bm{\vartheta}_{i})|
			=\Big|\frac{1}{2}|e_{it}+m_{it}+\mu_{0i}-\widetilde{\bm{\Pi}}(x_{it})^\top\bm{\vartheta}_{i}|-\frac{1}{2}|e_{it}+m_{it}+\mu_{0i}-\widetilde{\bm{\Pi}}(x_{it})^\top\bm{\vartheta}_{0i}|\\
			+&\widetilde{\bm{\Pi}}(x_{it})^\top(\bm{\vartheta}_{i}-\bm{\vartheta}_{0i})(1/2-I\{e_{it}\leq0\})\Big|\\
			\leq&|\widetilde{\bm{\Pi}}(x_{it})^\top(\bm{\vartheta}_{i}-\bm{\vartheta}_{0i})|\cdot I\{|e_{it}|\leq|\widetilde{\bm{\Pi}}(x_{it})^\top(\bm{\vartheta}_{i}-\bm{\vartheta}_{0i})|+|m_{it}+\mu_{0i}-\widetilde{\bm{\Pi}}(x_{it})^\top\bm{\vartheta}_{0i}|\}.
		\end{split}
	\end{equation}
	Thus, $|\Delta_{it}(\bm{\vartheta}_{i})|\leq C\sqrt{H}\xi(n,T):=A$.
	
	Furthermore, we have
	\begin{equation}
		\begin{split}
			&\mathbb{E}[(\Delta_{it}(\bm{\vartheta}_{i})-\mathbb{E}[\Delta_{it}(\bm{\vartheta}_{i})|\{x_{it}\}])^2]\leq\mathbb{E}|\Delta_{it}(\bm{\vartheta}_{i})|^2\\
			\leq&\mathbb{P}\left\{|e_{it}|\leq|\widetilde{\bm{\Pi}}(x_{it})^\top(\bm{\vartheta}_{i}-\bm{\vartheta}_{0i})|+|m_{it}+\mu_{0i}-\widetilde{\bm{\Pi}}(x_{it})^\top\bm{\vartheta}_{0i}|\right\}\cdot\mathbb{E}|\widetilde{\bm{\Pi}}(x_{it})^\top(\bm{\vartheta}_{i}-\bm{\vartheta}_{0i})|^2\\
			\leq&[C\sqrt{H}d(n,T)]\cdot\mathbb{E}|\widetilde{\bm{\Pi}}(x_{it})^\top(\bm{\vartheta}_{i}-\bm{\vartheta}_{0i})|^2\\
			\leq & C\sqrt{H}d^3(n,T):=D^2,
		\end{split}
	\end{equation}
	where the first factor $C\sqrt{H}d(n,T)$ comes from $\mathbb{P}(|e_{it}|\leq |\widetilde{\bm{\Pi}}(x_{it})^\top(\bm{\vartheta}_{i}-\bm{\vartheta}_{0i})|+|m_{it}+\mu_{0i}-\widetilde{\bm{\Pi}}(x_{it})^\top\bm{\vartheta}_{0i}|)$ by Assumption (A3).
	
	Using Bernstein's inequality in Lemma \ref{lemma:1}, together with the union bound, we have that for any $a>0$,
	\begin{equation}
		\begin{split}
			&\mathbb{P}\left(\sup_{\bm{\vartheta}_{i}\in\mathcal{N}_{i}}\left|\Delta^{(1)}_i(\bm{\vartheta}_{i})+\Delta^{(2)}_i(\bm{\vartheta}_{i})-\mathbb{E}[\Delta^{(1)}_i(\bm{\vartheta}_{i})|\{x_{it}\}]\right|>a\right)\\
			\leq& C(T)^{CH}\log(T)\exp\left[-C\frac{T}{\log(T)}\frac{a^2}{aA+D^2}\right].
		\end{split}
	\end{equation}
	Letting $a=Cd^2(n,T)$, we have
	\begin{equation}
		\begin{split}
			&\mathbb{P}\left(\max_{1\leq i\leq n}\sup_{\bm{\vartheta}_{i}\in\mathcal{N}_{i}}\left|\Delta^{(1)}_i(\bm{\vartheta}_{i})+\Delta^{(2)}_i(\bm{\vartheta}_{i})-\mathbb{E}[\Delta^{(1)}_i(\bm{\vartheta}_{i})|\{x_{it}\}]\right|>Cd^2(n,T)\right)\\
			\leq& Cn(T)^{CH}\log(T)\exp\left[-C\frac{T}{\log(T)}\frac{a^2}{aA+D^2}\right]\\
			\leq&Cn(T)^{CH}\log(T)\exp\left[-C\frac{T}{\log(T)}H^{-1/2}\xi(n,T)\right]\\
			\leq & \exp\left[\log(n)+CH\log(T)+C\log\log(T)-C\frac{\sqrt{T}\log(n)}{\log(T)}-C\frac{TH^{-d-1/2}}{\log(T)}\right]\to0.
		\end{split}
	\end{equation}
	
\end{proof}

\begin{lemma}
	\label{lemma:3}
	Under the conditions of Theorem \ref{thm:1}, the eigenvalues of
	$T^{-1}\sum_{t=1}^T\bm{\Pi}(x_{it})\bm{\Pi}(x_{it})^\top$ and $T^{-1}\sum_{t=1}^T\widetilde{\bm{\Pi}}(x_{it})\widetilde{\bm{\Pi}}(x_{it})^\top$
	are bounded and bounded away from zero uniformly over $i=1,\dots,n$, with probability approaching one.
\end{lemma}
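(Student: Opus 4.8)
The argument follows the usual two-part template for spline design matrices: control the population Gram matrix, show the empirical Gram matrix concentrates around it in operator norm uniformly over $i$, and conclude by Weyl's inequality. Throughout write $\bm{\Pi}(x)=H^{1/2}\widetilde{\bm{O}}\bm{B}(x)$ and recall that $\widetilde{\bm{O}}$ has orthonormal rows and, by construction, annihilates $\bm{b}=(\int_0^1 B_1^q,\dots,\int_0^1 B_H^q)^\top$, so that $\int_0^1\bm{\Pi}(s)\,ds=\bm{0}$.

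\textbf{Step 1 (population Gram matrices).} For a unit vector $\bm{a}\in\mathbb{R}^{H-1}$, $\bm{a}^\top\mathbb{E}[\bm{\Pi}(x_{it})\bm{\Pi}(x_{it})^\top]\bm{a}=\mathbb{E}[(\bm{a}^\top\bm{\Pi}(x_{it}))^2]$, and since by Assumption (A2) the density of $x_{it}$ is bounded away from $0$ and $\infty$ uniformly in $i$, this lies between two fixed positive multiples of $\int_0^1(\bm{a}^\top\bm{\Pi}(s))^2\,ds=\bm{a}^\top\big(\int_0^1\bm{\Pi}(s)\bm{\Pi}(s)^\top\,ds\big)\bm{a}$, whose eigenvalues are bounded away from $0$ and $\infty$ by the normalization of $\bm{\Pi}$. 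For the augmented matrix, a unit vector $(c,\bm{a}^\top)^\top$ gives $\mathbb{E}[(c+\bm{a}^\top\bm{\Pi}(x_{it}))^2]\ge c_1\int_0^1(c+\bm{a}^\top\bm{\Pi}(s))^2\,ds=c_1\big(c^2+\bm{a}^\top(\int_0^1\bm{\Pi}(s)\bm{\Pi}(s)^\top\,ds)\bm{a}\big)$, where the cross term vanishes because $\int_0^1\bm{\Pi}(s)\,ds=\bm{0}$; the right side is bounded below by a positive constant, with a matching upper bound. Hence $\mathbb{E}[\bm{\Pi}(x_{it})\bm{\Pi}(x_{it})^\top]$ and $\mathbb{E}[\widetilde{\bm{\Pi}}(x_{it})\widetilde{\bm{\Pi}}(x_{it})^\top]$ have eigenvalues bounded away from $0$ and $\infty$, uniformly in $i$.

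\textbf{Step 2 (concentration and conclusion).} Let $\bm{M}_i:=T^{-1}\sum_{t=1}^T\bm{B}(x_{it})\bm{B}(x_{it})^\top-\mathbb{E}[\bm{B}(x_{it})\bm{B}(x_{it})^\top]$, so that $T^{-1}\sum_t\bm{\Pi}(x_{it})\bm{\Pi}(x_{it})^\top-\mathbb{E}[\bm{\Pi}\bm{\Pi}^\top]=H\widetilde{\bm{O}}\bm{M}_i\widetilde{\bm{O}}^\top$ has operator norm at most $H\|\bm{M}_i\|_{\mathrm{op}}$. Since at most $q$ B-spline bases are nonzero at any point, $\bm{M}_i$ is symmetric and banded with bandwidth $q$, so $\|\bm{M}_i\|_{\mathrm{op}}\le(2q-1)\max_{h,h'}|(\bm{M}_i)_{hh'}|$; each entry is a centered average of strictly stationary, geometrically $\alpha$-mixing variables bounded by $2$ whose variance is of order $\mathbb{E}[(B_h^q(x_{it})B_{h'}^q(x_{it}))^2]\lesssim H^{-1}$, because $B_h^q$ is supported on an interval of length $\asymp H^{-1}$. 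Applying Lemma \ref{lemma:1} with $A\asymp1$, $D^2\asymp H^{-1}$ and a union bound over $i\le n$ and the $O(H)$ band entries gives $\max_i\max_{h,h'}|(\bm{M}_i)_{hh'}|=o_p(H^{-1})$ under the conditions of Theorem \ref{thm:1}, hence $\max_i H\|\bm{M}_i\|_{\mathrm{op}}=o_p(1)$. The only extra ingredient for $\widetilde{\bm{\Pi}}$ is the off-diagonal block $T^{-1}\sum_t\bm{\Pi}(x_{it})-\mathbb{E}\bm{\Pi}(x_{it})=H^{1/2}\widetilde{\bm{O}}\big(T^{-1}\sum_t\bm{B}(x_{it})-\mathbb{E}\bm{B}(x_{it})\big)$, whose Euclidean norm is bounded by $H\max_h|T^{-1}\sum_t(B_h^q(x_{it})-\mathbb{E}B_h^q(x_{it}))|=o_p(1)$ uniformly in $i$ by the same Bernstein bound. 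Combining with Step 1 via Weyl's inequality proves the lemma. The main obstacle is the variance bookkeeping here: one must retain the sharp variance proxy $D^2\asymp H^{-1}$ (the naive bound $O(1)$ would lose the factor $H$ multiplying $\|\bm{M}_i\|_{\mathrm{op}}$ and break the argument) and then carry the $\alpha$-mixing Bernstein inequality through the double union bound so that the final rate is genuinely $o_p(1)$ under the stated growth of $(n,T,H)$; exploiting the bandedness of $\bm{M}_i$ to turn the operator norm into an entrywise maximum is what makes this feasible.
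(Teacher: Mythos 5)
Your proof is correct and follows essentially the same route as the paper's: reduce the operator-norm statement to entrywise concentration of the B-spline Gram matrix $T^{-1}\sum_{t=1}^T B_h^q(x_{it})B_{h'}^q(x_{it})$ and apply the $\alpha$-mixing Bernstein inequality (Lemma \ref{lemma:1}) with a union bound. Your version is in fact slightly tighter and more explicit: exploiting the bandwidth-$q$ structure lets you work with the entrywise rate $o_p(H^{-1})$ rather than the paper's $o_p(H^{-2})$ target for the $H$-scaled entries, and you spell out two points the paper leaves implicit, namely that the cross term in the augmented Gram matrix vanishes because $\int_0^1\bm{\Pi}(s)\,ds=\bm{0}$ and that the union bound must also run over $i=1,\dots,n$.
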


\begin{proof}[\textbf{Proof of Lemma \ref{lemma:3}}]
	
	We focus on the proof of $T^{-1}\sum_{t=1}^T\bm{\Pi}(x_{it})\bm{\Pi}(x_{it})^\top$, since the statement for $T^{-1}\sum_{t=1}^T\widetilde{\bm{\Pi}}(x_{it})\widetilde{\bm{\Pi}}(x_{it})^\top$ can be proved in an analogous fashion.
	
	Since $\bm{\Pi}(x_{it})=\sqrt{H}\widetilde{\bm{O}}\bm{B}(x_{it})$ and the eigenvalues of $\mathbb{E}[\bm{\Pi}(x_{it})\bm{\Pi}(x_{it})^\top]$ are bounded away from zero and infinity, the desired statement for $\bm{\Pi}(x_{it})$ is implied by
	\begin{equation}
		\left|\frac{H}{T}\sum_{t=1}^T\bm{B}_h(x_{it})\bm{B}_{h'}(x_{it})-H\mathbb{E}\bm{B}_h(x_{it})\bm{B}_{h'}(x_{it})\right|=o_p(1/H)
	\end{equation}
	for all $1\leq h,h'\leq H$. Denote $V_{it}^{h,h'}=H\bm{B}_h(x_{it})\bm{B}_{h'}(x_{it})=\bm{e}_h^\top\sqrt{H}\bm{B}(x_{it})\sqrt{H}\bm{B}(x_{it})^\top\bm{e}_{h'}$, where $\bm{e}_i$ is the vector whose $i$-th entry is one and other entries are all zero. We have
	\begin{equation}
		\mathbb{E}[(V_{it}^{h,h'})^2]\leq|e_h^\top\sqrt{H}\bm{B}(x_{it})|^2\mathbb{E}[e_{h'}^\top\sqrt{H}\bm{B}(x_{it})\sqrt{H}\bm{B}(x_{it})^\top e_{h'}]\leq CH,
	\end{equation}
	as all eigenvalues of $\mathbb{E}[\sqrt{H}\bm{B}(x_{it})\sqrt{H}\bm{B}(x_{it})^\top]$ are bounded. Note that $|V_{it}^{h,h'}|\leq|e_h^\top\sqrt{H}\bm{B}(x_{it})|\cdot|e_{h'}^\top\sqrt{H}\bm{B}(x_{it})|\leq H$. By the $\alpha$-mixing property of $x_{it}$, we know that $V_{it}^{h,h'}$ is also $\alpha$-mixing with mixing coefficients bounded by those of $x_{it}$. By Lemma \ref{lemma:1}, for any fixed $\epsilon>0$,
	\begin{equation}
		\mathbb{P}\left(\left|\frac{1}{T}\sum_{t=1}^TV_{it}^{h,h'}-\mathbb{E}V_{it}^{h,h'}\right|\geq\frac{\epsilon}{H}\right)\leq C\log(T)\exp\left[-C\frac{T}{\log(T)}\frac{\epsilon^2/H^2}{\epsilon+CH}\right]
	\end{equation}
	Taking a union bound for all $1\leq h,h'\leq H$, as $H^3\log(T)/T\to0$,
	\begin{equation}
		\begin{split}
			&\mathbb{P}\left(\max_{1\leq h,h'\leq H}\left|\frac{1}{T}\sum_{t=1}^TV_{it}^{h,h'}-\mathbb{E}V_{it}^{h,h'}\right|\geq\frac{\epsilon}{H}\right)\\
			\leq& C\exp\left[2\log(H)+\log\log(T)-\frac{\epsilon^2T/\log(T)}{CH^3}\right]\to0.
		\end{split}
	\end{equation}

\end{proof}

\begin{lemma} \label{lemma:4}
	For any positive sequence $d(n,T)$ depending on $n$ and $T$,
	\begin{equation}\begin{split}
			\inf_{\|\bm{\vartheta}_i-\bm{\vartheta}_{0i}\|_2=d(n,T)}&\sum_{t=1}^T\mathbb{E}\left[\rho_\tau(y_{it}-\widetilde{\bm{\Pi}}(x_{it})^\top\bm{\vartheta}_i)\Big|x_{it}\right]
			-\sum_{t=1}^T\mathbb{E}\left[\rho_\tau(y_{it}-\widetilde{\bm{\Pi}}(x_{it})^\top\bm{\vartheta}_{0i})\Big|x_{it}\right]\\
			\geq &CTd^2(n,T)
	\end{split}\end{equation}
	with probability approaching 1.
\end{lemma}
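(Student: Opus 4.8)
The plan is to work conditionally on $\{x_{it}\}$ and reduce the claim to a quadratic lower bound for a smooth surrogate. Write $\bm{\eta}=\bm{\vartheta}_i-\bm{\vartheta}_{0i}$, $\delta_{it}=\widetilde{\bm{\Pi}}(x_{it})^\top\bm{\eta}$, and $R_{it}=m_{i,\tau}(x_{it})-\bm{\Pi}(x_{it})^\top\bm{\theta}_{0}$, so that $y_{it}-\widetilde{\bm{\Pi}}(x_{it})^\top\bm{\vartheta}_i=e_{it}(\tau)+R_{it}-\delta_{it}$ with $|R_{it}|\le CH^{-d}$ uniformly by the B-spline approximation bound. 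Putting $u_{it}=e_{it}(\tau)+R_{it}$, the quantity to be bounded is $\sum_{t=1}^T g_{it}(\delta_{it})$, where $g_{it}(v)=\mathbb{E}[\rho_\tau(u_{it}-v)-\rho_\tau(u_{it})\mid x_{it}]$. Using the integral form of the check-loss difference (Knight's identity) together with $\mathbb{P}(e_{it}(\tau)\le0\mid x_{it})=\tau$,
\begin{equation}
g_{it}(v)=v\bigl(F_i(-R_{it}\mid x_{it})-\tau\bigr)+\int_0^{v}\bigl(F_i(s-R_{it}\mid x_{it})-F_i(-R_{it}\mid x_{it})\bigr)\,ds .
\end{equation}

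The leading contribution is the integral. Since $|R_{it}|=O(H^{-d})$ and $|\delta_{it}|\le\|\widetilde{\bm{\Pi}}(x_{it})\|_2\,d(n,T)\le C\sqrt{H}\,d(n,T)$ are both small, Assumption (A3) (conditional density bounded below near $0$ with uniformly bounded derivative) justifies a second-order Taylor expansion of $F_i$ about $-R_{it}$, giving
\begin{equation}
\int_0^{\delta_{it}}\bigl(F_i(s-R_{it}\mid x_{it})-F_i(-R_{it}\mid x_{it})\bigr)\,ds=\tfrac{1}{2} f_i(-R_{it}\mid x_{it})\,\delta_{it}^2+O\!\left(|\delta_{it}|^3\right),
\end{equation}
with $f_i(-R_{it}\mid x_{it})\ge c_0>0$ for all large $H$. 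Summing and using $\sum_t\delta_{it}^2=\bm{\eta}^\top\bigl(\sum_t\widetilde{\bm{\Pi}}(x_{it})\widetilde{\bm{\Pi}}(x_{it})^\top\bigr)\bm{\eta}$ together with Lemma \ref{lemma:3} (the smallest eigenvalue of $T^{-1}\sum_t\widetilde{\bm{\Pi}}(x_{it})\widetilde{\bm{\Pi}}(x_{it})^\top$ is bounded away from zero uniformly in $i$, with probability approaching one) yields $\tfrac{1}{2}\sum_t f_i(-R_{it}\mid x_{it})\delta_{it}^2\ge CT\|\bm{\eta}\|_2^2=CTd^2(n,T)$ on an event of probability tending to one.

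It then remains to show the two remainder terms are of smaller order. For the cubic term, $\sum_t|\delta_{it}|^3\le(\max_t|\delta_{it}|)\sum_t\delta_{it}^2\le C\sqrt{H}\,d(n,T)\cdot Td^2(n,T)=o\bigl(Td^2(n,T)\bigr)$ since $\sqrt{H}\,d(n,T)\to0$ in the regime of interest. For the linear ``bias'' term, $|F_i(-R_{it}\mid x_{it})-\tau|\le\sup_{|z|\le\epsilon}f_i(z\mid x_{it})\,|R_{it}|=O(H^{-d})$, and by Cauchy--Schwarz and Lemma \ref{lemma:3}, $\sum_t|\delta_{it}|\le\sqrt{T\sum_t\delta_{it}^2}\le CTd(n,T)$, so this term is $O\bigl(H^{-d}Td(n,T)\bigr)$, which is $o\bigl(Td^2(n,T)\bigr)$ because $d(n,T)$ dominates the approximation error $H^{-d}$; alternatively, its dominant part can be annihilated using the population first-order condition $\mathbb{E}[f(0\mid x_{it})\bm{\Pi}(x_{it})R_{it}]=0$ defining $\bm{\theta}_0$. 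Collecting the three pieces gives $\sum_t g_{it}(\delta_{it})\ge CTd^2(n,T)$ uniformly over $\|\bm{\vartheta}_i-\bm{\vartheta}_{0i}\|_2=d(n,T)$, with probability approaching one.

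The main obstacle is the uniform handling of the B-spline approximation error $R_{it}$: it must be small enough that the second-order expansion of $F_i$ is legitimate and that the linear term it generates is negligible compared with the quadratic curvature --- this is what ties the statement to the regime $\sqrt{H}\,d(n,T)\to0$ and $H^{-d}=O(d(n,T))$ (outside that regime the check loss grows only linearly and no quadratic lower bound holds). A secondary technical point is that the strong-convexity constant must be uniform in $i$, which is supplied precisely by the uniform eigenvalue bound in Lemma \ref{lemma:3} and the uniform lower bound on $f_i$ near zero in (A3).
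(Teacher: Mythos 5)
Your proof is correct and follows essentially the same route as the paper's: Knight's identity to reduce the check-loss difference to an integral of $F_i(\cdot\mid x_{it})-F_i(0\mid x_{it})$, a second-order Taylor expansion giving the quadratic curvature term, Lemma \ref{lemma:3} for the uniform eigenvalue bound, and Cauchy--Schwarz plus $|R_{it}|=O(H^{-d})$ to absorb the approximation-error term and the cubic remainder; centering the expansion at $-R_{it}$ rather than at $0$ merely relabels the paper's cross term $\sum_t\delta_{it}R_{it}$ as your linear bias term, and both arguments equally rely on the implicit regime $H^{-d}\lesssim d(n,T)$ and $\sqrt{H}\,d(n,T)\to0$, which you state explicitly. (The only loose point is the parenthetical suggestion to annihilate the bias via the population orthogonality $\mathbb{E}[f_i(0\mid x_{it})\bm{\Pi}(x_{it})R_{it}]=0$, which does not apply to the conditional-on-$\{x_{it}\}$ sum nor to the intercept coordinate of $\widetilde{\bm{\Pi}}$; but your primary bound does not need it.)
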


\begin{proof}[\textbf{Proof of Lemma \ref{lemma:4}}]
	
	For convenience of notation, denote $m_{it}=m(x_{it})$.
	Using the Knight's identity, namely $\rho_\tau(x-y)-\rho_\tau(x)=-y(\tau-I(x\leq 0))+\int_0^y(I(x\leq t)-I(x\leq0))dt$, and mean value expansion, we have that, for each $1\leq i\leq n$ and $\widetilde{z}\in[\bm{\Pi}(x_{it})^\top\bm{\theta}_{0(k)}-m_{it},\bm{\Pi}(x_{it})^\top\bm{\theta}_{(k)}+\mu_i-\mu_{0i}-m_{it}]$,
	\begin{equation}
		\begin{split}
			&\sum_{t=1}^T\mathbb{E}[\rho_\tau(e_{it}+m_{it}+\mu_{0i}-\widetilde{\bm{\Pi}}(x_{it})^\top\bm{\vartheta}_{i})|x_{it}]\\
			-&\sum_{t=1}^T\mathbb{E}[\rho_\tau(e_{it}+m_{it}+\mu_{0i}-\widetilde{\bm{\Pi}}(x_{it})^\top\bm{\vartheta}_{0i})|x_{it}]\\
			=&\sum_{t=1}^T\int_{\bm{\Pi}(x_{it})^\top\bm{\theta}_{0i}-m_{it}}^{\bm{\Pi}(x_{it})^\top\bm{\theta}_{i}+\mu_i-\mu_{0i}-m_{it}} [F_k(z|x_{it})-F_k(0|x_{it})]dz\\
			=&\sum_{t=1}^T\int_{\bm{\Pi}(x_{it})^\top\bm{\theta}_{0i}-m_{it}}^{\bm{\Pi}(x_{it})^\top\bm{\theta}_i+\mu_i-\mu_{0i}-m_{it}}\left[zf_k(0|x_{it})+\frac{z^2}{2}f_i'(\widetilde{z}|x_{it})\right]dz\\
			\geq& \frac{1}{2}\sum_{t=1}^Tf_i(0|x_{it})\left[(\widetilde{\bm{\Pi}}(x_{it})^\top(\bm{\vartheta}_{i}-\bm{\vartheta}_{0i}))^2+2\widetilde{\bm{\Pi}}(x_{it})^\top(\bm{\vartheta}_{i}-\bm{\vartheta}_{0i})R_{it}\right]\\
			-&\frac{\overline{f'}}{6}\sum_{t=1}^T|(R_{it}+\widetilde{\bm{\Pi}}(x_{it})^\top(\bm{\vartheta}_i-\bm{\vartheta}_{0i}))^3-R_{it}^3|,
		\end{split}
	\end{equation}
	where $R_{it}=m_{it}-\bm{\Pi}(x_{it})^\top\bm{\theta}_0$.
	
	By the property of B-splines, we have $|R_{it}|=O(H^{-d})$. By Cauchy's inequality and Lemma \ref{lemma:3},
	\begin{equation}
		\begin{split}
			& \sum_{t=1}^T\widetilde{\bm{\Pi}}(x_{it})^\top(\bm{\vartheta}_i-\bm{\vartheta}_{0i})R_{it}\\
			\leq & \left[\sum_{t=1}^T\left(\widetilde{\bm{\Pi}}(x_{it})^\top(\bm{\vartheta}_i-\bm{\vartheta}_{0i})\right)^2\right]^{1/2}\left[\sum_{t=1}^TR_{it}^2\right]^{1/2}\\
			=&\left[(\bm{\vartheta}_i-\bm{\vartheta}_{0i})^\top\left(\sum_{t=1}^T\widetilde{\bm{\Pi}}(x_{it})\widetilde{\bm{\Pi}}(x_{it})^\top\right)(\bm{\vartheta}_i-\bm{\vartheta}_{0i})\right]^{1/2}\left[\sum_{t=1}^TR_{it}^2\right]^{1/2}\\
			= & Cd(n,T)TH^{-d}.
		\end{split}
	\end{equation}
	
	By Lemma \ref{lemma:3}, we have
	\begin{equation}
		\sum_{t=1}^T\left(\widetilde{\bm{\Pi}}(x_{it})^\top(\bm{\vartheta}_{i}-\bm{\vartheta}_{0i})\right)^2
		\asymp T\|\bm{\vartheta}_{i}-\bm{\vartheta}_{0i}\|_2^2=Td^2(n,T).
	\end{equation}
	Since $f_k(0|x_{it})\geq \underline{f}$, we have that
	\begin{equation}
		\sum_{t=1}^T|(R_{it}+\widetilde{\bm{\Pi}}(x_{it})^\top(\bm{\vartheta}_{i}-\bm{\vartheta}_{0i}))^3-R_{it}^3|
		= O_p\left(T[\sqrt{H}d(n,T)]^3\right)=o_p(d^2(n,T)),
	\end{equation}
	and with probability approaching one,
	\begin{equation}
		\begin{split}
			\sum_{t=1}^T\mathbb{E}\left[\rho_\tau(y_{it}-\widetilde{\bm{\Pi}}(x_{it})^\top\bm{\vartheta}_{i})\Big|x_{it}\right]-\mathbb{E}\left[\rho_\tau(y_{it}-\widetilde{\bm{\Pi}}(x_{it})^\top\bm{\vartheta}_{0i})\Big|x_{it}\right]\geq CTd^2(n,T).
		\end{split}
	\end{equation}
	
\end{proof}

\begin{lemma}
	\label{lemma:5}
	Under the conditions of Theorem \ref{thm:1}, for any constant $L>0$ and any sequence $d(n,T)$ such that $d(n,T)\geq C\sqrt{H/T}$,
	\begin{equation}
		\sup_{\|\bm{\vartheta}_i-\bm{\vartheta}_{0i}\|_2=Ld(n,T)}\sum_{t=1}^T\widetilde{\bm{\Pi}}(x_{it})^\top(\bm{\vartheta}_{i}-\bm{\vartheta}_{0i})(\tau-I\{e_{it}\leq0\})=L\cdot O_p(Td^2(n,T)).
	\end{equation}
\end{lemma}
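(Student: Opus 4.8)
\textbf{Proof proposal for Lemma \ref{lemma:5}.}

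The statistic in question is an \emph{exactly} linear functional of $\bm{\vartheta}_i-\bm{\vartheta}_{0i}$: setting $\bm{G}_i:=\sum_{t=1}^T\widetilde{\bm{\Pi}}(x_{it})(\tau-I\{e_{it}\leq0\})\in\mathbb{R}^H$, we have $\sum_{t=1}^T\widetilde{\bm{\Pi}}(x_{it})^\top(\bm{\vartheta}_i-\bm{\vartheta}_{0i})(\tau-I\{e_{it}\leq0\})=(\bm{\vartheta}_i-\bm{\vartheta}_{0i})^\top\bm{G}_i$. Maximizing over the sphere $\{\|\bm{\vartheta}_i-\bm{\vartheta}_{0i}\|_2=Ld(n,T)\}$ is therefore attained at $\bm{\vartheta}_i-\bm{\vartheta}_{0i}=Ld(n,T)\,\bm{G}_i/\|\bm{G}_i\|_2$ and equals $Ld(n,T)\|\bm{G}_i\|_2$; there is no Cauchy--Schwarz slack, which is exactly why the factor $L$ factors cleanly out of the $O_p(\cdot)$ term. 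Since $d(n,T)\geq C\sqrt{H/T}$ gives $Td^2(n,T)\geq C\sqrt{HT}\,d(n,T)$, the lemma reduces to the single claim $\|\bm{G}_i\|_2=O_p(\sqrt{HT})$, after which $Ld(n,T)\|\bm{G}_i\|_2=L\cdot O_p\big(\sqrt{HT}\,d(n,T)\big)=L\cdot O_p\big(Td^2(n,T)\big)$.

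To establish $\|\bm{G}_i\|_2=O_p(\sqrt{HT})$ I would bound the second moment $\mathbb{E}\|\bm{G}_i\|_2^2$ and apply Markov's inequality. Write $\bm{g}_{it}=\widetilde{\bm{\Pi}}(x_{it})(\tau-I\{e_{it}\leq0\})$. By Assumption (A3), $\mathbb{E}[\bm{g}_{it}\mid x_{it}]=\bm{0}$, so $\{\bm{g}_{it}\}_t$ is a mean-zero, strictly stationary, geometrically $\alpha$-mixing (Assumption (A1)) $\mathbb{R}^H$-valued sequence, $\ell_2$-bounded by $C\sqrt{H}$ (since $\|\widetilde{\bm{\Pi}}(x_{it})\|_2\leq C\sqrt{H}$ and $|\tau-I\{e_{it}\leq0\}|\leq1$), with $\mathbb{E}\|\bm{g}_{it}\|_2^2\leq\mathbb{E}\|\widetilde{\bm{\Pi}}(x_{it})\|_2^2=\mathrm{tr}\big(\mathbb{E}[\widetilde{\bm{\Pi}}(x_{it})\widetilde{\bm{\Pi}}(x_{it})^\top]\big)=O(H)$ by Assumption (A2) and the $H^{1/2}$-normalization of $\bm{\Pi}$ (cf.\ Lemma \ref{lemma:3}); moreover, because the $h$-th B-spline basis function is supported on an interval of Lebesgue measure $O(1/H)$ while being bounded there by $O(\sqrt{H})$, each coordinate of $\bm{g}_{it}$ has marginal variance $O(1)$, uniformly in the coordinate index. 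Expanding $\mathbb{E}\|\bm{G}_i\|_2^2=\sum_{t=1}^T\mathbb{E}\|\bm{g}_{it}\|_2^2+2\sum_{l\geq1}(T-l)\,\mathbb{E}[\bm{g}_{i1}^\top\bm{g}_{i,1+l}]$, the diagonal part is $O(HT)$, and for the lagged cross terms I would apply a covariance inequality for $\alpha$-mixing sequences coordinatewise: each lag-$l$ autocovariance is bounded both by the order-one marginal variance and, via the geometric mixing rate, by a quantity decaying like $r^{cl}$ once $l$ exceeds $O(\log H)$, so the sum over $l$ leaves a contribution of the same order $O(HT)$ as the diagonal (up to at most a logarithmic factor, which does not affect any subsequent argument, as the lemma is invoked with $d(n,T)=\xi_1(n,T)$, which already carries an extra logarithmic factor). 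Hence $\mathbb{E}\|\bm{G}_i\|_2^2=O(HT)$ and Markov's inequality yields $\|\bm{G}_i\|_2=O_p(\sqrt{HT})$.

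The routine part is the exact-linearization reduction of the first paragraph. The main obstacle is the control of the lagged cross terms $\mathbb{E}[\bm{g}_{i1}^\top\bm{g}_{i,1+l}]$: since the B-spline basis entries are of size $\sqrt{H}$, a naive $L^\infty$ application of the mixing inequality loses a factor of $H$, so one must exploit the local support of the basis to reduce the effective per-coordinate scale to $O(1)$ and then combine the resulting geometrically decaying autocovariances over all $H$ coordinates so that the total stays at $O(HT)$ rather than blowing up by a factor of $H$ (or, through close-lag pairs, by a factor of $T$). The conditional-mean-zero property from Assumption (A3) is what turns these lagged terms into genuine covariances rather than mere bounds on products of $L^1$ norms, and this is essential for the geometric decay to be exploitable.
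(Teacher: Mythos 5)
Your proposal is correct and follows essentially the same route as the paper: the paper's proof consists precisely of asserting $\mathbb{E}\|\sum_{t=1}^T\widetilde{\bm{\Pi}}(x_{it})(\tau-I\{e_{it}\leq0\})\|_2^2=O(TH)$, applying Markov and Cauchy--Schwarz to get $L\cdot O_p(\sqrt{TH}\,d(n,T))$, and invoking $d(n,T)\geq C\sqrt{H/T}$ to absorb this into $L\cdot O_p(Td^2(n,T))$. Your additional care with the lagged $\alpha$-mixing cross terms (exploiting the $O(1/H)$ local support of the B-spline basis so the sum stays at $O(HT)$) is a detail the paper leaves entirely implicit, and is a welcome tightening rather than a departure.
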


\begin{proof}[\textbf{Proof of Lemma \ref{lemma:5}}]
	
	The proof is straightforward using that
	\begin{equation}
		\mathbb{E}\left[\left\|\sum_{t=1}^T\widetilde{\Pi}(x_{it})(\tau-I\{e_{it}\leq0\})\right\|_2^2\right]=O_p(TH).
	\end{equation}
	By Markov's inequality, it is easy to check that
	\begin{equation}
		\begin{split}
			& \sup_{\|\bm{\vartheta}_{i}-\bm{\vartheta}_{0i}\|_2=Ld(n,T)}\sum_{t=1}^T\widetilde{\bm{\Pi}}(x_{it})^\top(\bm{\vartheta}_i-\bm{\vartheta}_{0i})(\tau-I\{e_{it}\leq0\})\\
			=&L\cdot O_p(\sqrt{TH}d(n,T))=L\cdot O_p(Td^2(n,T)).
		\end{split}
	\end{equation}
	
\end{proof}

\begin{lemma}\label{lemma:asymp_represent}
	
	Under the conditions of Theorem \ref{thm:1}, we have the following asymptotic representations of the oracle estimator
	\begin{equation}
		\begin{split}
			&\widehat{\mu}_i-\mu_{0i} + o_p(|\widehat{\mu}_i-\mu_{0i}|)\\
			= & -\bm{\gamma}_i^\top(\widehat{\bm{\theta}}-\bm{\theta}_0)+f_i(0)^{-1}\left\{\mathbb{H}_i^{(1)}(\widehat{\mu}_i,\widehat{\bm{\theta}}) - \mathbb{H}_i^{(1)}(\mu_{0i},\bm{\theta}_0) - H_i^{(1)}(\widehat{\mu}_i,\widehat{\bm{\theta}})\right\}\\
			+&f_i(0)^{-1}\mathbb{H}_i^{(1)}(\mu_{0i},\bm{\theta}_0)+O_p(T^{-1}\vee H^{-d}\vee\|\widehat{\bm{\theta}}-\bm{\theta}_0\|_2^2),
		\end{split}
	\end{equation}
	for all $i=1,2,\dots,n$, and
	\begin{equation}
		\begin{split}
			& \widehat{\bm{\theta}}-\bm{\theta}_0 + o_p(\|\widehat{\bm{\theta}}-\bm{\theta}_0\|_2)\\
			= & \bm{\Gamma}^{-1}\left[-\frac{1}{n}\sum_{i=1}^n\mathbb{H}_i^{(1)}(\mu_{0i},\bm{\theta}_0)\bm{\gamma}_i + \mathbb{H}^{(2)}(\bm{\mu}_0,\bm{\theta}_0)\right]\\
			- & \bm{\Gamma}^{-1} \left[\frac{1}{n}\sum_{i=1}^n\left\{\mathbb{H}_i^{(1)}(\widehat{\mu}_i,\widehat{\bm{\theta}}) - \mathbb{H}_i^{(1)}(\mu_{0i},\bm{\theta}_0) - H_i^{(1)}(\widehat{\mu}_i,\widehat{\bm{\theta}})\right\}\bm{\gamma}_i\right]\\
			+ & \bm{\Gamma}^{-1}\left[\mathbb{H}^{(2)}(\widehat{\bm{\mu}},\widehat{\bm{\theta}}) - \mathbb{H}^{(2)}(\bm{\mu}_0,\bm{\theta}_0) - H^{(2)}(\widehat{\bm{\mu}},\widehat{\bm{\theta}})\right]\\
			+ & O_p\left(T^{-1}H^{1/2}\vee H^{-d}\vee \max_{1\leq i\leq n}|\widehat{\mu}_i-\mu_{0i}|^2\right).
		\end{split}
	\end{equation}
	
\end{lemma}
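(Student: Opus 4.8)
\emph{Proof plan for Lemma~\ref{lemma:asymp_represent}.} The goal is a Bahadur-type linearization obtained by profiling out the incidental effects $\mu_i$. Since the objective $(nT)^{-1}\sum_i\sum_t\rho_\tau(y_{it}-\mu_i-\bm\Pi(x_{it})^\top\bm\theta)$ is convex, the oracle minimizer satisfies approximate estimating equations: optimality in each $\mu_i$ forces $\mathbb H_i^{(1)}(\widehat\mu_i,\widehat{\bm\theta})=O_p(T^{-1})$ uniformly over $1\le i\le n$ (at most one observation per individual lies on the fitted line), and optimality in $\bm\theta$ forces $\mathbb H^{(2)}(\widehat{\bm\mu},\widehat{\bm\theta})$ to vanish up to the subgradient slack produced by the at most $O(H)$ observations on the fitted surface, which is of smaller order than the remainder $O_p(T^{-1}H^{1/2})$ kept in the statement under the rate conditions of Theorem~\ref{thm:1}. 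I would take these two (approximate) equations as the starting point.

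For the individual effects, I would write, for each $i$,
\[
\mathbb H_i^{(1)}(\widehat\mu_i,\widehat{\bm\theta}) = \mathbb H_i^{(1)}(\mu_{0i},\bm\theta_0) + \bigl\{\mathbb H_i^{(1)}(\widehat\mu_i,\widehat{\bm\theta})-\mathbb H_i^{(1)}(\mu_{0i},\bm\theta_0)-H_i^{(1)}(\widehat\mu_i,\widehat{\bm\theta})\bigr\} + H_i^{(1)}(\widehat\mu_i,\widehat{\bm\theta}),
\]
i.e. score at the truth $+$ a stochastic-equicontinuity bracket $+$ the deterministic drift. The drift $H_i^{(1)}$ is a smooth deterministic function of its argument; expanding it around $(\mu_{0i},\bm\theta_0)$ and using Assumption (A3) (bounded conditional density with bounded derivative, bounded away from zero near the origin) together with the spline bias $|R_{it}|=O(H^{-d})$, its linear part equals $-f_i(0)(\widehat\mu_i-\mu_{0i})-f_i(0)\bm\gamma_i^\top(\widehat{\bm\theta}-\bm\theta_0)$ by the very definitions $f_i(0)=\mathbb E[f_i(0|x_{it})]$ and $\bm\gamma_i=f_i(0)^{-1}\mathbb E[f_i(0|x_{it})\bm\Pi(x_{it})]$, its constant part $H_i^{(1)}(\mu_{0i},\bm\theta_0)$ is $O(H^{-d})$, and its remainder is $O_p(|\widehat\mu_i-\mu_{0i}|^2+\|\widehat{\bm\theta}-\bm\theta_0\|_2^2+H^{-2d})$. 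Inserting this into the estimating equation, dividing by $f_i(0)$ (uniformly bounded below by (A3)), and moving the $O_p(|\widehat\mu_i-\mu_{0i}|^2)=o_p(|\widehat\mu_i-\mu_{0i}|)$ piece (by Step~1 consistency) to the left produces the first representation.

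For the slope, I would run the same three-way split on $\mathbb H^{(2)}(\widehat{\bm\mu},\widehat{\bm\theta})$ and expand the drift $H^{(2)}$, whose linear part is $-n^{-1}\sum_i\mathbb E[f_i(0|x_{it})\bm\Pi(x_{it})](\widehat\mu_i-\mu_{0i})-n^{-1}\sum_i\mathbb E[f_i(0|x_{it})\bm\Pi(x_{it})\bm\Pi(x_{it})^\top](\widehat{\bm\theta}-\bm\theta_0)$. Substituting the first representation for each $\widehat\mu_i-\mu_{0i}$ and using $\mathbb E[f_i(0|x_{it})\bm\Pi(x_{it})]=f_i(0)\bm\gamma_i$, the coefficient of $\widehat{\bm\theta}-\bm\theta_0$ becomes $n^{-1}\sum_i\mathbb E[f_i(0|x_{it})\bm\Pi(x_{it})\bm\Pi(x_{it})^\top]-n^{-1}\sum_i f_i(0)\bm\gamma_i\bm\gamma_i^\top=\bm\Gamma$, the scores regroup into $-n^{-1}\sum_i\bm\gamma_i\mathbb H_i^{(1)}(\mu_{0i},\bm\theta_0)+\mathbb H^{(2)}(\bm\mu_0,\bm\theta_0)$ plus the two equicontinuity brackets, and the quadratic error collects as $O_p(T^{-1}H^{1/2}\vee H^{-d}\vee\max_i|\widehat\mu_i-\mu_{0i}|^2)$ after absorbing the $O_p(\|\widehat{\bm\theta}-\bm\theta_0\|_2^2)$ part into $o_p(\|\widehat{\bm\theta}-\bm\theta_0\|_2)$. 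Left-multiplying by $\bm\Gamma^{-1}$, whose operator norm is bounded since the eigenvalues of $\bm\Gamma$ are bounded away from $0$ and $\infty$ under (A2)--(A3) and the normalization of $\bm\Pi$ (as in Lemma~\ref{lemma:3}), gives the second representation.

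I expect the main obstacle to be the uniform bookkeeping of the incidental-parameter coupling: the $n$ nuisance parameters $\widehat\mu_i$ must be eliminated simultaneously, their Taylor remainders controlled uniformly in $i$ (this is where Step~1 consistency and the uniform eigenvalue bounds of Lemma~\ref{lemma:3} enter), and one must check that every discarded piece --- the subgradient slack, the $O(H^{-d})$ bias, the density-smoothing errors of size $O(H^{-d}\times\text{perturbation})$, and the cubic-in-$\sqrt H\|\cdot\|_2$ terms --- is genuinely dominated by the $o_p$ and $O_p$ bounds stated. The stochastic-equicontinuity brackets are deliberately retained in the representation and are bounded only later, in the proof of Theorem~\ref{thm:1}, via Bernstein-type inequalities for $\alpha$-mixing sequences (Lemma~\ref{lemma:1}).
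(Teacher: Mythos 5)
Your proposal follows essentially the same route as the paper's proof: starting from the approximate first-order conditions $\mathbb{H}_i^{(1)}(\widehat\mu_i,\widehat{\bm\theta})=O_p(T^{-1})$ and $\|\mathbb{H}^{(2)}(\widehat{\bm\mu},\widehat{\bm\theta})\|_2=O_p(T^{-1}H^{1/2})$, performing the same three-way decomposition into score at the truth, equicontinuity bracket, and deterministic drift, Taylor-expanding the drifts with linear coefficients $f_i(0)$ and $f_i(0)\bm\gamma_i$, substituting the $\widehat\mu_i$ representation into the $H^{(2)}$ expansion so that the slope coefficient collapses to $\bm\Gamma$, and deferring the bounds on the equicontinuity brackets to the proof of Theorem~\ref{thm:1}. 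The argument and bookkeeping of remainder terms match the paper's, so the plan is correct.
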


\begin{proof}[\textbf{Proof of Lemma \ref{lemma:asymp_represent}}]
	
	By the computational property of the QR estimator \citep{kato2012asymptotics}, it is shown that $\max_{1\leq i\leq n}|\mathbb{H}_i^{(1)}(\widehat{\mu}_i,\widehat{\bm{\theta}})|=O_p(T^{-1})$. Thus, uniformly over $1\leq i\leq n$, we have
	\begin{equation}
		O_p(T^{-1}) = \mathbb{H}_i^{(1)}(\mu_{0i},\bm{\theta}_0) + H_i^{(1)}(\widehat{\mu}_i,\widehat{\bm{\theta}})+ \left\{\mathbb{H}_i^{(1)}(\widehat{\mu}_i,\widehat{\bm{\theta}}) - \mathbb{H}_i^{(1)}(\mu_{0i},\bm{\theta}_0) - H_i^{(1)}(\widehat{\mu}_i,\widehat{\bm{\theta}})\right\}.
	\end{equation}
	Expanding $H_i^{(1)}(\widehat{\mu}_i,\widehat{\bm{\theta}})$ around $(\mu_{0i},\bm{\theta}_0)$, we have
	\begin{equation}
		\begin{split}
			H_i^{(1)}(\widehat{\mu}_i,\widehat{\bm{\theta}}) & = -f_i(0)(\widehat{\mu}_i-\mu_{0i}) - f_i(0)\bm{\gamma}_i^\top(\widehat{\bm{\theta}}-\bm{\theta}_{0})\\
			& + O_p(H^{-d}\vee\max_{1\leq i\leq n}|\widehat{\mu}_i-\mu_{0i}|^2\vee\|\widehat{\bm{\theta}}-\bm{\theta}_0\|_2^2),
		\end{split}
	\end{equation}
	and hence, for all $1\leq i\leq n$,
	\begin{equation}\label{eq:mu_dif}
		\begin{split}
			\widehat{\mu}_i-\mu_{0i} & = -\bm{\gamma}_i^\top(\widehat{\bm{\theta}}-\bm{\theta}_0)+f_i(0)^{-1}\left\{\mathbb{H}_i^{(1)}(\widehat{\mu}_i,\widehat{\bm{\theta}}) - \mathbb{H}_i^{(1)}(\mu_{0i},\bm{\theta}_0) - H_i^{(1)}(\widehat{\mu}_i,\widehat{\bm{\theta}})\right\}\\
			&+f_i(0)^{-1}\mathbb{H}_i^{(1)}(\mu_{0i},\bm{\theta}_0)+O_p(T^{-1}\vee H^{-d}\vee\max_{1\leq i\leq n}|\widehat{\mu}_i-\mu_{0i}|^2\vee\|\widehat{\bm{\theta}}-\bm{\theta}_0\|_2^2).
		\end{split}
	\end{equation}
	Similarly, we have $\|\mathbb{H}^{(2)}(\widehat{\bm{\mu}},\widehat{\bm{\theta}})\|_2=O_p(T^{-1}\max_{1\leq i\leq n,1\leq t\leq T}\|\bm{\Pi}(x_{it})\|_2)=O_p(T^{-1}H^{1/2})$, and
	\begin{equation}\label{eq:H2_O}
		\begin{split}
			O_p(T^{-1}H^{1/2}) & = \mathbb{H}^{(2)}(\bm{\mu}_0,\bm{\theta}_0) + H^{(2)}(\widehat{\bm{\mu}},\widehat{\bm{\theta}})+\left\{\mathbb{H}^{(2)}(\widehat{\bm{\mu}},\widehat{\bm{\theta}}) - \mathbb{H}^{(2)}(\bm{\mu}_0,\bm{\theta}_0) - H^{(2)}(\widehat{\bm{\mu}},\widehat{\bm{\theta}})\right\}.
		\end{split}
	\end{equation}
	Expanding $H^{(2)}(\widehat{\bm{\mu}},\widehat{\bm{\theta}})$ around $(\bm{\mu}_{0},\bm{\theta}_0)$, we have
	\begin{equation}\label{eq:H2_exp}
		\begin{split}
			H^{(2)}&(\widehat{\bm{\mu}},\widehat{\bm{\theta}}) =-\frac{1}{n}\sum_{i=1}^n\mathbb{E}[f_i(0|x_{it})\bm{\Pi}(x_{it})\bm{\Pi}(x_{it})^\top](\widehat{\bm{\theta}}-\bm{\theta}_0)\\
			&-\frac{1}{n} \sum_{i=1}^n\mathbb{E}[f_i(0|x_{it})\bm{\Pi}(x_{it})](\widehat{\mu}_i-\mu_{0i})+o_p(\|\widehat{\bm{\theta}}-\bm{\theta}_0\|_2)+O_p(\max_{1\leq i\leq n}|\widehat{\mu}_i-\mu_{0i}|^2).
		\end{split}
	\end{equation}
	By plugging \eqref{eq:mu_dif} into \eqref{eq:H2_exp}, we have
	\begin{equation}\label{eq:H2_result}
		\begin{split}
			H^{(2)}(\widehat{\bm{\mu}},\widehat{\bm{\theta}}) = & -\bm{\Gamma}(\widehat{\bm{\theta}}-\bm{\theta}_0)-\frac{1}{n}\sum_{i=1}^n\mathbb{H}_i^{(1)}(\mu_{0i},\bm{\theta}_0)\bm{\gamma}_i\\
			&-\frac{1}{n}\sum_{i=1}^n\left\{\mathbb{H}_i^{(1)}(\widehat{\mu}_i,\widehat{\bm{\theta}}) - \mathbb{H}_i^{(1)}(\mu_{0i},\bm{\theta}_0) - H_i^{(1)}(\widehat{\mu}_i,\widehat{\bm{\theta}})\right\}\bm{\gamma}_i\\
			&+o_p(\|\widehat{\bm{\theta}}-\bm{\theta}_0\|_2)
			+O_p(T^{-1}\vee H^{-d}\vee\max_{1\leq i\leq n}|\widehat{\mu}_i-\mu_{0i}|^2).
		\end{split}
	\end{equation}
	Combining \eqref{eq:H2_O} and \eqref{eq:H2_result}, we can obtain
	\begin{equation}
		\begin{split}
			\bm{\Gamma}(\widehat{\bm{\theta}}-\bm{\theta}_0)
			= & -\frac{1}{n}\sum_{i=1}^n\mathbb{H}_i^{(1)}(\mu_{0i},\bm{\theta}_0)\bm{\gamma}_i + \mathbb{H}^{(2)}(\bm{\mu}_0,\bm{\theta}_0)\\
			& - \frac{1}{n}\sum_{i=1}^n\left\{\mathbb{H}_i^{(1)}(\widehat{\mu}_i,\widehat{\bm{\theta}}) - \mathbb{H}_i^{(1)}(\mu_{0i},\bm{\theta}_0) - H_i^{(1)}(\widehat{\mu}_i,\widehat{\bm{\theta}})\right\}\bm{\gamma}_i\\
			& +\left\{\mathbb{H}^{(2)}(\widehat{\bm{\mu}},\widehat{\bm{\theta}}) - \mathbb{H}^{(2)}(\bm{\mu}_0,\bm{\theta}_0) - H^{(2)}(\widehat{\bm{\mu}},\widehat{\bm{\theta}})\right\}\\
			& + O_p(T^{-1}H^{1/2}\vee H^{-d}\vee \max_{1\leq i\leq n}|\widehat{\mu}_i-\mu_{0i}|^2)+o_p(\|\widehat{\bm{\theta}}-\bm{\theta}_0\|_2),
		\end{split}
	\end{equation}
	which completes the proof.
	
\end{proof}

\begin{lemma}\label{lemma:ep_rates}
	Take $\delta$ such that $\delta\sqrt{H}\to0$ and $\max_{1\leq i\leq n}|\widehat{\mu}_i-\mu_{0i}|\vee\|\widehat{\bm{\theta}}-\bm{\theta}_0\|_2=O_p(\delta)$. We have
	\begin{equation}
		\left\|\frac{1}{n}\sum_{i=1}^n\bm{\gamma}_i\left\{\mathbb{H}_i^{(1)}(\widehat{\mu}_i,\widehat{\bm{\theta}})-H_i^{(1)}(\widehat{\mu}_i,\widehat{\bm{\theta}})-\mathbb{H}_i^{(1)}(\mu_{0i},\bm{\theta}_0)\right\}\right\|_2=O_p(\sqrt{H}d(T,\delta)\vee H^{-d})
	\end{equation}
	and
	\begin{equation}
		\left\|\mathbb{H}^{(2)}(\widehat{\mu}_i,\widehat{\bm{\theta}})-H^{(2)}(\widehat{\mu}_i,\widehat{\bm{\theta}})-\mathbb{H}^{(2)}(\mu_{0i},\bm{\theta}_0)\right\|_2=O_p(\sqrt{H}d(T,\delta)\vee H^{-d}),
	\end{equation}
	where $d(T,\delta):=[T^{-1}|\log(\sqrt{H}\delta)|]\vee [T^{-1/2}H^{1/4}\delta^{1/2}|\log(\sqrt{H}\delta)|^{1/2}]$.
\end{lemma}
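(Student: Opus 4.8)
The plan is to prove a maximal inequality for the centered empirical process
\[
\mathbb{G}_i(\mu,\bm{\theta}) := \mathbb{H}_i^{(1)}(\mu,\bm{\theta}) - H_i^{(1)}(\mu,\bm{\theta}) - \mathbb{H}_i^{(1)}(\mu_{0i},\bm{\theta}_0) + H_i^{(1)}(\mu_{0i},\bm{\theta}_0),
\]
noting that $H_i^{(1)}(\mu_{0i},\bm{\theta}_0)=0$ up to the $O(H^{-d})$ spline approximation error (since $\mathbb{P}(e_{it}(\tau)\le 0\mid x_{it})=\tau$), which explains the $H^{-d}$ term in the bound. The increment $\mathbb{H}_i^{(1)}(\mu,\bm{\theta}) - \mathbb{H}_i^{(1)}(\mu_{0i},\bm{\theta}_0)$ is an average of the indicators $I\{e_{it}\le \widetilde{\bm{\Pi}}(x_{it})^\top(\bm{\vartheta}-\bm{\vartheta}_{0i}) + R_{it}\} - I\{e_{it}\le R_{it}\}$; conditional on $\{x_{it}\}$, each summand is bounded by $1$ and has conditional variance of order $\sqrt{H}\delta$ by the boundedness of the conditional density $f_i(\cdot|x_{it})$ in (A3) together with $\|\widetilde{\bm{\Pi}}(x_{it})\|_2 \le \sqrt{H}$ and $\|\bm{\vartheta}-\bm{\vartheta}_{0i}\|_2 = O(\delta)$. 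So I would set $A = 1$, $D^2 \asymp \sqrt{H}\delta$ in the Bernstein-type inequality of Lemma \ref{lemma:1}.

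First I would fix a single index $i$ and a single coordinate of $\mathbb{H}_i^{(1)}$ (it is scalar here, so this is immediate), and cover the ball $\{(\mu,\bm{\theta}): |\mu-\mu_{0i}|\vee\|\bm{\theta}-\bm{\theta}_0\|_2\le C\delta\}$ by a grid $\mathcal{N}_i$ of mesh $\delta_0 \asymp T^{-c}\delta$ for $c$ large, so that $\log|\mathcal{N}_i| \le C H\log T$; the Lipschitz continuity of $H_i^{(1)}$ and the trivial $\sqrt{H}\delta_0$ bound for the oscillation of the non-centered part over a grid cell (using $\rho_\tau$'s derivative is bounded and $\|\widetilde{\bm{\Pi}}\|_2\le\sqrt H$) reduce the supremum to a maximum over the grid. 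Applying Lemma \ref{lemma:1} with the $A,D^2$ above and deviation level $a = \varepsilon\cdot d(T,\delta)$, and then a union bound over the $|\mathcal{N}_i|$ grid points, gives a probability bound of the form $\exp\big[CH\log T + C\log\log T - CT d(T,\delta)^2/(\log T\cdot(d(T,\delta)A + D^2))\big]$. The definition $d(T,\delta) = [T^{-1}|\log(\sqrt H\delta)|] \vee [T^{-1/2}H^{1/4}\delta^{1/2}|\log(\sqrt H\delta)|^{1/2}]$ is precisely engineered so that this exponent tends to $-\infty$: the first branch dominates the $aA$ term in the denominator and the second branch dominates the $D^2 \asymp \sqrt H\delta$ term, and in each regime $T d(T,\delta)^2/(\log T \cdot \text{denominator}) \gtrsim H\log T$. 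This yields $\sup_{(\mu,\bm{\theta})}|\mathbb{G}_i(\mu,\bm{\theta})| = O_p(d(T,\delta))$ per $i$; multiplying by $\|\bm{\gamma}_i\|_2 \le \sqrt H$ and averaging over $i$ (the $\bm{\gamma}_i$ are deterministic, so no extra randomness enters) gives the first display. For the second display I would repeat the argument with the extra factor $\bm{\Pi}(x_{it})$ inside the average, which contributes an additional $\sqrt H$ to the per-summand bound and variance, handled identically; the evaluation at the data-dependent point $(\widehat{\mu}_i,\widehat{\bm{\theta}})$ is legitimate because the bound is a uniform supremum over the $O_p(\delta)$-ball, into which $(\widehat{\mu}_i,\widehat{\bm{\theta}})$ falls with probability approaching one by hypothesis.

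The main obstacle is bookkeeping the interplay between the covering-number term $H\log T$, the variance proxy $D^2 \asymp \sqrt H\delta$, and the logarithmic factor $|\log(\sqrt H\delta)|$ so that the two branches of $d(T,\delta)$ come out exactly as stated; one must be careful that when $\sqrt H\delta$ is very small the $|\log(\sqrt H\delta)|$ factor is large, which is why it appears inside $d(T,\delta)$ rather than being absorbed into a constant. A secondary subtlety is that the union bound is over $i=1,\dots,n$ as well (to get the conclusion uniformly in $i$), so strictly one needs the exponent to beat $H\log T + \log n$; under (A6) and the rate conditions of Theorem \ref{thm:1} this is automatic since $H\log T$ already dominates $\log n$, but it should be checked. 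Everything else — Knight's identity style expansions, the Lipschitz reduction, and the mixing Bernstein inequality — is routine given Lemmas \ref{lemma:1} and \ref{lemma:3}.
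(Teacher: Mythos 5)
Your reduction of the lemma to a uniform-in-$(\mu,\bm{\theta})$ maximal inequality over the $O_p(\delta)$-ball, your identification of the $H^{-d}$ term as the spline approximation bias in $\mathbb{H}_i^{(1)}(\mu_{0i},\bm{\theta}_0)$, and your variance proxy $D^2\asymp\sqrt{H}\delta$ via (A3) all match what is needed and what the paper does. The gap is in the step that controls the supremum. You propose a $\delta_0$-mesh with $\delta_0\asymp T^{-c}\delta$, so $\log|\mathcal{N}_i|\asymp H\log T$, followed by a union bound with the scalar mixing Bernstein inequality at deviation level $a=\varepsilon\, d(T,\delta)$. Plugging either branch of $d(T,\delta)$ into the exponent of Lemma \ref{lemma:1} gives
\begin{equation}
\frac{T}{\log T}\cdot\frac{a^2}{a+\sqrt{H}\delta}\;\asymp\;\frac{|\log(\sqrt{H}\delta)|}{\log T},
\end{equation}
which must exceed the union-bound cost $CH\log T+\log n$ for the probability to vanish; this requires $|\log(\sqrt{H}\delta)|\gtrsim H(\log T)^2$, i.e.\ $\sqrt{H}\delta\lesssim e^{-H(\log T)^2}$, which is false in the regime where the lemma is applied ($\delta=H^{-1/2}n^{-1/2}T^{-1/3}$). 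Your claim that ``$d(T,\delta)$ is precisely engineered so that this exponent tends to $-\infty$'' therefore does not hold: the naive net-plus-union-bound argument only delivers a deviation level of order $H(\log T)^2/T\,\vee\,H^{3/4}\delta^{1/2}\log T/\sqrt{T}$, in which the metric entropy $H\log T$ of the parameter ball replaces $|\log(\sqrt{H}\delta)|$, and this is strictly weaker than the stated $d(T,\delta)$.

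The paper avoids this by not discretizing the parameter ball at all. It observes that the relevant function class $\mathcal{G}_\delta=\{I(u\le\mu+\bm{x}^\top\bm{\theta})-I(u\le 0)\}$ is a VC subgraph class, so its $L_2(Q)$ covering numbers are polynomial in $1/\epsilon$ uniformly over probability measures $Q$ (Theorem 2.6.7 of van der Vaart and Wellner), and then applies a genuine empirical-process maximal inequality for bounded classes under mixing (Proposition B.1 of \citet{kato2012asymptotics}). Chaining against the polynomial covering numbers, combined with the variance bound $\mathbb{E}[g^2]\le C\sqrt{H}\delta$, is what produces the entropy-free rate $\sqrt{\sigma^2|\log\sigma^2|/T}+|\log\sigma^2|/T$ with $\sigma^2=\sqrt{H}\delta$, i.e.\ exactly the two branches of $d(T,\delta)$. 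To repair your argument you would need to replace the crude ball covering by bracketing or uniform-entropy covering of the indicator class itself and run a chaining argument (or invoke such a ready-made inequality), rather than a single-scale union bound; otherwise you only prove a weaker version of the lemma that does not match the statement.
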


\begin{proof}[\textbf{Proof of Lemma \ref{lemma:ep_rates}}]
	
	We focus on the proof of the first statement since the proof of the second one is analogous. Without loss of generality, we assume that $\mu_{0i}=0$ and $\bm{\theta}_0=\bm{0}$. Let $g_{\mu,\bm{\theta}}(u,\bm{x}):=I(u\leq\mu+\bm{x}^\top\bm{\theta})-I(u\leq0)$ and $\mathcal{G}_\delta:=\{g_{\mu,\bm{\theta}}:|\mu|\leq\delta,\|\bm{\theta}\|_2\leq\delta\}$ and $\xi_{it}=(u_{it},\bm{\Pi}(x_{it}))$. 
	
	As $|m(x_{it})-\bm{\Pi}(x_{it})^\top\bm{\theta}_{0i}|\leq H^{-d}$ and $\|\bm{\gamma}_i\|_2\leq\sqrt{H}$ over $i=1,\dots,n$, it suffices to show
	\begin{equation}
		\max_{1\leq i\leq n}\mathbb{E}\left[\frac{1}{T}\sup_{g\in\mathcal{G}_\delta}\left|\sum_{t=1}^T\{g(\xi_{it})-\mathbb{E}g(\xi_{it})\}\right|\right]=O(d(T,\delta)).
	\end{equation}
	Denote $\widetilde{\mathcal{G}}_{i,\delta}:=\{g-\mathbb{E}[g(\xi_{it})]:g\in\mathcal{G}_\delta\}$. Note that $\widetilde{\mathcal{G}}_{i,\delta}$ is pointwise measurable and each element is bounded by 2. By Lemmas 2.6.15 and 2.6.18 of \citet{vanweak}, the class $\mathcal{G}_\infty$ is a VC subgraph class. By Theorem 2.6.7 of \citet{vanweak}, there exists a constant $v>1$ such that the covering number satisfies $N(\widetilde{\mathcal{G}}_{i,\delta},L_2(Q),2\epsilon)\leq C\epsilon^{-v}$ for any $0<\epsilon<1$ and any probability measure $Q$ on $\mathbb{R}^{H}$. In addition, as $\mathbb{E}[g_{\mu,\bm{\theta}}(\xi_{it})^2]=\mathbb{E}[|F_i(\mu+\bm{\Pi}(x_{it})^\top\bm{\theta}|x_{it})-F_i(0|x_{it})]\leq C(|\mu|+\sqrt{H}\|\bm{\theta}\|_2)\leq C\sqrt{H}\delta$. By the Bernstein-type inequality for bounded empirical process, e.g., Proposition B.1 in \citet{kato2012asymptotics}, we obtain the desired result.
	
\end{proof}

\subsection{Proofs of Theorems 2 and 3}\label{append:B}

In this subsection, we present the proofs of Theorems \ref{thm:2} and \ref{thm:3} and relegate some auxiliary lemmas to the end of this subsection.
For the brevity of notation, we simplity $\sum_{i=1}^n\sum_{t=1}^T$ to $\sum_{i,t}$.

\begin{proof}[\textbf{Proof of Theorem 2}]
	
	We define the oracle estimator to be that obtained from \eqref{eq:oracle} assuming the groups are known and thus $\bm{\widehat{\theta}}^\text{o}_{(k)}$ is obtained from only observations in $G_k$, separately for different groups. Similarly to the proof of Theorem \ref{thm:2}, we denote $\xi(n,T)=\sqrt{H/(nT)}+H^{-d}$. It suffices to show that with probability approaching one, the oracle estimator is a local minimizer of the SCAD-penalized quantile regression \eqref{eq:objective}.
	
	Considering any $\bm{\theta}_i$ with $\|\bm{\theta}_i-\bm{\widehat{\theta}}_i^\text{o}\|_2\leq c$ for all $1\leq i\leq n$, with $c$ sufficiently small, specifically $c=o(\lambda)$, and $(\mu_1,\dots,\mu_n)$ with $\max_{1\leq i\leq n}|\mu_{i}-\widehat{\mu}_{i}^\text{o}|\leq d$, with $d$ sufficiently small. We only need to show that uniformly over $\bm{\theta}_c:=\{\bm{\theta}=(\bm{\theta}_1^\top,\dots,\bm{\theta}_n^\top)^\top:\|\bm{\theta}_i-\bm{\widehat{\theta}}_i^\text{o}\|_2\leq c,~\forall i\}$ and $\bm{\mu}_d:=\{(\mu_1,\dots,\mu_n)^\top:\max_{1\leq i\leq n}|\mu_i-\widehat{\mu}_i|\leq d\}$
	\begin{equation}
		\begin{split}
			&\frac{1}{nT}\sum_{i,t}\rho_\tau(y_{it}-\mu_i-\bm{\Pi}(x_{it})^\top\bm{\theta}_i) + \binom{n}{2}^{-1}\sum_{i<j}p_\lambda(\|\bm{\theta}_i-\bm{\theta}_j\|_2) \\
			\geq& \frac{1}{nT}\sum_{i,t}\rho_\tau(y_{it}-\widehat{\mu}_i^\text{o}-\bm{\Pi}(x_{it})^\top\bm{\widehat{\theta}}_i^\text{o})+\binom{n}{2}^{-1}\sum_{i<j}p_\lambda(\|\bm{\widehat{\theta}}_i^\text{o}-\bm{\widehat{\theta}}_j^\text{o}\|_2).
		\end{split}
	\end{equation}
	
	Let $g_i=k$ if $i\in G_k$. That is, $g_i$ is an indicator on the individual $i$'s group identity. Let $\bm{O}:=\{\bm{\theta}=(\bm{\theta}_1^\top,\dots,\bm{\theta}_n^\top)^\top:\bm{\theta}_i=\bm{\theta}_j~\text{if}~g_i=g_j\}$. That is, $\bm{O}$ consists of all coefficients that satisfy the group partition structure. For ease of presentation, define the mapping $\bm{\Gamma}:\mathbb{R}^{Hn}\to\bm{O}$ with $\Gamma(\bm{\theta})=(\bm{\theta}_1^*,\dots,\bm{\theta}_n^*)$, where $\bm{\theta}_i^*=\sum_{j:g_j=g_i}\bm{\theta}_i/|G_{g_i}|$.
	In other words, $\bm{\Gamma}$ can be the projected value of $\bm{\theta}$ to the space $\bm{O}$.
	
	The proof of the displayed equation above can be achieved by the following two steps.\\
	(a)
	\begin{equation}
		\begin{split}
			&\inf_{\bm{\theta}^*=\bm{\Gamma}(\bm{\theta}),\bm{\theta}\in\bm{\theta}_c,\bm{\mu}\in\bm{\mu}_d}
			\frac{1}{nT}\sum_{i,t}\rho_\tau(y_{it}-\mu_i-\bm{\Pi}(x_{it})^\top\bm{\theta}_i^*)+\binom{n}{2}^{-1}\sum_{i<j}p_\lambda(\|\bm{\theta}_i^*-\bm{\theta}_j^*\|_2)\\
			\geq&\frac{1}{nT}\sum_{i,t}\rho_\tau(y_{it}-\widehat{\mu}_i^\text{o}-\bm{\Pi}(x_{it})^\top\bm{\widehat{\theta}}_i^\text{o}) +\binom{n}{2}^{-1} \sum_{i<j}p_\lambda(\|\bm{\widehat{\theta}}_i^\text{o}-\bm{\widehat{\theta}}_j^\text{o}\|_2).
		\end{split}
	\end{equation}
	(b)
	\begin{equation}
		\begin{split}
			&\inf_{\bm{\theta}^*=\bm{\Gamma}(\bm{\theta}),\bm{\theta}\in\bm{\theta}_c,\bm{\mu}\in\bm{\mu}_d}
			\frac{1}{nT}\sum_{i,t}\rho_\tau(y_{it}-\mu_i-\bm{\Pi}(x_{it})^\top\bm{\theta}_i)+\binom{n}{2}^{-1}\sum_{i<j}p_\lambda(\|\bm{\theta}_i-\bm{\theta}_j\|_2)\\
			-&
			\frac{1}{nT}\sum_{i,t}\rho_\tau(y_{it}-\mu_i-\bm{\Pi}(x_{it})^\top\bm{\theta}_i^*)-\binom{n}{2}^{-1}\sum_{i<j}p_\lambda(\|\bm{\theta}_i^*-\bm{\theta}_j^*\|_2)\geq0.
		\end{split}
	\end{equation}
	
	For (a), by the definition of the local minimizer which minimizes the check loss subject to the grouping constraint, we have
	\begin{equation}\label{eq:minimizer}
		\inf_{\bm{\theta}^*=\bm{\Gamma}(\bm{\theta}),\bm{\theta}\in\bm{\theta}_c,\bm{\mu}\in\bm{\mu}_d}\frac{1}{nT}\sum_{i,t}\rho_\tau(y_{i,t}-\mu_i-\bm{\Pi}(x_{it})^\top\bm{\theta}_i^*) \geq \frac{1}{nT}\sum_{i,t}\rho_\tau(y_{it}-\widehat{\mu}_i^\text{o}-\bm{\Pi}(x_{it})^\top\bm{\widehat{\theta}}_i^\text{o}).
	\end{equation}
	If $g_i\neq g_j$, by our assumptions, we have $\lambda=o(\|\bm{\theta}_{0i}-\bm{\theta}_{0j}\|_2)$ and $\xi(n,T)=o(\lambda)$.
	Thus,
	\begin{equation}
		\label{eq:oracle_dif}
		\begin{split}
			&\|\bm{\widehat{\theta}}_i^\text{o}-\bm{\widehat{\theta}}_j^\text{o}\|_2\geq\|\bm{\theta}_{0i}-\bm{\theta}_{0j}\|_2-\|\bm{\widehat{\theta}}_i^\text{o}-\bm{\theta}_{0i}\|_2 - \|\bm{\widehat{\theta}}_j^\text{o}-\bm{\theta}_{0j}\|_2\\
			\geq & 3a\lambda-o_p(\lambda)\geq 2a\lambda.
		\end{split}
	\end{equation}
	In addition,
	\begin{equation}
		\|\bm{\theta}_i^*-\bm{\widehat{\theta}}_i^\text{o}\|_2 = \left\| \sum_{k:g_k=g_i} \bm{\theta}_k/|G_k|-\bm{\widehat{\theta}}_i^\text{o}\right\|_2\leq \max_{k:g_k=g_i}\|\bm{\theta}_k-\bm{\widehat{\theta}}_i^\text{o}\|_2\leq c,
	\end{equation}
	which implies that
	\begin{equation}
		\begin{split}
			&\|\bm{\theta}_i^*-\bm{\theta}_j^*\|_2\geq \|\bm{\widehat{\theta}}_i^\text{o}-\bm{\widehat{\theta}}_j^\text{o}\|_2 - \|\bm{\theta}^*_i- \bm{\widehat{\theta}}_i^\text{o}\|_2 - \|\bm{\theta}_j^*-\bm{\widehat{\theta}}_j^\text{o}\|_2 \geq 2a\lambda-2c \geq a\lambda.
		\end{split}
	\end{equation}
	
	Thus, by the definition of SCAD penalty function,
	\begin{equation}
		\label{eq:notequal}
		p_\lambda(\|\bm{\theta}_i^*-\bm{\theta}_j^*\|_2) = p_\lambda(\|\bm{\widehat{\theta}}_i^\text{o}-\bm{\widehat{\theta}}_j^\text{o}\|_2)=\frac{(a+1)\lambda^2}{2},~~\text{if}~g_i\neq g_j.
	\end{equation}
	On the other hand, if $g_i=g_j$, then $\bm{\widehat{\theta}}_i^\text{o}=\bm{\widehat{\theta}}_j^\text{o}$ and $\bm{\theta}_i^*=\bm{\theta}_j^*$ and thus we have
	\begin{equation}
		\label{eq:equal}
		p_\lambda(\|\bm{\theta}_i^*-\bm{\theta}_j^*\|_2) = p_\lambda(\|\bm{\widehat{\theta}}_i^\text{o}-\bm{\widehat{\theta}}_j^\text{o}\|_2)=0,~~\text{if}~g_i=g_j.
	\end{equation}
	Combining these two cases \eqref{eq:notequal} and \eqref{eq:equal}, as well as \eqref{eq:minimizer}, we proved (a).
	
	In the rest of the proof we will show (b). Using the convexity of the check loss function, we have $\rho_\tau(x)-\rho_\tau(y)\geq(\tau-I\{y\leq0\})(x-y)$. Thus for the difference of the loss terms, we have
	\begin{equation}
		\label{eq:dif_loss}
		\begin{split}
			&\sum_{t=1}^T\rho_\tau(y_{it}-\mu_i-\bm{\Pi}(x_{it})^\top\bm{\theta}_i)-\sum_{t=1}^T\rho_\tau(y_{it}-\mu_i-\bm{\Pi}(x_{it})^\top\bm{\theta}_i^*)\\
			\geq&-\sum_{t=1}^T(\tau-1\{y_{it}\leq\bm{\Pi}(x_{it})^\top\bm{\theta}_i+\mu_i\})\bm{\Pi}(x_{it})^\top(\bm{\theta}_i-\bm{\theta}_i^*)\\
			=&-\sum_{t=1}^T(\tau-1\{e_{it}\leq0\})\bm{\Pi}(x_{it})^\top(\bm{\theta}_i-\bm{\theta}_i^*)\\
			&-\sum_{t=1}^T(1\{e_{it}\leq0\}-1\{e_{it}\leq\bm{\Pi}(x_{it})^\top\bm{\theta}_i-m_{it}+\mu_i-\mu_{0i}\})\bm{\Pi}(x_{it})^\top(\bm{\theta}_i-\bm{\theta}_i^*).
		\end{split}
	\end{equation}
	For the first term, using Bernstein's inequality in Lemma \ref{lemma:1} in subsection \ref{append:A}, we have
	\begin{equation}
		\max_{1\leq h\leq H,1\leq i\leq n}\sum_{t=1}^T(\tau-1\{e_{it}\leq0\})\bm{\Pi}_h(x_{it})=O_p(\sqrt{T\log(T)\log(nH\log(T))})
	\end{equation}
	\begin{equation}
		\text{ and thus, }\max_{1\leq i\leq n}\left\|\sum_{t=1}^T(\tau-1\{e_{it}\leq0\})\bm{\Pi}(x_{it})\right\|_2=O_p(\sqrt{TH\log(T)\log(nH\log(T))}).
	\end{equation}
	By Lemma \ref{lemma:6}, for sufficiently small $c$ and $d$, we have
	\begin{equation}
		\begin{split}
			&\sup_{\substack{1\leq i\leq n,\|\bm{\theta}_i-\bm{\widehat{\theta}}_i^\text{o}\|_2\leq c\\|\mu_i-\widehat{\mu}_i^\text{o}|\leq d}}\left\|\sum_{t=1}^T(1\{e_{it}\leq0\}-1\{e_{it}\leq \bm{\Pi}(x_{it})^\top\bm{\theta}_i-m_{it}+\mu_i-\mu_{0i}\})\bm{\Pi}(x_{it})\right\|_2\\
			\leq&\sup_{\substack{1\leq i\leq n,\|\bm{\theta}_i-\bm{\widehat{\theta}}_i^\text{o}\|_2\leq c\\|\mu_i-\widehat{\mu}_i^\text{o}|\leq d}}\Bigg\|\sum_{t=1}^T(1\{e_{it}\leq0\}-1\{e_{it}\leq \bm{\Pi}(x_{it})^\top\bm{\theta}_i-m_{it}+\mu_i-\mu_{0i}\}\\
			&~~~~~~~~~+F(\bm{\Pi}(x_{it})^\top\bm{\theta}_i-m_{it}+\mu_i-\mu_{0i})-F(0))\bm{\Pi}(x_{it})\Bigg\|_2\\
			+&\sup_{\substack{1\leq i\leq n,\|\bm{\theta}_i-\bm{\widehat{\theta}}_i^\text{o}\|_2\leq c\\|\mu_i-\widehat{\mu}_i^\text{o}|\leq d}}\left\|\sum_{t=1}^T(F(\bm{\Pi}(x_{it})^\top\bm{\theta}_i-m_{it}+\mu_i-\mu_{0i})-F(0))\bm{\Pi}(x_{it})\right\|_2\\
			=&O_p(H^{3/2}T^{1/2}\log(T)\log(nT))+O_p(T\sqrt{H}\xi(n,T))=O_p(T\sqrt{H}\xi(n,T)).
		\end{split}
	\end{equation}
	We denote
	\begin{equation}
		\begin{split}
			\bm{w}_i=&-\sum_{t=1}^T(\tau-1\{e_{it}\leq0\})\bm{\Pi}(x_{it})\\
			&-\sum_{t=1}^T(1\{e_{it}\leq0\}-1\{e_{it}\leq\bm{\Pi}(x_{it})^\top\bm{\theta}_i-m_{it}+\mu_i-\mu_{0i}\})\bm{\Pi}(x_{it}).
		\end{split}
	\end{equation}
	Then, the last line in \eqref{eq:dif_loss}, after summing over $i$, can be written as
	\begin{equation}
		\begin{split}
			&\frac{1}{nT}\sum_{i=1}^n\bm{w}_i^\top(\bm{\theta}_i-\bm{\theta}_i^*)=\frac{1}{nT}\sum_{i=1}^n\bm{w}_i^\top(\bm{\theta}_i-\sum_{j:g_j=g_i}\bm{\theta}_j/|G_{g_i}|)\\
			=&\frac{1}{nT}\sum_{i=1}^n\sum_{j:g_j=g_i}\frac{\bm{w}_i^\top(\bm{\theta}_i-\bm{\theta}_j)}{|G_{g_i}|}=\frac{1}{nT}\sum_{(i,j):i<j\text{ and }g_i=g_j}\frac{(\bm{w}_i-\bm{w}_j)^\top(\bm{\theta}_i-\bm{\theta}_j)}{|G_{g_i}|}\\
			=&O_p(n^{-2}\sqrt{H}\xi(n,T)+n^{-2}\sqrt{(H/T)\log(T)\log(nH\log(T))})\times\left(\sum_{i<j\text{ and }g_i=g_j}\|\bm{\theta}_i-\bm{\theta}_j\|\right).
		\end{split}
	\end{equation}
	
	When $g_i\neq g_j$, by \eqref{eq:oracle_dif},
	\begin{equation}
		\begin{split}
			\|\bm{\theta}_i-\bm{\theta}_j\|_2
			\geq&\|\bm{\widehat{\theta}}_i^\text{o}-\bm{\widehat{\theta}}_j^\text{o}\|_2
			-\|\bm{\widehat{\theta}}_i-\bm{\widehat{\theta}}_i^\text{o}\|_2
			-\|\bm{\widehat{\theta}}_j-\bm{\widehat{\theta}}_j^\text{o}\|_2\geq\|\bm{\widehat{\theta}}_i^\text{o}-\bm{\widehat{\theta}}_j^\text{o}\|_2-2c\geq a\lambda.
		\end{split}
	\end{equation}
	Also, $\|\bm{\theta}_i^*-\bm{\theta}_j^*\|_2\geq a\lambda$, when $g_i\neq g_j$. So the difference of penalty terms of (b) is
	\begin{equation}
		\begin{split}
			&\binom{n}{2}^{-1}\sum_{i<j}p_\lambda(\|\bm{\theta}_i-\bm{\theta}_j\|_2)-\binom{n}{2}^{-1}\sum_{i<j}p_\lambda(\|\bm{\theta}_i^*-\bm{\theta}_j^*\|_2)\\
			=&\binom{n}{2}^{-1}\sum_{i<j,g_i\neq g_j}[p_\lambda(\|\bm{\theta}_i-\bm{\theta}_j\|_2)-p_\lambda(\|\bm{\theta}_i^*-\bm{\theta}_j^*\|_2)]\\
			+&\binom{n}{2}^{-1}\sum_{i<j,g_i= g_j}[p_\lambda(\|\bm{\theta}_i-\bm{\theta}_j\|_2)-p_\lambda(\|\bm{\theta}_i^*-\bm{\theta}_j^*\|_2)]\\
			=&\binom{n}{2}^{-1}\sum_{i<j,g_i=g_j}[p_\lambda(\|\bm{\theta}_i-\bm{\theta}_j\|_2)-p_\lambda(\|\bm{\theta}_i^*-\bm{\theta}_j^*\|_2)].
		\end{split}
	\end{equation}
	
	When $g_i=g_j$, we have $\bm{\theta}_i^*=\bm{\theta}_j^*$. Furthermore,
	\begin{equation}
		\begin{split}
			\|\bm{\theta}_i-\bm{\theta}_j\|_2\leq\|\bm{\widehat{\theta}}_i^
			\text{o}-\bm{\widehat{\theta}}_j^\text{o}\|_2+\|\bm{\theta}_i-\bm{\widehat{\theta}}_i^\text{o}\|_2+\|\bm{\theta}_j-\bm{\widehat{\theta}}_j^\text{o}\|_2\leq 2c\leq \lambda,
		\end{split}
	\end{equation}
	and since the SCAD penalty $p_\lambda(x)=\lambda x$ when $x\in[0,\lambda]$, we have
	\begin{equation}
		\begin{split}
			&\binom{n}{2}^{-1}\sum_{i<j}p_\lambda(\|\bm{\theta}_i-\bm{\theta}_j\|_2)-\binom{n}{2}^{-1}\sum_{i<j}p_\lambda(\|\bm{\theta}_i^*-\bm{\theta}_j^*\|_2)\\
			\geq&\binom{n}{2}^{-1}\sum_{i<j,g_i=g_j}p_\lambda(\|\bm{\theta}_i-\bm{\theta}_j\|_2)=\binom{n}{2}^{-1}\lambda\sum_{i<j,g_i=g_j}\|\bm{\theta}_i-\bm{\theta}_j\|_2.
		\end{split}
	\end{equation}
	Thus, by our assumption, the difference of the penalties is positive and dominant in the left hand side of (b), which implies that (b) holds.
	
\end{proof}

\begin{proof}[\textbf{Proof of Theorem \ref{thm:3}}]
	
	In the proof, we denote the true number of groups by $K_0$ and the true partition is $\mathcal{G}_0=(G_{01},\dots,G_{0n})$ with true group indicators $g_{0i}=k$ if $i\in G_k$. Let $\mathcal{G}=\{G_1,\dots,G_K\}$ be any partition for $\{1,\dots,n\}$ with $K$ groups, with group indicators $g_i,i=1,\dots,n$. Define
	\begin{equation}
		\bm{\theta}^{\mathcal{G}_0}=\{\bm{\theta}^{\mathcal{G}_0}_{01},\dots,\bm{\theta}^{\mathcal{G}_0}_{0n}\}=\min_{\bm{\theta}_i=\bm{\theta}_j\text{ if }g_i=g_j}\mathbb{E}\left[\sum_{i=1}^n\sum_{t=1}^T\rho_\tau(y_{it}-\mu_{0i}-\bm{\Pi}(x_{it})^\top\bm{\theta}_i)\right].
	\end{equation}
	
	We note that $\bm{\theta}_{0i}^{\mathcal{G}_0}$ under the true partition is different from the $\bm{\theta}_{0i}$ we defined previously, as the minimizer of $\mathbb{E}[f_{(k)}(0|x_{it})(\bm{\Pi}(x_{it})^\top\bm{\theta}_i-m_{it})^2]$, where $f_{(k)}(\cdot|x_{it})$ is the average conditional density function for the group $G_k$. However, we first show that they are close enough. By Knight's identity, for any $i\in G_k$,
	\begin{equation}\begin{split}
			&\rho_\tau(y_{it}-\mu_{0i}-\bm{\Pi}(x_{it})^\top\bm{\theta}_i)-\rho_\tau(y_{it}-\mu_{0i}-m_{it})\\
			=&(\bm{\Pi}(x_{it})^\top\bm{\theta}_i-m_{it})[I\{e_{it}\leq0\}-\tau]+\int_0^{\bm{\Pi}(x_{it})^\top\bm{\theta}_i-m_{it}}[I\{e_{it}\leq u\}-I\{e_{it}\leq0\}]du.
	\end{split}\end{equation}
	
	Hence, as $f_i(0|x_{it})=f_{(k)}(0|x_{it})$ for all $i\in G_k$,
	\begin{equation}\begin{split}
			&\mathbb{E}[\rho_\tau(y_{it}-\mu_{0i}-\bm{\Pi}(x_{it})^\top\bm{\theta}_i)]-\mathbb{E}[\rho_\tau(y_{it}-\mu_{0i}-m_{it})]\\
			=&\mathbb{E}\left\{\int_0^{\bm{\Pi}(x_{it})^\top\bm{\theta}-m_{it}}F_i[u|x_{it}]-F_{}[0|x_{it}]du\right\}\\
			=&\mathbb{E}\left[\frac{1}{2}f_{(k)}(0|x_{it})[\bm{\Pi}(x_{it})^\top\bm{\theta}_i-m_{it}]^2\right]+O(\mathbb{E}[|\bm{\Pi}(x_{it})^\top\bm{\theta}_i-m_{it}|^3]).
	\end{split}\end{equation}
	
	So for $\|\bm{\theta}_i-\bm{\theta}_{0i}\|_2\leq M_nH^{-(2/3)d}$ with $M_n\to\infty$ arbitrarily slowly, we have $\mathbb{E}[|\bm{\Pi}(x_{it})^\top\bm{\theta}_i-m_{it}|^3]\leq\mathbb{E}[(|\bm{\Pi}(x_{it})^\top(\bm{\theta}_i-\bm{\theta}_{0i})|+|\bm{\Pi}(x_{it})^\top\bm{\theta}_{0i}-m_{it}|)^3]=O(M_n^3H^{-(9/2)d+3/2}+H^{-3d})=O(H^{-3d})$ and thus
	\begin{equation}\begin{split}
			&\mathbb{E}[\rho_\tau(y_{it}-\mu_{0i}-\bm{\Pi}(x_{it})^\top\bm{\theta}_i)]-\mathbb{E}[\rho_\tau(y_{it}-\mu_{0i}-m_{it})]\\
			=&\mathbb{E}\left[\frac{1}{2}f_{(k)}(0|x_{it})[\bm{\Pi}(x_{it})^\top\bm{\theta}_i-m_{it}]^2\right]+O(H^{-3d}).
	\end{split}\end{equation}
	
	For $\|\bm{\theta}_i-\bm{\theta}_{0i}\|_2=M_nH^{-(2/3)d}$, we have
	\begin{equation}\begin{split}
			&\mathbb{E}[\rho_\tau(y_{it}-\mu_{0i}-\bm{\Pi}(x_{it})^\top\bm{\theta}_{(k)})]-\mathbb{E}[\rho_\tau(y_{it}-\mu_{0i}-\bm{\Pi}(x_{it})^\top\bm{\theta}_{0i})]\\
			=&\mathbb{E}\left[\frac{1}{2}f_{(k)}(0|x_{it})[\bm{\Pi}(x_{it})^\top\bm{\theta}_i-m_{it}]^2\right]-\mathbb{E}\left[\frac{1}{2}f_{(k)}(0|x_{it})[\bm{\Pi}(x_{it})^\top\bm{\theta}_{0i}-m_{it}^2]\right]+O(H^{-3d})\\
			=&\mathbb{E}\left[\frac{1}{2}f_{(k)}(0|x_{it})[\bm{\Pi}(x_{it})^\top(\bm{\theta}_i-\bm{\theta}_{0i})]^2\right]+\mathbb{E}\left[\frac{1}{2}f_{(k)}(0|x_{it})\bm{\Pi}(x_{it})^\top(\bm{\theta}_i-\bm{\theta}_{0i})(\bm{\Pi}(x_{it})^\top\bm{\theta}_{0i}-m_{it})\right]\\
			&+O(H^{-3d})\\
			=&\mathbb{E}[\frac{1}{2}f_{(k)}(0|x_{it})[\bm{\Pi}(x_{it})^\top(\bm{\theta}_i-\bm{\theta}_{0i})]^2]+O(H^{-3d})\\
			\geq & CM_nH^{-3d}-O(H^{-3d})>0,
	\end{split}\end{equation}
	where the third equality above results from $\bm{\theta}_{0i}$ minimizes $\mathbb{E}[\frac{1}{2}f_{(k)}(0|x_{it})(\bm{\Pi}(x_{it})^\top\bm{\theta}_i-m_{it})^2]$, which means $\mathbb{E}[f_{(k)}(0|x_{it})\bm{\Pi}(x_{it})^\top(\bm{\theta}_i-\bm{\theta}_{0i})(\bm{\Pi}(x_{it})^\top\bm{\theta}_{0i}-m_{it})]=0$. This means $\|\bm{\theta}_{0i}^{\mathcal{G}_0}-\bm{\theta}_{0i}\|_2\leq M_nH^{-3d}\leq H^{-d-1/2}$ and thus $\bm{\theta}_{0i}^{\mathcal{G}_0}$ still satisfies the approximation property $\sup_x|\bm{\Pi}(x)^\top\bm{\theta}_{0i}^{\mathcal{G}_0}-m_i(x)|\leq CH^{-d}$.\\
	
	\noindent\textbf{Case 1.} ($K<K_0$, under-fitted model)
	
	In this case, let $\mathcal{G}$ be the partition that minimizes $$\min_{\bm{\theta}_i=\bm{\theta}_j,\text{ if }G_i=G_j, |\mathcal{G}|=K}\mathbb{E}\left[\sum_{i=1}^n\sum_{t=1}^T\rho_\tau(y_{it}-\mu_{0i}-\bm{\Pi}(x_{it})^\top\bm{\theta}_i)\right].$$
	By definition, it is obvious that if $i, j$ belongs to the same group in the true partition $\mathcal{G}_0$ so that the distribution of $(y_{it}, x_{it})$ and $(y_{jt}, x_{jt})$ are the same, they are still in the same group in the partition $\mathcal{G}$. In other words, $\mathcal{G}$ is formed by combining some groups in $\mathcal{G}_0$. In particular, given $K_0$ is fixed, there are only a fixed number of such possible partitions $\mathcal{G}$.
	
	Suppose $G_{0k}$, $G_{0k'}$ are combined into $G_{k''}$, then
	\begin{equation}\begin{split}
			&\sum_{i\in G_{k''}}\mathbb{E}[\rho_\tau(y_{it}-\mu_{0i}-\bm{\Pi}(x_{it})^\top\bm{\theta}_{0i}^{\mathcal{G}})]-\sum_{i\in G_{k''}}\mathbb{E}[\rho_\tau(y_{it}-\mu_{0i}-\bm{\Pi}(x_{it})^\top\bm{\theta}_{0i}^{\mathcal{G}_0})]\\
			=&\sum_{i\in G_{k''}}\mathbb{E}\int_{\bm{\Pi}(x_{it})^\top\bm{\theta}_{0i}^{\mathcal{G}_0}-m_{it}}^{\bm{\Pi}(x_{it})^\top\bm{\theta}_{0i}^{\mathcal{G}}-m_{it}}[I\{e_{it}\leq u\}-I\{e_{it}\leq 0\}]du\geq\sum_{i\in G_{k''}}\|\bm{\theta}_{0i}^{\mathcal{G}}-\bm{\theta}_{0i}^{\mathcal{G}_0}\|_2^2\geq Cn_{k''}\rho
	\end{split}\end{equation}
	where at least one of the distance $\|\bm{\theta}_{0i}^{\mathcal{G}}-\bm{\theta}_{0i}^{\mathcal{G}_0}\|_2$ for $i\in G_{0k}$ and $\|\bm{\theta}_{0i}^{\mathcal{G}}-\bm{\theta}_{0i}^{\mathcal{G}_0}\|_2$ for $i\in G_{0k'}$ is larger than, say $\rho/2$. By summing over different groups, we get $$\sum_{i=1}^n\mathbb{E}[\rho_\tau(y_{it}-\mu_{0i}-\bm{\Pi}(x_{it})^\top\bm{\theta}_{0i}^{\mathcal{G}})]-\sum_{i=1}^n\mathbb{E}[\rho_\tau(y_{it}-\mu_{0i}-\bm{\Pi}(x_{it})^\top\bm{\theta}_{0i}^{\mathcal{G}_0})]\geq Cn\rho.$$
	
	By following the proof of Theorem \ref{thm:1}, in particular Lemma \ref{lemma:T1_concentration}, we can show that $\|\bm{\widehat{\vartheta}}_{i}^{\mathcal{G}}-\bm{\vartheta}_{0i}^{\mathcal{G}_0}\|_2=O_p(\sqrt{H/(nT)}+H^{-d})$. Similarly to Lemma \ref{lemma:4}, we have $\mathbb{E}[\sum_{t=1}^T\rho_\tau(y_{it}-\widetilde{\bm{\Pi}}(x_{it})^\top\bm{\widehat{\vartheta}}_{i}^{\mathcal{G}})]-\mathbb{E}[\sum_t\rho_\tau(y_{it}-\bm{\Pi}(x_{it})^\top\bm{\vartheta}_{0i}^{\mathcal{G}_0})]=O_p(T\|\bm{\widehat{\vartheta}}_{i}^{\mathcal{G}}-\bm{\vartheta}_{0i}^{\mathcal{G}_0}\|_2^2)=O_p(nT\xi^2(n,T))$. By the definition of $\bm{\vartheta}_{0i}^{\mathcal{G}}$, we have $\sum_{i,t}\widetilde{\bm{\Pi}}(x_{it})^\top(\bm{\widehat{\vartheta}}_{i}^{\mathcal{G}}-\bm{\vartheta}_{0i}^{\mathcal{G}_0})[I\{e_{it}\leq \widetilde{\bm{\Pi}}(x_{it})^\top\bm{\theta}_{0i}^{\mathcal{G}}-\mu_{0i}-m_{it}\}-\tau]$ has mean zero and thus of order $O_p(nT\xi^2(n,T))$. Thus, $$\Big|\sum_{i,t}\rho_\tau(y_{it}-\bm{\Pi}(x_{it})^\top\bm{\widehat{\theta}}_{i}^{\mathcal{G}})-\sum_{i,t}\rho_\tau(y_{it}-\bm{\Pi}(x_{it})^\top\bm{\theta}_{0i}^{\mathcal{G}})\Big|=O_p(nT\xi^2(n,T)).$$
	
	Note that
	$$\left|\frac{1}{nT}\sum_{i,t}\rho_\tau(y_{it}-\widetilde{\bm{\Pi}}(x_{it})^\top\bm{\vartheta}_{0i}^{\mathcal{G}})-\mathbb{E}\left[\frac{1}{nT}\sum_{i,t}\rho_\tau(y_{it}-\widetilde{\bm{\Pi}}(x_{it})^\top\bm{\vartheta}_{0i}^{\mathcal{G}})\right]\right|=O_p((nT)^{-1/2}),$$
	and
	$$\left|\frac{1}{nT}\sum_{i,t}\rho_\tau(y_{it}-\widetilde{\bm{\Pi}}(x_{it})^\top\bm{\vartheta}_{0i}^{\mathcal{G}_0})-\mathbb{E}\left[\frac{1}{nT}\sum_{i,t}\rho_\tau(y_{it}-\widetilde{\bm{\Pi}}(x_{it})^\top\bm{\vartheta}_{0i}^{\mathcal{G}_0})\right]\right|=O_p((nT)^{-1/2}).$$ 
	So for the SIC, denote $\eta=KH\log(nT)/(nT)$ and we can write
	\begin{equation}\begin{split}
			&\text{SIC}(K)-\text{SIC}(K_0)\\
			=&\log\left(1+\frac{\sum_{i,t}\rho_\tau(y_{it}-\widetilde{\bm{\Pi}}(x_{it})^\top\bm{\widehat{\vartheta}}_{i}^{\mathcal{G}})/(nT)-\sum_{i,t}\rho_\tau(y_{it}-\widetilde{\bm{\Pi}}(x_{it})^\top\bm{\widehat{\vartheta}}_{i}^{\mathcal{G}_0})/(nT)}{\sum_{i,t}\rho_\tau(y_{it}-\widetilde{\bm{\Pi}}(x_{it})^\top\bm{\widehat{\vartheta}}_{i}^{\mathcal{G}_0})/(nT)}\right)+O(\eta)\\
			=&\log\left(1+\frac{\sum_{i,t}\rho_\tau(y_{it}-\widetilde{\bm{\Pi}}(x_{it})^\top\bm{\vartheta}_{0i}^{\mathcal{G}})/(nT)-\sum_{i,t}\rho_\tau(y_{it}-\widetilde{\bm{\Pi}}(x_{it})^\top\bm{\vartheta}_{0i}^{\mathcal{G}_0})/(nT)+O_p(\xi^2)}{\sum_{i,t}\rho_\tau(y_{it}-\widetilde{\bm{\Pi}}(x_{it})^\top\bm{\vartheta}_{0i}^{\mathcal{G}_0})/(nT)+O_p(\xi^2)}\right)+O(\eta)\\
			=&\log\left(1+\frac{\mathbb{E}[\sum_{i,t}\rho_\tau(y_{it}-\widetilde{\bm{\Pi}}(x_{it})^\top\bm{\vartheta}_{0i}^{\mathcal{G}})]-\mathbb{E}[\sum_{i,t}\rho_\tau(y_{it}-\widetilde{\bm{\Pi}}(x_{it})^\top\bm{\vartheta}_{0i}^{\mathcal{G}_0})]+O_p(nT\xi^2+\sqrt{nT})}{\mathbb{E}[\sum_{i,t}\rho_\tau(y_{it}-\widetilde{\bm{\Pi}}(x_{it})^\top\bm{\vartheta}_{0i}^{\mathcal{G}_0})]+O_p(nT\xi^2+\sqrt{nT})}\right)\\
			+&O(\eta)\geq\log(1+C\rho)+O\left(\eta\right)>0.
	\end{split}\end{equation}
	
	\noindent\textbf{Case 2.} ($K_0<K\leq K_{max}$, over-fitted model)
	
	Again, let $\mathcal{G}$ be the partition that minimizes $$\min_{\bm{\theta}_i=\bm{\theta}_j\text{ if }G_i=G_j, |\mathcal{G}|=K}\mathbb{E}\left[\sum_{i=1}^n\sum_{t=1}^T\rho_\tau(y_{it}-\mu_{0i}-\bm{\Pi}(x_{it})^\top\bm{\theta}_i)\right].$$ Obviously, we will have $\bm{\theta}_{0i}^{\mathcal{G}}=\bm{\theta}_{0i}^{\mathcal{G}_0}$. By the same argument in case 1, we have
	\begin{equation}
		\begin{split}
			&\Big|\sum_{i,t}\rho_\tau(y_{it}-\widetilde{\bm{\Pi}}(x_{it})^\top\bm{\widehat{\vartheta}}_{i}^{\mathcal{G}})-\sum_{i,t}\rho_\tau(y_{it}-\widetilde{\bm{\Pi}}(x_{it})^\top\bm{\vartheta}_{0i}^{\mathcal{G}})\Big|=O_p(H+nTH^{-2d}),\\
			\text{and }&\Big|\frac{1}{nT}\sum_{i,t}\rho_\tau(y_{it}-\widetilde{\bm{\Pi}}(x_{it})^\top\bm{\vartheta}_{0i}^{\mathcal{G}_0})-\mathbb{E}[\frac{1}{nT}\sum_{i,t}\rho_\tau(y_{it}-\widetilde{\bm{\Pi}}(x_{it})^\top\bm{\vartheta}_{0i}^{\mathcal{G}_0})]\Big|=O_p((nT)^{-1/2}).
		\end{split}
	\end{equation}
	Thus,
	\begin{equation}\begin{split}
			&\text{SIC}(K)-\text{SIC}(K_0)\\
			=&\log\left(1+\frac{\sum_{i,t}\rho_\tau(y_{it}-\widetilde{\bm{\Pi}}(x_{it})^\top\bm{\widehat{\vartheta}}_{i}^{\mathcal{G}})/(nT)-\sum_{i,t}\rho_\tau(y_{it}-\widetilde{\bm{\Pi}}(x_{it})^\top\bm{\widehat{\vartheta}}_{i}^{\mathcal{G}_0})/(nT)}{\sum_{i,t}\rho_\tau(y_{it}-\widetilde{\bm{\Pi}}(x_{it})^\top\bm{\widehat{\vartheta}}_{i}^{\mathcal{G}_0})/(nT)}\right)\\
			&+\frac{(K-K_0)H\log(nT)}{nT}\\
			=&\log\left(1+\frac{O_p(H+nTH^{-2d})}{\sum_{i,t}\rho_\tau(y_{it}-\widetilde{\bm{\Pi}}(x_{it})^\top\bm{\widehat{\vartheta}}_{i}^{\mathcal{G}_0})/(nT)}\right)+\frac{(K-K_0)H\log(nT)}{nT}\\
			=&O_p(H+nTH^{-2d})+\frac{(K-K_0)H\log(nT)}{nT}>0.
	\end{split}\end{equation}
	
\end{proof}

Finally, we present an auxiliary lemma used in the proof of Theorem \ref{thm:2}.

\begin{lemma}
	\label{lemma:6}For $c>0$ and $d>0$ sufficiently small,
	\begin{equation}
		\begin{split}
			\sup_{\substack{1\leq i\leq n,1\leq h\leq H\\\|\bm{\theta}_i-\bm{\widehat{\theta}}_i^\textup{o}\|\leq c,|\mu_i-\widehat{\mu}_i^\text{o}|\leq d}}&\Bigg|\frac{1}{T}\sum_{t=1}^T\bm{\Pi}_h(x_{it})[I\{e_{it}\leq0\}-I\{e_{it}\leq\bm{\Pi}(x_{it})^\top\bm{\theta}_i-m_{it}+\mu_i-\mu_{0i}\}\\
			&-F(0|x_{it})
			+F(\bm{\Pi}(x_{it})^\top\bm{\theta}_i-m_{it}+\mu_i-\mu_{0i}|x_{it})]\Bigg|\\
			=&O_p(H^{3/2}T^{-1/2}\log(T)\log(nT)).
		\end{split}
	\end{equation}
\end{lemma}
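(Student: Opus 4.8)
The plan is to view the displayed quantity as a maximum, over the $n$ individuals and the $O(H)$ basis coordinates $h$, of a centred empirical process indexed by the local parameter $(\mu_i,\bm{\theta}_i)$, and to control it by a $\delta$-net argument combined with the Bernstein inequality for geometrically $\alpha$-mixing sequences in Lemma~\ref{lemma:1}, in the same spirit as the proof of Lemma~\ref{lemma:T1_concentration}.

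First I would replace the data-dependent centres by the population values. By Theorem~\ref{thm:1}, $\max_{1\le i\le n}|\widehat{\mu}_i^{\text{o}}-\mu_{0i}|=o_p(1)$ and $\max_{1\le i\le n}\|\widehat{\bm{\theta}}_i^{\text{o}}-\bm{\theta}_{0i}\|_2=o_p(1)$, so on an event of probability tending to one the constraint set $\{\|\bm{\theta}_i-\widehat{\bm{\theta}}_i^{\text{o}}\|_2\le c,\ |\mu_i-\widehat{\mu}_i^{\text{o}}|\le d\}$ is contained in the deterministic neighbourhood $\{\|\bm{\theta}_i-\bm{\theta}_{0i}\|_2\le 2c,\ |\mu_i-\mu_{0i}|\le 2d\}$, and it suffices to bound the supremum of the now deterministically indexed process over this neighbourhood. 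Fixing $i$ and $h$, the relevant class of functions is $\{(u,x)\mapsto\bm{\Pi}_h(x)[I\{u\le\bm{\Pi}(x)^\top\bm{\theta}-m_i(x)+\mu-\mu_{0i}\}-I\{u\le 0\}]\}$ evaluated along $(e_{it},x_{it})$; since $\bm{\Pi}_h(\cdot)$ is a fixed function with $\sup_x|\bm{\Pi}_h(x)|\le C\sqrt{H}$ and the bracket is a difference of indicators of two half-spaces parametrised by the $H$-dimensional vector $(\mu,\bm{\theta})$, this is a VC-subgraph class of index $O(H)$ by Lemmas~2.6.15 and 2.6.18 of \citet{vanweak}, exactly as in the proof of Lemma~\ref{lemma:ep_rates}, and admits a uniform $L_2$ covering number bounded by $(A/\epsilon)^{CH}$. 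Moreover, by Assumption~(A3) together with $|\bm{\Pi}(x)^\top\bm{\theta}_{0i}-m_i(x)|=O(H^{-d})$, every function in the class has conditional second moment at most $\mathbb{E}[\bm{\Pi}_h(x_{it})^2|F_i(b_{it}|x_{it})-F_i(0|x_{it})|]\le C(\sqrt{H}c+d+H^{-d})\,\mathbb{E}[\bm{\Pi}_h(x_{it})^2]=O(H^{3/2})$, with envelope of order $\sqrt{H}$, where $b_{it}=\bm{\Pi}(x_{it})^\top\bm{\theta}-m_i(x_{it})+\mu-\mu_{0i}$.

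Third I would take a $\delta$-net $\mathcal{N}_i$ of the neighbourhood with $\delta\asymp T^{-a}$ for $a$ large, so that $\log|\mathcal{N}_i|\le CH\log T$, and bound the oscillation of the process over a single net cell, after conditioning on the covariates and using (A3), by $O(H\delta)$, which is $o(1)$ times the target rate (the matching stochastic fluctuation is controlled by a crude application of Lemma~\ref{lemma:1}). For each fixed net point the summand is a bounded, mean-zero $\alpha$-mixing sequence with envelope $A\asymp\sqrt{H}$ and variance proxy $D^2\asymp H^{3/2}$; applying Lemma~\ref{lemma:1} with threshold $\varepsilon\asymp H^{3/2}T^{-1/2}\log(T)\log(nT)$, and observing that under $H^3\log(T)/T\to 0$ together with $\log(nT)=o(T^{1/3})$ the linear term $\varepsilon A$ is negligible next to $D^2$, the exponent in Bernstein's bound is of order $-(T/\log T)\varepsilon^2/D^2\asymp -H^{3/2}\log(T)\log(nT)^2$. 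A union bound over the $|\mathcal{N}_i|\le T^{CH}$ net points, the $O(H)$ coordinates $h$, and the $n$ individuals contributes a factor $\log(nHT^{CH})\asymp H\log T$, which the exponent dominates, so the exceedance probability tends to zero; combining with the cell-oscillation bound gives the claimed rate $O_p(H^{3/2}T^{-1/2}\log(T)\log(nT))$.

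The main obstacle is the bookkeeping in this last step: one must choose $\varepsilon$ simultaneously small enough that the resulting $O_p$ bound is the one in the statement and large enough that, in Bernstein's exponent $-(T/\log T)\varepsilon^2/(\varepsilon A+D^2)$, whichever of the linear term $\varepsilon A\asymp\sqrt{H}\varepsilon$ or the quadratic term $D^2\asymp H^{3/2}$ governs the denominator, the exponent still beats the $H\log T$ cost of the $T^{CH}$-point net; this is where the rate restrictions $H^3\log(T)/T\to 0$, $H^2\log(n)^2/T\to 0$ and $Hn^2\log(n)^3/T\to 0$ enter. Everything else — the VC/covering estimate, the conditional-variance computation through (A3), and the discretisation of the parameter set — is a routine adaptation of the argument already carried out for Lemma~\ref{lemma:T1_concentration}.
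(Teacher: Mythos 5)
Your argument is correct and delivers the stated bound, but it takes a genuinely different route from the paper's, and the difference is worth recording. The paper never enlarges the index set to a fixed ball around the truth: it first uses the monotonicity of $u\mapsto I\{e_{it}\le u\}$ to sandwich the indicator at an arbitrary $(\mu_i,\bm{\theta}_i)$ in the $c$-ball between indicators evaluated at the oracle point with the threshold shifted by $\pm t_n$, and then discretises only the $O(\xi(n,T))$-neighbourhood $A_i$ of $\bm{\vartheta}_{0i}$ in which the oracle estimator lives. The payoff is a variance proxy $\delta_n\asymp\sqrt{H}\xi(n,T)$ that shrinks with $T$, so Bernstein's bound is driven by its linear term and yields the sharper rate $O_p(H^{3/2}T^{-1}\log(T)\log(nT))$; the $T^{-1/2}$ in the lemma statement is a weakening of this. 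You instead pay for the fixed-radius ball with a non-shrinking variance proxy ($D^2\asymp H^{3/2}$ in your computation, and in fact $O(1)$ once one notes $\mathbb{E}[\bm{\Pi}_h(x_{it})^2]=O(1)$), land in the quadratic Bernstein regime, and obtain roughly $H^{3/4}\sqrt{(H\log T+\log n)\log T/T}$, which is still dominated by the stated $H^{3/2}T^{-1/2}\log(T)\log(nT)$; so the lemma follows, only without the paper's spare factor of $T^{-1/2}$. Two points should be tightened if you write this out. First, your second-moment bound applies the density bound of (A3) over an interval of length of order $\sqrt{H}c$, which need not lie in the neighbourhood of zero that (A3) controls; replace the increment of $F_i$ by $\min(1,C\sqrt{H}c)$, which only improves the bound. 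Second, the oscillation over a net cell cannot be bounded by $O(H\delta)$ directly, because the indicator is not Lipschitz in the index: the compensator part is Lipschitz, but the raw difference of indicators must be bracketed monotonically (shift the threshold by the cell diameter and apply Lemma~\ref{lemma:1} to the bracketing endpoints, which is exactly the paper's $I_{21}$/$I_{22}$ split); your phrase about a crude application of Lemma~\ref{lemma:1} presupposes this step, so it should be made explicit.
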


\begin{proof}[\textbf{Proof of Lemma \ref{lemma:6}}]
	
	We consider the upper bound for $\sum_{t=1}^T\bm{\Pi}_h(x_{it})[I\{e_{it}\leq0\}-I\{e_{it}\leq\bm{\Pi}(x_{it})^\top\bm{\theta}_i-m_{it}+\mu_i-\mu_{0i}\}-F_i(0|x_{it})+F_i(\bm{\Pi}(x_{it})^\top\bm{\theta}_i-m_{it}+\mu_i-\mu_{0i}|x_{it})$ only since the lower bound can be derived similarly. Letting $m_{it}(\bm{\theta}_i)=\bm{\Pi}(x_{it})^\top\bm{\theta}_i$ and $t_n$ satisfy that $|m_{it}(\bm{\theta}_i)-m_{it}(\bm{\widehat{\theta}}_i^\text{o})+\mu_i-\widehat{\mu}_{i}^\text{o}|\leq t_n$, we have
	\begin{equation}
		\begin{split}
			\sup_{\substack{1\leq i\leq n,1\leq h\leq H\\\|\bm{\theta}_i-\bm{\widehat{\theta}}_i^\text{o}\|\leq c,|\mu_i-\widehat{\mu}_i^\text{o}|\leq d}}&\frac{1}{T}\sum_{t=1}^T\bm{\Pi}_h(x_{it})\left[I\{e_{it}\leq m_{it}(\bm{\theta}_i)-m_{it}+\mu_i-\mu_{0i}\}-I\{e_{it}\leq0\}\right.\\
			&\left.+F(0|x_{it})-F(m_{it}(\bm{\theta}_i)-m_{it}+\mu_i-\mu_{0i}|x_{it})\right]\\
			\leq\sup_{\substack{1\leq i\leq n,1\leq h\leq H\\\|\bm{\theta}_i-\bm{\widehat{\theta}}_i^\text{o}\|\leq c,|\mu_i-\widehat{\mu}_i^\text{o}|\leq d}}&\frac{1}{T}\sum_{t=1}^T\bm{\Pi}_h(x_{it})\left[I\{e_{it}\leq m_{it}(\bm{\widehat{\theta}}_i^\text{o})-m_{it}+\widehat{\mu}_{i}^\text{o}-\mu_{0i}+t_n\}-I\{e_{it}\leq0\}\right.\\
			&\left.+F(0|x_{it})-F(m_{it}(\bm{\theta}_i)-m_{it}+\mu_i-\mu_{0i}|x_{it})\right]\\
			\leq\sup_{1\leq i\leq n,1\leq h\leq H}&\frac{1}{T}\sum_{t=1}^T\bm{\Pi}_h(x_{it})\left[I\{e_{it}\leq m_{it}(\bm{\widehat{\theta}}_i^\text{o})-m_{it}+\widehat{\mu}_{i}^\text{o}-\mu_{0i}+t_n\}-I\{e_{it}\leq0\}\right.\\
			&\left.+F(0|x_{it})-F(m_{it}(\bm{\widehat{\theta}}^\text{o}_i)-m_{it}+\widehat{\mu}_{i}^\text{o}-\mu_{0i}+t_n|x_{it})\right]\\
			+\sup_{\substack{1\leq i\leq n,1\leq h\leq H\\\|\bm{\theta}_i-\bm{\widehat{\theta}}_i^\text{o}\|\leq c,|\mu_i-\widehat{\mu}_i^\text{o}|\leq d}}&\frac{1}{T}\sum_{t=1}^T\bm{\Pi}(x_{it})[F(m_{it}(\bm{\widehat{\theta}}^\text{o}_i)-m_{it}+\widehat{\mu}_{i}^\text{o}-\mu_{0i}+t_n|x_{it})\\
			&-F(m_{it}(\bm{\theta}_i)-m_{it}+\mu_{i}-\mu_{0i})],
		\end{split}
	\end{equation}
	where the first inequality stems from the increasing monotonicity of the indicator function. The second term in the last line can be arbitrarily small since $|m_{it}(\bm{\theta}_i)-m_{it}(\bm{\widehat{\theta}}_i^\text{o})+\mu_i-\widehat{\mu}_i^\text{o}|\leq t_n$ while $t_n$ is arbitrarily small when we choose $c$ and $d$ to be sufficiently small.
	
	Note that $\mathbb{E}[|\bm{\Pi}_h(x_{it})(1\{e_{it}\leq a_n+\delta_n\}-1\{e_{it}\leq a_n\})|^q]\leq (C\sqrt{H})^{q-2}\delta_n$, for $q=3,4,\dots$
	By Lemma \ref{lemma:1}, for any non-negative sequences $a_n\to0$, $\delta_n\to0$, we have that for any $u>0$,
	\begin{equation}
		\label{eq:bernstein}
		\begin{split}
			&\mathbb{P}\left[\left|\frac{1}{T}\sum_{t=1}^T\bm{\Pi}_h(x_{it})[1\{e_{it}\leq a_n+\delta_n\}-1\{e_{it}\leq a_n\}-F(a_n+\delta_n|x_{it})+F(a_n|x_{it})]\right|>u\right]\\
			\leq&C\log(T)\exp\left[-C\frac{Tu^2}{\log(T)(u\sqrt{H}+\delta_n)}\right].
		\end{split}
	\end{equation}
	
	Denote $\bm{\vartheta}_i=(\mu_i,\bm{\theta}_i^\top)^\top$, $\bm{\vartheta}_{0i}=(\mu_{0i},\theta_{0i}^\top)^\top$ and $\widetilde{m}(\bm{\vartheta}_i)=m_{it}(\bm{\theta}_i)+\mu_i$. Let $A_i=\{\bm{\vartheta}_i:\|\bm{\vartheta}_i-\bm{\vartheta}_{0i}\|_2\leq C\xi(n,T)\}$. Similarly to Lemma \ref{lemma:T1_concentration}, we construct an $(nT)^{-\delta}$ covering of $A_i$ with size $R=O((nT)^{CH})$, with elements denoted by $\{\bm{\vartheta}_i^{(1)},\dots,\bm{\vartheta}_i^{(R)}\}$. Then, we have
	\begin{equation}
		\begin{split}
			\sup_{\substack{1\leq i\leq n,1\leq h\leq H\\\bm{\vartheta}_i\in A_i}}&\frac{1}{T}\sum_{t=1}^T\bm{\Pi}_h(x_{it})\Big[1\{e_{it}\leq \widetilde{m}_{it}(\bm{\vartheta}_i)-m_{it}-\mu_{0i}+t_n\}-1\{e_{it}\leq0\}\\
			&-F(\widetilde{m}_{it}(\bm{\vartheta}_i)-m_{it}-\mu_{0i}+t_n|x_{it})+F(0|x_{it})\Big]\\
			\leq\max_{\substack{1\leq i\leq n,1\leq h\leq H\\1\leq r\leq R}}&\frac{1}{T}\sum_{t=1}^T\bm{\Pi}_h(x_{it})\Big[1\{e_{it}\leq \widetilde{m}_{it}(\bm{\vartheta}_i^{(r)})-m_{it}-\mu_{0i}+t_n\}-1\{e_{it}\leq0\}\\
			&-F(\widetilde{m}_{it}(\bm{\vartheta}_i^{(r)})-m_{it}-\mu_{0i}+t_n|x_{it})+F(0|x_{it})\Big]\\
			+\sup_{\substack{1\leq i\leq n,1\leq h\leq H\\1\leq r\leq R,\|\bm{\theta}_i-\bm{\theta}_i^{(r)}\|\leq C(nT)^{-\delta}}}&\frac{1}{T}\sum_{t=1}^T\bm{\Pi}_h(x_{it})\Big[1\{e_{it}\leq \widetilde{m}_{it}(\bm{\vartheta}_i)-m_{it}-\mu_{0i}+t_n\}\\
			&-1\{e_{it}\leq \widetilde{m}_{it}(\bm{\vartheta}_i^{(r)})-m_{it}-\mu_{0i}+t_n\}\\
			&-F(\widetilde{m}_{it}(\bm{\vartheta}_i)-m_{it}-\mu_{0i}+t_n|x_{it})+F(\widetilde{m}_{it}(\bm{\vartheta}_i^{(r)})-m_{it}+t_n|x_{it})\Big]\\
			&:=I_1+I_2.
		\end{split}
	\end{equation}
	
	By \eqref{eq:bernstein}, using the union bound and that $|\widetilde{m}_{it}(\bm{\vartheta}_i^{(r)})-m_{it}-\mu_{0i}+t_n|\leq C\sqrt{H}\xi(n,T)$, we have
	\begin{equation}
		I_1=O_p((\sqrt{H}/T)\log(T)(H\log(nT))).
	\end{equation}
	
	For $I_2$, using the monotonicity of the indicator function and define $t_n'$ such that $|\widetilde{m}_{it}(\bm{\vartheta}_i)-\widetilde{m}_{it}(\bm{\vartheta}_i^{(r)})|\leq t_n'$ for all $\|\bm{\vartheta}_i-\bm{\vartheta}_i^{(r)}\|_2\leq C(nT)^{-\delta}$, we have
	\begin{equation}
		\begin{split}
			I_2\leq&\sup_{\substack{1\leq r\leq R,1\leq i\leq n,1\leq h\leq H\\\bm{\theta}_i\in A_i,\|\bm{\vartheta}_i-\bm{\vartheta}_i^{(r)}\|_2\leq C(nT)^{-\delta}}}\frac{1}{T}\sum_{t}\bm{\Pi}_h(x_{it})\Big[ 1\{e_{it}\leq \widetilde{m}_{it}(\bm{\vartheta}_i^{(r)})-m_{it}-\mu_{0i}+t_n'+t_n\}\\
			&-1\{e_{it}\leq \widetilde{m}_{it}(\bm{\vartheta}_i^{(r)})-m_{it}-\mu_{0i}+t_n\}-F(\widetilde{m}_{it}(\bm{\vartheta}_i)-m_{it}-\mu_{0i}+t_n|x_{it})\\
			&+F(\widetilde{m}_{it}(\bm{\vartheta}_i^{(r)})-m_{it}-\mu_{0i}+t_n|x_{it})\Big]\\
			\leq&\sup_{1\leq r\leq R,1\leq i\leq n,1\leq h\leq H}\frac{1}{T}\sum_{t}\bm{\Pi}_h(x_{it})\Big[ 1\{e_{it}\leq \widetilde{m}_{it}(\bm{\vartheta}_i^{(r)})-m_{it}-\mu_{0i}+t_n'+t_n\}\\
			&-1\{e_{it}\leq \widetilde{m}_{it}(\bm{\vartheta}_i^{(r)})-m_{it}-\mu_{0i}+t_n\}-F(\widetilde{m}_{it}(\bm{\vartheta}_i)-m_{it}-\mu_{0i}+t_n'+t_n|x_{it})\\
			&+F(\widetilde{m}_{it}(\bm{\vartheta}_i^{(r)})-m_{it}-\mu_{0i}+t_n|x_{it})\Big]\\
			+&\sup_{\substack{1\leq r\leq R,1\leq i\leq n,1\leq h\leq H\\\bm{\vartheta}_i\in A_i,\|\bm{\vartheta}_i-\bm{\vartheta}_i^{(r)}\|_2\leq C(nT)^{-\delta}}}\frac{1}{T}\sum_{t}\bm{\Pi}_h(x_{it})\Big[F(\widetilde{m}_{it}(\bm{\vartheta}_i)-m_{it}-\mu_{0i}+t_n'+t_n|x_{it})\\
			-&F(\widetilde{m}_{it}(\bm{\vartheta}_i)-m_{it}-\mu_{0i}+t_n|x_{it})\Big]\\
			=:&I_{21}+I_{22}.
		\end{split}
	\end{equation}
	Again by \eqref{eq:bernstein} for $I_{21}$ with union bound, and that $I_{22}$ is arbitrarily small by the smoothness of $F(\cdot)$, we obtain
	\begin{equation}
		I_2=O_p((\sqrt{H}/T)\log(T)(H\log(nT))).
	\end{equation}
	
\end{proof}

\end{document}